\documentclass[11pt]{article}
\overfullrule = 0pt

\usepackage{amssymb,amsmath,amsthm,epsfig}
\usepackage{latexsym, enumerate}
\usepackage{eepic}
\usepackage{epic}
\usepackage{graphicx}
\usepackage{color}
\usepackage{ifpdf}
\usepackage{subfigure}
\usepackage{tikz}
\usepackage{dsfont}
\usepackage{multirow}
\usepackage{makecell}
\usepackage{algorithm}
\usepackage{bm}
\usepackage{multirow}
\usepackage{epstopdf}

\usepackage{cite}
\usepackage{hyperref}
\usepackage{xcolor}
\hypersetup{
  colorlinks,
  linkcolor={blue!80!black},
  urlcolor={blue!80!black},
  citecolor={blue!80!black}
}
\usepackage[capitalize,noabbrev]{cleveref}

\topmargin -0.5in
\textheight 9.0in
\textwidth 6.5in
\oddsidemargin 0.0in
\evensidemargin 0.0in

\theoremstyle{plain}
\newtheorem{lem}{Lemma}[section]
\newtheorem{thm}[lem]{Theorem}
\newtheorem{cor}[lem]{Corollary}

\theoremstyle{definition}
\newtheorem{defn}{Definition}[section]

\newtheorem{con}{Condition}[section]

\theoremstyle{remark}
\newtheorem{rem}{Remark}[section]
\newtheorem{exm}{Example}[section]

\begin{document}
\title{ \large\bf Shape reconstructions by using plasmon resonances with enhanced sensitivity}

\author{
Ming-Hui Ding\thanks{Department of Mathematics, City University of Hong Kong, Kowloon, Hong Kong, China.\ \  Email: minghuiding@hnu.edu.cn}
\and
Hongyu Liu\thanks{Department of Mathematics, City University of Hong Kong, Kowloon, Hong Kong, China.\ \ Email: hongyu.liuip@gmail.com; hongyliu@cityu.edu.hk}
\and
Guang-Hui Zheng\thanks{School of Mathematics, Hunan University, Changsha 410082, China.\ \ Email: zhenggh2012@hnu.edu.cn; zhgh1980@163.com}
}

\date{}
\maketitle

\begin{center}{\bf ABSTRACT}
\end{center}\smallskip

This paper investigates the shape reconstructions of sub-wavelength objects from near-field measurements in transverse electromagnetic scattering. This geometric inverse problem is notoriously ill-posed and challenging. We develop a novel reconstruction scheme using plasmon resonances with significantly enhanced sensitivity and resolution. First, by spectral analysis, we establish a sharp quantitative relationship between the sensitivity of the reconstruction and the plasmon resonance. It shows that the sensitivity functional blows up when plasmon resonance occurs. Hence, the signal-to-noise ratio is significantly improved and the robustness and effectiveness of the reconstruction are ensured. Second, a variational regularization method is proposed to overcome the ill-posedness, and an alternating iteration method is introduced to automatically select the regularization parameters. Third, we use the Laplace approximation method to capture the statistical information of the target scattering object. Both rigorous theoretical analysis and extensive numerical experiments are conducted to validate the promising features of our method.

\smallskip
{\bf keywords:} shape reconstruction; plasmon resonance; sensitivity analysis; alternating iteration; Laplace approximation

\section{Introduction}

Surface plasmon resonance is the resonant oscillation of conducting electrons at the interface between negative and positive permittivity materials excited by proper incident waves. Plasmon technology is revolutionizing many industrial applications including biosensing, disease diagnosis, catalysis and  molecular dynamics research \cite{AHLSZV2010,BGGC,R2003,SSMS2000}; super-focusing and high-resolution imaging \cite{DLL2020,DLZ2022,PJKL}; and invisibility cloaking \cite{BB2010,ACKLM2013,CKKL2014,GWM3,LL2018,LLL2015}. In recent years, the mathematical understanding of plasmon resonances has also attracted considerable attention with much progress, in particular, their intriguing connection to the spectral theory of Neumann-Poincar\'e-type operators; see e.g. \cite{ACL2003,ACKLM2013,AKL,BLL,DLZ2,G2014,LiLiu2d} and the references cited therein.

In this paper, we study the utilization of plasmon resonances to the reconstruction of the shape of an anomalous nano-size inclusion by the associated near-field electromagnetic measurement. This geometric inverse shape problem arises in a variety of applications, especially bio-medical imaging. We shall be mainly concerned with the time-harmonic transverse magnetic (TM) scattering, though the study can be readily extended to the transverse electric (TE) case. Let $D\subset\mathbb{R}^2$ be a bounded and simply connected domain of class $C^{1,\alpha}$ for some $0<\alpha<1$, which signifies the support of an anomalous inclusion. The medium property of the inclusion $D$ is characterised by the electric permittivity {$\varepsilon_c\in\mathbb{R}_+$} and the magnetic permeability $\mu_c$, while the homogeneous matrix medium in $\mathbb{R}^2\backslash\overline{D}$ is characterized by $\varepsilon_m\in\mathbb{R}_+$ and $\mu_m\in\mathbb{R}_+$, respectively. In what follows, we set
\begin{equation}\label{eq:medn1}
\varepsilon_D=\varepsilon_m\chi(\mathbb{R}^2\backslash\overline{D})+\varepsilon_c\chi({D}), \ \ \mu_D=\mu_m\chi(\mathbb{R}^2\backslash\overline{D})+\mu_c\chi(D),
\end{equation}
where $\chi$  signifies the characteristic function. {The medium parameter $\mu_c$ varies according to the operating frequency $\omega\in\mathbb{R}_+$ of the incident wave}, which is described by the Drude model. {We shall write $\mu_c(\omega)$} to indicate such dependence and the details shall be supplemented in what follows. In principle, we shall make critical use of the result that for proper selection of the operating frequencies, {one has $\Re\mu_c<0,\Im\mu_c>0$, which is known as the negative material \cite{SPW2004,V1968,SK2000}.} Define
\begin{align*}
k_m:=\omega\sqrt{\varepsilon_m\mu_m}, \ \ k_c:=\omega\sqrt{\varepsilon_c\mu_c}.
\end{align*}
Now, we let $u^i(x) = e^{\mathrm{i}k_m d\cdot x}$, $\mathrm{i}:=\sqrt{-1}$, be a time-harmonic incident plane wave with $d\in\mathbb{S}^1$ signifying the impinging direction. The TM scattering induced by the interaction between the incident wave $u^i$ and the medium inclusion $(D; \varepsilon_c, \mu_c)$ is governed by the following PDE system:
\begin{align}
\label{sca equ}
\begin{cases}
&\displaystyle\nabla\cdot\frac{1}{\mu_D}\nabla u+\omega^2\varepsilon_D u=0\ \ \ \mathrm{in}\ \ \mathbb{R}^2\backslash\partial D,\medskip\\
&\displaystyle u_{+}=u_{-} \hspace*{2.9cm} \mathrm{on}\ \ \partial D,\medskip\\
&\displaystyle \frac{1}{\mu_m}\frac{\partial u}{\partial\nu}\bigg{|}_{+}=\frac{1}{\mu_c}\frac{\partial u}{\partial\nu}\bigg{|}_{-}\hspace*{.95cm} \mathrm{on}\  \partial D,\medskip\\
&\displaystyle u^s:=u-u^i\ \ \  \mathrm{satisfies\  the\  Sommerfeld\  radiation\  condition,}
\end{cases}
\end{align}
where the last condition signifies that
\begin{align*}
\frac{\partial u^s}{\partial |x|}-\mathrm{i}k_m u^s=O(|x|^{-\frac{3}{2}}) \ \ \mathrm{as}\ \  |x|\rightarrow +\infty,
\end{align*}
which holds uniformly in the angular variable $\hat{x}=x/|x|\in\mathbb{S}^1$. The well-posedness of the forward scattering problem \eqref{sca equ} shall be implied in our subsequent study of the associated inverse problem and there exists a unique solution $u\in H_{loc}^1(\mathbb{R}^2)$. Associated with \eqref{sca equ}, we introduce the following measurement operator $\Lambda_D$:{
 \begin{equation*}\label{eq:md1}
 \Lambda_{D}(u^i)=u^s|_{\partial\Omega}\in L^2(\partial\Omega),
 \end{equation*}
where $u^s$ is the scattering field to \eqref{sca equ} and $\Omega$ is a smooth domain containing $D$}. Without loss of generality, we assume that $\Omega$ is a central ball of radius $R_0\in\mathbb{R}_+$ throughout the rest of the paper. In this paper, we are mainly concerned with the geometric inverse problem of recovering $D$ by knowledge of $\Lambda_D(u^i)$ associated with a single incident wave $u^i$. By introducing an abstract operator $\mathcal{F}$ defined by \eqref{sca equ} which sends the inclusion $D$ to the measurement data, the inverse problem can be recast as the following operator equation:
 \begin{equation}\label{eq:ip1}
 \mathcal{F}(D)=\Lambda_D(u^i)\quad \mbox{with a fixed $u^i$}.
 \end{equation}
Two remarks are in order regarding the inverse problem \eqref{eq:ip1}. First, it can be readily verified that the inverse problem \eqref{eq:ip1} is nonlinear. Moreover, we are mainly interested in the case that $\omega\cdot\mathrm{diam}(D)\ll 1$ (it is actually $k_m\cdot\mathrm{diam}(D)\ll 1$ since we shall normalize $\varepsilon_m$ and $\mu_m$ in what follows), i.e. the size of the anomalous inclusion $D$ is much smaller compared to the operating wavelength $2\pi/\omega$. This is known as the sub-wavelength scale or the quasi-static regime. The scattering information in the quasi-static regime is very weak, and in the presence of measurement noise, the signal-to-noise ratio is low \cite{Lim2012,ACL2022}. This makes the inverse problem \eqref{eq:ip1} notoriously ill-posed. Moreover, due to the diffraction limit, the fine detail information of $D$ carried in the scattered field is restricted to the vicinity of the inclusion, resulting in usually low-resolution reconstructions of \eqref{eq:ip1}. Our current study is motivated by addressing those challenges by using plasmon resonances. Second, in \eqref{eq:ip1} {we assume that $\varepsilon_c$ and $\mu_c(\omega)$} are a-priori known. In the practical setup, this corresponds to that we know the material property of the medium inclusion $D$, but we do not know its location and shape and intend to reconstruct them by the scattering method. In principle, our method can be extended to dealing with the case of simultaneously recovering $D$ and its medium content by properly augmenting the optimization functional in Section~\ref{sect:4}. However, the focus of the current study is to theoretically and numerically verify that by using plasmon resonances, one can have a much more stable and high-resolution scheme of reconstructing $D$. In order to have a focusing theme, we choose to study the simultaneous recovery problem in a forthcoming paper.

As mentioned earlier, we shall make use of plasmon resonances for tackling the inverse problem \eqref{eq:ip1}. In the practical setup, one observes a significant enhancement of the scattering field (say e.g. in terms of the scattering amplitude) for certain specific operating frequencies, which corresponds to the occurrence of plasmon resonances. In fact, a mathematical framework of applying plasmon resonances for shape reconstructions was initiated in a recent article \cite{DLZ2022}.  Next, we highlight several significant novelties and technical developments of the current study compared to that in the aforementioned article. First, the article \cite{DLZ2022} addresses the shape reconstructions in electrostatics, i.e. the limiting case with zero frequency, whereas we consider the wave scattering in the quasi-static regime. It turns out that the presence of the frequency leads to tremendous technical difficulties in the relevant asymptotic analysis of sharply quantifying the relationship between the shape sensitivity and the plasmon resonance; see Section~\ref{sect:3.2}. Moreover, we derive a much more thorough understanding of the shape sensitivity, especially the connections between the size and curvature of the anomalous inclusion, which were not considered in \cite{DLZ2022}. Second, there is a particular point worth special emphasis. It is known that the plasmon resonance is connected to the spectrum of the Neumann-Poincar\'e (NP) operator; see also Section~\ref{sect:pr1} in what follows. For a generic non-radial domain, $0$ belongs to the essential spectrum of the underlying NP operator. The analysis in \cite{DLZ2022} was mainly conducted around an NP eigenvalue, i.e. the discrete spectrum of the NP operator, and the case around $0$ was not considered due to significant technical difficulties. In this paper, we present a subtle analysis of the shape sensitivity around $0$ in the radial case. Our result shows that in such a case one can have much enhanced sensitivity of the reconstruction compared to the non-zero case. Though $0$ is an NP eigenvalue for a radial domain, our result in this aspect provides unobjectionable implications that using the plasmon resonance associated with the essential spectrum of the underlying NP operator may yield enhanced reconstructions. Third, in the numerical implementation, we combine the variational regularization method with the Laplace approximation (LA) method to solve the inverse problem. This hybrid method can efficiently calculate the minimization point (MAP point) as well as effectively capture the uncertainty information of the solution. Compared with the reconstruction scheme in \cite{DLZ2022}, an alternating iterative approach is developed based on a hierarchical Bayesian framework. This leads to an automatic selection of the regularization parameters, and thus makes the reconstruction scheme practically more appealing.

The rest of the paper is organized as follows. In Section 2, we introduce some preliminary and auxiliary results on layer potential theory and plasmon resonances which shall be needed for the subsequent analysis. In Section 3, we derive spectral expansions of the shape sensitivity functionals and perform sensitivity analysis for perturbed domains. Section 4 is devoted to the hierarchical Bayesian model, alternating iteration method as well as the LA method. In Section 5, we conduct extensive numerical experiments to validate the theoretical findings. The paper is concluded in Section 6 with some relevant discussions.

\section{Preliminary and auxiliary results}

\subsection{Layer potential theory}

We introduce some preliminary knowledge on the layer potential operators by following the treatment in \cite{Millien2017,DLZ2021} which shall be needed in our subsequent analysis. The fundamental solution $G$ to the Helmholtz operator $\Delta+k^2$ in $\mathbb{R}^2$ is given by
\begin{align*}
G(x,y,k)=-\frac{\mathrm{i}}{4}H^{(1)}_0(k|x-y|),\  x\neq y,
\end{align*}
where $H^{(1)}_0$ is the Hankel function of the first kind and order 0. Let $\mathcal{S}_D^k$ and $\mathcal{D}_D^k$ be the single- and double-layer potentials defined by
\begin{align*}
&\mathcal{S}_D^k[\psi](x)=\int_{\partial D}G(x,y,k)\psi(y)d\sigma(y),\ \ \ \ \ x\in \mathbb{R}^2,\\
&\mathcal{D}_D^k[\psi](x)=\int_{\partial D}\frac{\partial G(x,y,k)}{\partial \nu(y)}\psi(y)d\sigma(y),\ \ x\in \mathbb{R}^2\backslash\partial D\nonumber
\end{align*}
for some surface density $\psi\in L^2(\partial D)$. The following jump relations hold for these operators across the boundary $\partial D$:
\begin{align*}
\frac{\partial \mathcal{S}_D^k[\psi]}{\partial \nu}\bigg{|}_{\pm}(x)=\bigg{(}\pm\frac{1}{2}Id+(\mathcal{K}_D^k)^*\bigg{)}[\psi](x), \ \ \
\mathcal{D}_D^k[\psi]\bigg{|}_{\pm}(x)=\bigg{(}\mp\frac{1}{2}Id+\mathcal{K}_D^k\bigg{)}[\psi](x),
\end{align*}
where $Id$ indicates the identity operator, $\pm$ signifies the limits taken from the inside and outside of $D$, respectively, and $\mathcal{K}_D^k$ is the adjoint operator of the Neumann-Poincar\'e operator $(\mathcal{K}_D^k)^*$:
\[
(\mathcal{K}_D^k)^*[\psi](x)=\int_{\partial D}\frac{\partial G(x,y,k)}{\partial \nu(x)}\psi(y)d\sigma(y),\ \ x\in\partial D.
\]
In the sequel, we set $\mathcal{S}_D^0=\mathcal{S}_D$, $\mathcal{D}_D^0=\mathcal{D}_D$ and $(\mathcal{K}_D^0)^*=\mathcal{K}_D^*$.

In $\mathbb{R}^2$, the operator $\mathcal{S}_D: {H}^{-1/2}(\partial D)\rightarrow {H}^{1/2}(\partial D)$ is not invertible. We define the following substitute for $\mathcal{S}_D$ to avoid the non-invertibility:
\begin{align*}
\widetilde{\mathcal{S}}_D[\psi]=
\begin{cases}
 {\mathcal{S}}_D[\psi],\ \ \ \ \ \mathrm{if}\ \langle\psi,\chi(\partial D)\rangle_{-\frac{1}{2},\frac{1}{2}}=0,\\
-\chi(\partial D),\ \ \mathrm{if}\  \psi=\varphi_0,
\end{cases}
\end{align*}
where $\varphi_0$ is the unique eigenfunction of $\mathcal{K}_D^*$ of eigenvalue 1/2 such that $\langle\varphi_0,\chi(\partial D)\rangle_{-\frac{1}{2},\frac{1}{2}}=1$. Furthermore, by using the Calder\'on identity: $\mathcal{K}_D\widetilde{\mathcal{S}}_D=\widetilde{\mathcal{S}}_D \mathcal{K}^*_D$, we can define a new inner product
\begin{align}\label{inner}
\langle u,v\rangle_{\mathcal{H}^*}=-\langle u,\widetilde{\mathcal{S}}_D[v]\rangle_{-\frac{1}{2},\frac{1}{2}},
\end{align}
where $\langle\cdot,\cdot\rangle_{-\frac{1}{2},\frac{1}{2}}$ is the duality pairing between ${H}^{-1/2}(\partial D)$ and ${H}^{1/2}(\partial D)$, and $\mathcal{H}^*$ is equivalent to ${H}^{-1/2}(\partial D)$.

For the fundamental solution $G$, we shall need the following behavior of the Hankel function near 0:
\begin{align}\label{eq:nn1}
-\frac{\mathrm{i}}{4}H^{(1)}_0(k|x-y|)=\frac{1}{2\pi}\log(|x-y|)+\tau_k+\sum_{j=1}^{\infty}(b_j\log(k|x-y|)+c_j)(k|x-y|)^{2j},
\end{align}
where
\begin{align}\label{eq:nn2}
&\tau_k=\frac{1}{2\pi}(\log k+\gamma-\log 2)-\frac{\mathrm{i}}{4},\  b_j=\frac{(-1)^j}{2\pi}\frac{1}{2^{2j}(j!)^2},\ c_j=-b_j\bigg{(}\gamma-\log 2-\frac{\mathrm{i}\pi}{2}-\sum_{n=1}^{j}\frac{1}{n}\bigg{)},
\end{align}
with $\gamma$ being the Euler constant. Using \eqref{eq:nn1} and \eqref{eq:nn2}, one can readily derive that the single-layer potential operator has the following asymptotic expansion which converges in  $\mathcal{L}(\mathcal{H}^*(\partial D),\mathcal{H}(\partial D))$:
\begin{align}\label{series}
\mathcal{S}_D^k=\widehat{\mathbb{S}}_D^k+\sum_{j=1}^{\infty}(k^{2j}\log k)\mathbb{S}_{D,j}^{(1)}
+\sum_{j=1}^{\infty}k^{2j}\mathbb{S}_{D,j}^{(2)},\ \ x\in \partial D,
\end{align}
where
\begin{align}
\label{Sk}
&\widehat{\mathbb{S}}_D^k[\psi](x)=\widetilde{\mathcal{S}}_D[\psi](x)+\Upsilon_k[\psi](x); \  \Upsilon_k[\psi]=(\psi,\varphi_0)_{\mathcal{H}^*}(\mathcal{S}_D[\varphi_0]+\chi(\partial D)+\tau_k),\\
\label{SDj}
&\mathbb{S}_{D,j}^{(1)}[\psi](x)=\int_{\partial D}b_j|x-y|^{2j}\psi(y)d\sigma(y),\\
\label{SD2j}
&\mathbb{S}_{D,j}^{(2)}[\psi](x)=\int_{\partial D}|x-y|^{2j}(b_j \log| x-y|+c_j)\psi(y)d\sigma(y).
\end{align}
Similarly, we have the following asymptotic expansion for the boundary integral integral operator $(\mathcal{K}^k_D)^*$:
\begin{align}\label{K^*}
(\mathcal{K}^k_D)^*=\mathcal{K}_D^*+\sum_{j=1}^{\infty}(k^{2j}\log k)\mathbb{K}^{(1)}_{D,j}+\sum_{j=1}^{\infty}k^{2j}\mathbb{K}^{(2)}_{D,j},
\end{align}
where
\begin{align*}
\mathbb{K}^{(1)}_{D,j}=\int_{\partial D}b_j\frac{\partial|x-y|^{2j}}{\partial \nu(x)}\psi(y) d\sigma(y),\
\mathbb{K}^{(2)}_{D,j}=\int_{\partial D}\frac{\partial(|x-y|^{2j}(b_j\log|x-y|+c_j))}{\partial \nu(x)}\psi(y) d\sigma(y).
\end{align*}
The series \eqref{K^*} is convergent in $\mathcal{L}(\mathcal{H}^*(\partial D),\mathcal{H}^*(\partial D))$.

The following lemma summarizes some results about the Neumann-Poincar\'e operator $\mathcal{K}^*_D$ which can be conveniently found in \cite{Millien2017,DLZ2021}.
\begin{lem}\label{K basic}
(i) The operator $\mathcal{K}_{ D}^*$ is compact and self-adjoint in the Hilbert space $\mathcal{H}^*(\partial D)$ with the inner product \eqref{inner}.\\
(ii) Let $(\lambda_j,\varphi_j), j=0,1,2,...$ be the eigenvalue and normalized eigenfunction pair of $\mathcal{K}_D^*$ in $\mathcal{H}^*(\partial D)$, then $\lambda_j\in(-\frac{1}{2},\frac{1}{2}]$ and $\lambda_j\rightarrow 0$ as $j\rightarrow\infty$;\\
(iii) For any $\psi\in H^{-1/2}(\partial D)$, we have $\mathcal{K}^*_{ D}[\psi]=\sum_{j=0}^{\infty}\lambda_j\langle\psi,\varphi_j\rangle_{\mathcal{H}^*(\partial D)}\varphi_j$.
\end{lem}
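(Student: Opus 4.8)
The plan is to reduce all three assertions to the classical spectral theorem for compact self-adjoint operators on a Hilbert space, once the two structural facts — self-adjointness and compactness of $\mathcal{K}^*_D$ on $\mathcal{H}^*(\partial D)$ — have been secured. For self-adjointness I would start from the inner product \eqref{inner} and the Calder\'on identity $\mathcal{K}_D\widetilde{\mathcal{S}}_D=\widetilde{\mathcal{S}}_D\mathcal{K}^*_D$. For $u,v\in\mathcal{H}^*(\partial D)$,
\[
\langle \mathcal{K}^*_D u,v\rangle_{\mathcal{H}^*}=-\langle \mathcal{K}^*_D u,\widetilde{\mathcal{S}}_D[v]\rangle_{-\frac12,\frac12}=-\langle u,\mathcal{K}_D\widetilde{\mathcal{S}}_D[v]\rangle_{-\frac12,\frac12},
\]
where the second equality uses that $\mathcal{K}^*_D$ is the $L^2$-adjoint of $\mathcal{K}_D$ in the duality pairing $\langle\cdot,\cdot\rangle_{-\frac12,\frac12}$; substituting the Calder\'on identity and unwinding \eqref{inner} then returns $\langle u,\mathcal{K}^*_D v\rangle_{\mathcal{H}^*}$. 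One must also verify that $-\langle\cdot,\widetilde{\mathcal{S}}_D\,\cdot\rangle_{-\frac12,\frac12}$ is symmetric and positive definite — this is exactly where the modification $\widetilde{\mathcal{S}}_D$ of the single-layer operator (which is not invertible in $\mathbb{R}^2$) is needed, and it guarantees that $\mathcal{H}^*$ is well defined and norm-equivalent to $H^{-1/2}(\partial D)$.

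For compactness I would exploit the regularity hypothesis $D\in C^{1,\alpha}$. The Schwartz kernel of the Laplace ($k=0$) part of $\mathcal{K}^*_D$ is $\tfrac{1}{2\pi}\langle x-y,\nu(x)\rangle/|x-y|^{2}$, and the $C^{1,\alpha}$ smoothness of $\partial D$ forces $\langle x-y,\nu(x)\rangle=O(|x-y|^{1+\alpha})$, so the kernel is only weakly singular, of order $|x-y|^{-1+\alpha}$. A weakly singular kernel defines a compact operator on $L^2(\partial D)$, and by the norm equivalence established in (i) the operator is likewise compact on $\mathcal{H}^*(\partial D)$.

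Granting (i), the spectral theorem for compact self-adjoint operators immediately yields a real spectrum consisting of eigenvalues that can accumulate only at $0$, whence $\lambda_j\to 0$; it also furnishes an orthonormal eigenbasis $\{\varphi_j\}$ of $\mathcal{H}^*(\partial D)$ whose expansion gives precisely the diagonalization in (iii). The one genuinely operator-specific ingredient — and the part I expect to be the main obstacle — is the sharp localization $\lambda_j\in(-\tfrac12,\tfrac12]$. For this I would show that $\tfrac12$ is an eigenvalue, with eigenfunction $\varphi_0$ (the density whose single-layer potential is the constant solution), by invoking the jump relations $\partial\mathcal{S}_D^0[\psi]/\partial\nu|_{\pm}=(\pm\tfrac12 Id+\mathcal{K}^*_D)[\psi]$; and that no eigenvalue can reach $\tfrac12$ in magnitude on the negative side, by relating $(\pm\tfrac12 Id-\mathcal{K}^*_D)$ to the solvability and uniqueness of the interior and exterior Neumann problems and testing against the associated Dirichlet energy. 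The strictness of the lower endpoint $-\tfrac12$ then follows from non-attainment of equality in that energy estimate, while $\tfrac12$ is attained exactly by the constant potential, yielding the half-open interval asserted.
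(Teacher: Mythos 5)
The paper offers no proof of this lemma: it is quoted directly from \cite{Millien2017,DLZ2021}, so your proposal has to be measured against the standard argument in those references, which it reproduces in outline. The self-adjointness computation via the Calder\'on identity $\mathcal{K}_D\widetilde{\mathcal{S}}_D=\widetilde{\mathcal{S}}_D\mathcal{K}^*_D$ and the duality between $\mathcal{K}_D$ and $\mathcal{K}^*_D$ is exactly right, as is the observation that the modification $\widetilde{\mathcal{S}}_D$ is precisely what makes $-\langle\cdot,\widetilde{\mathcal{S}}_D[\cdot]\rangle_{-\frac12,\frac12}$ positive definite in two dimensions. Granting compactness, the spectral theorem delivers (iii) and the accumulation of the eigenvalues at $0$, and your plan for the localization $\lambda_j\in(-\frac12,\frac12]$ --- the energy identity expressing $\lambda$ through the interior and exterior Dirichlet energies of $\mathcal{S}_D[\varphi]$, with equality at $\frac12$ attained exactly by the equilibrium density $\varphi_0$ and the value $-\frac12$ excluded --- is the standard and correct route.

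The one step that does not follow as written is the transfer of compactness. You establish compactness on $L^2(\partial D)$ from the weak singularity $O(|x-y|^{-1+\alpha})$ of the kernel and then assert that it passes to $\mathcal{H}^*(\partial D)$ ``by the norm equivalence established in (i)''. But that equivalence identifies $\mathcal{H}^*(\partial D)$ with $H^{-1/2}(\partial D)$, not with $L^2(\partial D)$, and compactness of an operator on $L^2$ says nothing a priori about its behaviour on the strictly larger space $H^{-1/2}$. The standard repair is as follows: the $C^{1,\alpha}$ regularity of $\partial D$ gives a smoothing estimate for the double-layer operator, $\mathcal{K}_D: H^{1/2}(\partial D)\to H^{1/2+\alpha'}(\partial D)$ for some $\alpha'>0$, so $\mathcal{K}_D$ is compact on $H^{1/2}(\partial D)$ by the compact Sobolev embedding; its adjoint $\mathcal{K}^*_D$ under the duality pairing is then compact on $H^{-1/2}(\partial D)$ by Schauder's theorem, and hence on $\mathcal{H}^*(\partial D)$ by the norm equivalence. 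With that substitution your argument is complete and coincides with the proofs in the cited references.
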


\subsection{Plasmon resonances}\label{sect:pr1}

We introduce the mathematical framework of plasmon resonances. First, via the layer potential theory, the solution of (\ref{sca equ}) can be represented as
\begin{align*}
u(x)=
\begin{cases}
\displaystyle u^i(x)+\mathcal{S}^{k_m}_D[\psi](x),\hspace*{1.45cm} x\in \mathbb{R}^2\backslash{\overline{D}},\medskip\\
\displaystyle \mathcal{S}^{k_c}_D[\phi](x),\hspace*{2.9cm} x\in D,\\
\end{cases}
\end{align*}
where $(\psi,\phi)\in {H}^{-\frac{1}{2}}(\partial D)\times{H}^{-\frac{1}{2}}(\partial D)$ satisfy the following integral system
\begin{align}
\label{integ_sys}
\begin{cases}
\displaystyle\mathcal{S}_D^{k_m}[\psi]-\mathcal{S}_D^{k_c}[\phi]=-u^i\hspace*{7.1cm} \mathrm{on}\ \partial D,\medskip\\
\displaystyle\frac{1}{\mu_m}\bigg{(}\frac{1}{2}Id+(\mathcal{K}_D^{k_m})^*\bigg{)}[\psi]+\frac{1}{\mu_c}\bigg{(}\frac{1}{2}Id-(\mathcal{K}_D^{k_c})^*\bigg{)}[\phi]=-\frac{1}{\mu_m}\frac{\partial u^i}{\partial \nu}\hspace*{.56cm} \ \mathrm{on}\ \partial D.
\end{cases}
\end{align}
When $w$ is small enough, $\mathcal{S}_D^{k_c}$ is invertible \cite{DLU2017}. Therefore, from the first equation of \eqref{integ_sys}, we have
$\phi=(\mathcal{S}_D^{k_c})^{-1}(\mathcal{S}_D^{k_m}[\psi]+u^i)$. Then, by using the second equation of \eqref{integ_sys}, we can obtain that $\psi$ satisfies the following equation
\begin{align}\label{psi}
\mathcal{A}_D(\omega)[\psi]=f,
\end{align}
where
\begin{align*}
\mathcal{A}_D(\omega)&=\frac{1}{\mu_m}\bigg{(}\frac{1}{2}Id+\big{(}\mathcal{K}_D^{k_m}\big{)}^*\bigg{)}+\frac{1}{\mu_c}\bigg{(}
\frac{1}{2}Id-\big{(}\mathcal{K}_D^{k_c}\big{)}^*\bigg{)}\big{(}\mathcal{S}_D^{k_c}\big{)}^{-1}\mathcal{S}_D^{k_m},\\
f&=-\frac{1}{\mu_m}\frac{\partial u^i}{\partial \nu}+\frac{1}{\mu_c}\bigg{(}\frac{1}{2}Id-\big{(}\mathcal{K}_D^{k_c}\big{)}^*\bigg{)}\big{(}\mathcal{S}_D^{k_c}\big{)}^{-1}[-u^i].
\end{align*}
Clearly,
\begin{align}\label{AD0}
\mathcal{A}_D(0)=\mathcal{A}_{D,0}=\frac{1}{\mu_m}\bigg{(}\frac{1}{2}Id+\mathcal{K}_D^*\bigg{)}+\frac{1}{\mu_c}\bigg{(}\frac{1}{2}Id-\mathcal{K}_D^*\bigg{)}
=\bigg{(}\frac{1}{2\mu_m}+\frac{1}{2\mu_c}\bigg{)}Id-\bigg{(}\frac{1}{\mu_c}-\frac{1}{\mu_m}\bigg{)}\mathcal{K}_D^*.
\end{align}
From the spectral expansion of $\mathcal{K}_D^*$ in Lemma \ref{K basic}, it can be seen that
\begin{align*}
\mathcal{A}_{D,0}[\psi]=\sum_{j=0}^{\infty}\tau_j(\psi, \varphi_j)_{\mathcal{H}^*}\varphi_j,\ \ \ \tau_j=\frac{1}{2\mu_m}+\frac{1}{2\mu_c}-(\frac{1}{\mu_c}-\frac{1}{\mu_m})\lambda_j.
\end{align*}

In order to solve the equation \eqref{psi}, we first introduce the following definition of the index set of resonance.
\begin{defn}
We call $J\in \mathbb{N}$ the index set of resonance if $\tau_j$ is close to zero when $j\in J$ and is bounded from below when $j\in J^c$. More precisely, we choose a threshold number $\eta_0>0$ independent of $\omega$ such that $|\tau_j|\geq \eta_0>0, \ \ \mathrm{for}\ j\in J^c$.
\end{defn}
It is worth noting that for $j=0$, i.e., $\lambda_0=1/2$, we can obtain $\tau_0=1/\mu_m=O(1)$. Thus, the index set $J$ excludes 0. Moreover, we impose the following three mild conditions throughout the paper.
\begin{con}\label{1}
(i) We assume that $\omega$ and $\Im\mu_c$ are of order $o(1)$.\\
(ii) For $j\in J$, the eigenvalue $\lambda_j$ is a simple eigenvalue of the operator $\mathcal{K}_D^*$.\\
(iii) Suppose that $\mu_c\neq-\mu_m$.
\end{con}

Next, we collect the following  useful asymptotic expansion results with respect to small $\omega$, the details of which can be found in \cite{Millien2017}.
\begin{lem} For $\omega\ll 1$, the following asymptotic expansion results hold.\\
(i) The operator ${\mathcal{S}}_D^k: \mathcal{H}^*(\partial D)\rightarrow\mathcal{H}(\partial D)$  is invertible,
and
\begin{align}\label{invS}
({\mathcal{S}}_D^k)^{-1}&=\mathcal{L}_D+\mathcal{U}_k-k^2\log k \mathcal{L}_D \mathbb{{S}}_{D,1}^{(1)}\mathcal{L}_D+O(k^2)
\end{align}
with $\mathcal{L}_D=\mathcal{P}_{\mathcal{H}^*_0}\widetilde{\mathcal{S}}_D^{-1}$,
 $\mathcal{U}_k=-\frac{\langle\widetilde{\mathcal{S}}_D^{-1}[\cdot], \varphi_0\rangle
_{\mathcal{H}^*}}{\mathcal{S}_D[\varphi_0]+\tau_k}\varphi_0$, and $\mathbb{S}_{D,1}^{(1)}$ given by (\ref{SDj}). Here $\mathcal{P}_{\mathcal{H}^*_0}$ is the orthogonal projection onto $\mathcal{H}^*_0(\partial D)$, and $\mathcal{H}^*_0(\partial D)$ is the zero mean subspace of $\mathcal{H}^*(\partial D)$.\\
(ii) The operator $\mathcal{A}_D{(\omega)}$ can be expanded as follows:
\begin{align*}
\mathcal{A}_D{(\omega)}=\mathcal{A}_{D,0}+\omega^2(\log\omega)\mathcal{A}_{D,1}+O(\omega^2),
\end{align*}
 where $\mathcal{A}_{D,0}$ is given by (\ref{AD0}), and
\begin{align}\label{AD1}
\mathcal{A}_{D,1}=\mathbb{K}_{D,1}^{(1)}(\varepsilon_mId-\varepsilon_c\mathcal{P}_{\mathcal{H}^*_0})
+\frac{1}{\mu_c}\bigg{(}\frac{1}{2}Id-\mathcal{K}_D^*\bigg{)}\widetilde{\mathcal{S}}_D^{-1}\mathbb{S}_{D,1}^{(1)}
(\mu_m\varepsilon_mId-\mu_c\varepsilon_c\mathcal{P}_{\mathcal{H}^*_0}).
\end{align}
(iii) The asymptotic formulas for the eigenvalues $\tau_{j}(\omega)$ and eigenfunctions $\varphi_{j}(\omega)$ of the $\mathcal{A}_D{(\omega)}$ operator with respect to the $\omega$ are given by
\begin{align}\label{per lam}
\tau_{j}(\omega)=\tau_j+(\omega^2\log\omega)\tau_{j,1}+O(\omega^2),\ \
\varphi_{j}(\omega)=\varphi_j+(\omega^2\log\omega)\varphi_{j,1}+O(\omega^2),
\end{align}
where
\begin{align*}
\tau_{j,1}=\langle \mathcal{A}_{D,1}\varphi_j,\varphi_l  \rangle_{\mathcal{H}^*(\partial D)},\ \
\varphi_{j,1}=\sum_{j\neq l}\frac{\langle \mathcal{A}_{D,1}\varphi_j,\varphi_l\rangle_{\mathcal{H}^*(\partial D)}}{(\frac{1}{\mu_m}-\frac{1}{\mu_c})(\lambda_j-\lambda_l)}\varphi_l
\end{align*}
with $\mathcal{A}_{D,1}$ defined in \eqref{AD1}.\\
(iv)
If Condition \ref{1} is satisfied, then the scattering field $u^s=u-u^i$ admits the following representation:
\[
u^s=\mathcal{S}^{k_m}_D[\psi],
\]
where
\begin{align}\label{den}
\psi=\sum_{j\in J}\frac{\mathrm{i}\omega\sqrt{\varepsilon_m\mu_m}\langle d\cdot\nu,\varphi_j\rangle_{\mathcal{H}^*}\varphi_j+O(\omega^3\log \omega)}{\lambda-\lambda_j+O(\omega^2\log \omega)}+O(\omega)
\end{align}
with $\lambda=\frac{\mu_m+\mu_c}{2(\mu_m-\mu_c)}$.
\end{lem}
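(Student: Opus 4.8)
The plan is to treat (i)–(iii) as consequences of the convergent series \eqref{series} and \eqref{K^*} together with standard operator-perturbation arguments, and to build (iv) on top of them. For (i), I would invert the leading operator $\widehat{\mathbb{S}}_D^k=\widetilde{\mathcal{S}}_D+\Upsilon_k$; since $\Upsilon_k$ is rank one, a Sherman--Morrison-type computation yields invertibility together with the leading inverse $\mathcal{L}_D+\mathcal{U}_k$, and feeding this into the Neumann expansion $(A+\epsilon B)^{-1}=A^{-1}-\epsilon A^{-1}BA^{-1}+\cdots$ with $\epsilon=k^2\log k$ produces the correction $-k^2\log k\,\mathcal{L}_D\mathbb{S}_{D,1}^{(1)}\mathcal{L}_D$. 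For (ii), I would substitute the expansions of $(\mathcal{K}_D^{k_m})^*$, $(\mathcal{K}_D^{k_c})^*$, $(\mathcal{S}_D^{k_c})^{-1}$ and $\mathcal{S}_D^{k_m}$ into $\mathcal{A}_D(\omega)$ and collect orders, using $k_m^2=\omega^2\varepsilon_m\mu_m$ and $k_c^2=\omega^2\varepsilon_c\mu_c$ so that the only singular coefficient at first order is $\omega^2\log\omega$; tracking which $\varepsilon\mu$-factor multiplies each operator gives \eqref{AD1}. Part (iii) is then analytic (Kato) perturbation theory: $\mathcal{A}_{D,0}$ is a scalar function of the $\mathcal{H}^*$-self-adjoint $\mathcal{K}_D^*$, hence diagonalized by the orthonormal eigenbasis $\{\varphi_j\}$, and the simple-eigenvalue assumption (Condition \ref{1}(ii)) yields the first-order formulas with denominators $\tau_j-\tau_l=(\tfrac{1}{\mu_m}-\tfrac{1}{\mu_c})(\lambda_j-\lambda_l)$.

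The core is (iv). Writing $\psi=\mathcal{A}_D(\omega)^{-1}f$ and inverting mode by mode against the eigenbasis of (iii), $\psi=\sum_j\tau_j(\omega)^{-1}\langle f,\varphi_j(\omega)\rangle_{\mathcal{H}^*}\varphi_j(\omega)$ (valid to the order needed since $\mathcal{A}_{D,0}$ is diagonalized by the orthonormal $\{\varphi_j\}$), I would split into $j\in J$ and $j\in J^c$. The decisive algebraic identity is the factorization $\tau_j=(\tfrac{1}{\mu_c}-\tfrac{1}{\mu_m})(\lambda-\lambda_j)$ with $\lambda=\frac{\mu_m+\mu_c}{2(\mu_m-\mu_c)}$, which converts the resonant denominators into $\lambda-\lambda_j$. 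To control the tail I would first show $f=O(\omega)$: the first term $-\tfrac{1}{\mu_m}\partial_\nu u^i=-\tfrac{\mathrm{i}k_m}{\mu_m}(d\cdot\nu)+O(\omega^2)$ is manifestly $O(\omega)$, while the constant part of $u^i$ enters the second term only through $(\mathcal{S}_D^{k_c})^{-1}[-1]$, which is a multiple of $\varphi_0$ (because $\mathcal{L}_D[-1]=\mathcal{P}_{\mathcal{H}^*_0}\widetilde{\mathcal{S}}_D^{-1}[-1]=\mathcal{P}_{\mathcal{H}^*_0}\varphi_0=0$ and $\mathcal{U}_{k_c}[-1]\parallel\varphi_0$) and is annihilated by $\tfrac12 Id-\mathcal{K}_D^*$ since $\mathcal{K}_D^*\varphi_0=\tfrac12\varphi_0$. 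Hence for $j\in J^c$, where $|\tau_j|\ge\eta_0$, the contribution is $O(\omega)$, which is the additive remainder in \eqref{den}.

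For the resonant modes $j\in J\subset\{1,2,\dots\}$ I would compute $\langle f,\varphi_j\rangle_{\mathcal{H}^*}$ to leading order. The first term gives $-\tfrac{\mathrm{i}k_m}{\mu_m}\langle d\cdot\nu,\varphi_j\rangle$. For the second term the constant part of $u^i$ is invisible to $\varphi_j$ ($\langle\varphi_0,\varphi_j\rangle_{\mathcal{H}^*}=0$ for $j\ge1$), so only the linear part $\mathrm{i}k_m\,d\cdot x$ survives; using self-adjointness $\langle(\tfrac12 Id-\mathcal{K}_D^*)g,\varphi_j\rangle=(\tfrac12-\lambda_j)\langle g,\varphi_j\rangle$ and the harmonic identity $\langle\widetilde{\mathcal{S}}_D^{-1}[d\cdot x],\varphi_j\rangle=\langle d\cdot\nu,\varphi_j\rangle/(\lambda_j-\tfrac12)$---itself a consequence of $(\mathcal{K}_D^*-\tfrac12)\widetilde{\mathcal{S}}_D^{-1}[d\cdot x]=d\cdot\nu$, the interior normal-derivative jump applied to the harmonic function $d\cdot x$---the factors $(\tfrac12-\lambda_j)$ and $(\lambda_j-\tfrac12)$ cancel and the second term contributes $+\tfrac{\mathrm{i}k_m}{\mu_c}\langle d\cdot\nu,\varphi_j\rangle$. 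Summing, $\langle f,\varphi_j\rangle=\mathrm{i}k_m(\tfrac{1}{\mu_c}-\tfrac{1}{\mu_m})\langle d\cdot\nu,\varphi_j\rangle+\text{h.o.t.}$, and dividing by $\tau_j(\omega)=(\tfrac{1}{\mu_c}-\tfrac{1}{\mu_m})(\lambda-\lambda_j)+O(\omega^2\log\omega)$ cancels the material prefactor; with $k_m=\omega\sqrt{\varepsilon_m\mu_m}$ and $\varphi_j(\omega)=\varphi_j+O(\omega^2\log\omega)$ this is exactly the summand of \eqref{den}, and $u^s=\mathcal{S}_D^{k_m}[\psi]$ is the exterior representation.

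The main obstacle, I expect, is disciplined asymptotic bookkeeping rather than any single conceptual step: one must separate the $k^2\log k$ and $k^2$ contributions consistently through the products defining $\mathcal{A}_D(\omega)$ and $f$, verify that the $O(\omega^2)$ pieces of $\langle f,\varphi_j\rangle$ cancel so the numerator error is genuinely $O(\omega^3\log\omega)$, and ensure the $O(\cdot)$ bounds are uniform in $j$ so that the split into a resonant sum plus an $O(\omega)$ tail is legitimate under Condition \ref{1}. The two clean structural facts---the annihilation of the $\varphi_0$-direction by $\tfrac12 Id-\mathcal{K}_D^*$ and the harmonic identity for $\langle\widetilde{\mathcal{S}}_D^{-1}[d\cdot x],\varphi_j\rangle$---are what make the material prefactors cancel and collapse the computation to the compact resonant formula \eqref{den}.
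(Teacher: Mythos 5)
The paper gives no proof of this lemma at all—it is quoted verbatim from \cite{Millien2017}—so there is no internal argument to compare against; your reconstruction follows the same route as that reference (Sherman--Morrison inversion of the rank-one-perturbed $\widehat{\mathbb{S}}_D^k$ plus a Neumann series for part (i), order-collection for part (ii), simple-eigenvalue perturbation for part (iii), and spectral inversion of $\mathcal{A}_D(\omega)[\psi]=f$ via the factorization $\tau_j=(\tfrac{1}{\mu_c}-\tfrac{1}{\mu_m})(\lambda-\lambda_j)$ together with the annihilation of the $\varphi_0$-direction and the harmonic-extension identity for $\langle\widetilde{\mathcal{S}}_D^{-1}[d\cdot x],\varphi_j\rangle$), and it is correct in outline. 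The only points that still need care are the ones you already flag, plus one you gloss over: since $\mu_c$ is complex, $\mathcal{A}_D(\omega)$ is not self-adjoint for $\omega>0$ (though $\mathcal{A}_{D,0}$ is still diagonalized by the orthonormal eigenbasis of $\mathcal{K}_D^*$), so the mode-by-mode inversion should strictly be phrased through Riesz projections or a biorthogonal system rather than an orthogonal expansion in $\{\varphi_j(\omega)\}$; this affects nothing at the stated order but should be said.
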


In this paper, we assume that $\mu_c$ of the nanoparticle varies with the frequency $\omega$. Then $\mu_c(\omega)$ can be described by Drude model (cf. \cite{Sarid2010}),
\[
\mu_c(\omega)=\mu_0\bigg{(}1-F\frac{\omega^2}{\omega^2-\omega_0^2+\mathrm{i}\tau^{-1}\omega}\bigg{)},
\]
where $\omega_0$ is the localized plasmon resonant frequency, $\tau\in\mathbb{R}_+$ denotes the bulk electron relaxation rate of the nanoparticle, and $F$ is the filling factor. When
\[
(1-F)(\omega^2-\omega_0^2)^2-F\omega_0^2(\omega^2-\omega_0^2)+\tau^{-2}\omega^2<0,
\]
it can be deduced that $\Re(\mu_c(\omega))<0$. Next, we introduce the following definition.
\begin{defn}
The quasi-static plasmon resonance is defined by $\omega$ such that
\begin{align}\label{plasmon}
\Re(\lambda(\omega))=\lambda_j,\ \mathrm{for}\ j\in J,
\end{align}
where $\lambda_j$ is an eigenvalue of the Neumann-Poincar\'e operator $\mathcal{K}^*_D$, and $\lambda(\omega)=\frac{\mu_m+\mu_c(\omega)}{2(\mu_m-\mu_c(\omega))}$.
\end{defn}
\begin{rem}
From the results of the spectral expansion of $\psi$ in (\ref{den}),
it is clear that $\psi$ is amplified when the plasmon resonance condition is reached, thereby the scattering field $u^s$ exhibits resonant behavior.
\end{rem}

\section{Shape sensitivity analysis}
\subsection{Sensitivity analysis for perturbed domains}
In this subsection, we first discuss the asymptotic expansion of the single-layer potential and the Neumann-Poincar\'e operator with respect to the boundary perturbation. Then we focus on the sensitivity of the perturbation domain, i.e., the effect of changes in the domain on the near-field measurement data.

Let $ X(t): [a, b]\rightarrow \mathbb{R}^2$ be the arclength parametrization of $\partial D$. Then $X\in C^2[a,b]$ satisfies $ |X'(t)| = 1$, and
\[
\partial D:=\{x=X(t), t\in[a,b]\}
\]
with $X'(t)=T(x)$ and $X''(t)=\tau(x)\nu(x)$, where $\nu$ is the outward unit normal vector field on $\partial D$. We denote the tangential derivative by $\frac{d}{d t}$. Letting $\phi \in C^2[a, b]$, we have
\begin{align*}
\frac{d^2}{d t^2}\phi(x)=\frac{\partial ^2 \phi}{\partial T^2}(x)+\tau\frac{\partial \phi}{\partial\nu}(x).
\end{align*}
The Helmholtz operator on the neighborhood of $\partial D$ can be denoted as follows:
\begin{align}\label{Helm opra}
\Delta+k^2=\frac{\partial^2}{\partial\nu^2}+\frac{\partial^2}{\partial T^2}+k^2=\frac{\partial^2}{\partial\nu^2}-\tau\frac{\partial }{\partial\nu}+\frac{d^2}{d t^2}+k^2 \ \ \ \mathrm{on}\ \partial D.
\end{align}
For $h\in C^1(\partial D)$ and a small $\epsilon$, we let $D_\epsilon$ be a deformation of $D$ given by
\[
\partial D_\epsilon:=\{ \tilde{x}=x+\epsilon h(x)\nu(x)\mid x\in \partial D\},
\]
and set $\Psi_\epsilon(x)=x+\epsilon h(x) \nu(x)$ be the diffeomorphism from $\partial D$ to $\partial D_\epsilon$. The outward unit normal $\tilde{\nu}(\tilde{x})$ and line element $d\tilde{\sigma}$ of $\partial D_\epsilon$, can be expanded uniformly as \cite{Lim2012}:
\begin{align*}
\tilde{\nu}(\tilde{x})&=\nu(x)-\epsilon h'(t)T(x)+O(\epsilon^2),\\
d\tilde{\sigma}(\tilde{x})&=d\sigma(x)-\epsilon\tau(x)h(x)d\sigma(x)+O(\epsilon^2).
\end{align*}
For $u^i(\tilde{x}), \frac{\partial u^i(\tilde{x})}{\partial \tilde{\nu}(\tilde{x})}$, by the Taylor expansion, we have
\begin{align}
{u}^i(\tilde{x})&=u^i(x)+\epsilon h(x)\frac{\partial u^i}{\partial \nu}(x)+O(\epsilon^2),\hspace*{4.05cm} \ \ x\in \partial D,\medskip\\
\frac{\partial {u}^i(\tilde{x})}{\partial \tilde{\nu}(\tilde{x})}&=\frac{\partial u^i(x)}{\partial \nu(x)}+\epsilon \bigg{(}h(x)\frac{\partial^2 u^i(x)}{\partial\nu^2(x)}-h'(x)\frac{\partial u^i(x)}{\partial T(x)}\bigg{)}+O(\epsilon^2),\hspace*{.5cm} \ \ x\in \partial D.
\end{align}

In the following lemma, we give asymptotic expansion results for the layer potential operators  $\mathcal{S}_{D_\epsilon}^k, (\mathcal{K}_{D_\epsilon}^k)^*$ and the density functions $\phi_\epsilon, \psi_\epsilon$ with respect to $\epsilon$. These results provide the basis for the analysis of shape sensitivity.
\begin{lem}
\text{\cite{Zribi2013}} Let $\phi:=\tilde{\phi}\circ \Psi_\epsilon$. There exists $C$ depending only on $N$, $\parallel X\parallel_{\mathcal{C}^2}$, and $\parallel h\parallel_{\mathcal{C}^1} $ such that
$$\hspace{-4.0cm}(i)\ \ \ \ \ \ \ \ \ \ \ \ \ \ \ \ \parallel \mathcal{S}_{D_\epsilon}^k[\tilde{\phi}]\circ \Psi_\epsilon-\mathcal{S}_{D}^k[\phi]-\epsilon\mathcal{S}_{D,k}^{(1)}[\phi]\parallel_{L^2(\partial D)}\leq C \epsilon^{2}\parallel\phi\parallel_{L^2(\partial D)},$$
where the operator $\mathcal{S}_{D,k}^{(1)}$ defined for any $\phi\in L^2(\partial D)$ by
\begin{align}\label{Sdk1}
\mathcal{S}_{D,k}^{(1)}[\phi](x)&=-\mathcal{S}_D^k[\tau h \phi](x)+h(\mathcal{K}_D^k)^*[\phi](x)+\mathcal{K}_D^k[h\phi](x)\nonumber,\\
&=-\mathcal{S}_D^k[\tau h \phi](x)+\bigg{(} h\frac{\partial(\mathcal{S}_D^k[\phi])}{\partial \nu}+\mathcal{D}_D^k[h\phi]\bigg{)}\bigg{|}_{\pm}(x),\ x\in\partial D.
\end{align}
$$\hspace{-2.3cm}(ii)\ \ \ \ \ \ \ \ \ \ \ \ \ \ \ \ \ \parallel((\mathcal{K}^k_{D_\epsilon})^*[\tilde{\phi}])\circ\Psi_\epsilon-(\mathcal{K}^k_{D})^*[\phi]
-\epsilon\mathcal{K}_{D,k}^{(1)}[\phi]\parallel_{L^2(\partial D)}\leq C \epsilon^{2}\parallel\phi\parallel_{L^2(\partial D)},$$
where
\begin{align}\label{k1}
\mathcal{K}_{D,k}^{(1)}[\phi](x)&=\tau(x)h(x)(\mathcal{K}_D^k)^*[\phi](x)-(\mathcal{K}_D^k)^*[\tau h\phi](x)+\frac{\partial (\mathcal{D}_D^k[h\phi])}{\partial \nu}(x)\nonumber\\
&-\frac{d}{d t}\bigg{(}h\frac{d(\mathcal{S}_D^k[\phi])}{dt}\bigg{)}(x)-k^2h(x)\mathcal{S}_D^k[\phi](x)\nonumber,\\
&=\bigg{(}\tau h\frac{\partial(\mathcal{S}^k_D[\phi])}{\partial \nu}-\frac{\partial(\mathcal{S}^k_D[\tau h\phi])}{\partial \nu} \bigg{)}\bigg{|}_{\pm}(x)+\frac{\partial (\mathcal{D}_D^k[h\phi])}{\partial \nu}(x)\nonumber\\
&-\frac{d}{dt}\bigg{(}h\frac{d(\mathcal{S}_D^k[\phi])}{dt}\bigg{)}(x)-k^2h(x)\mathcal{S}^k_D[\phi](x), \ \ x\in \partial D.
\end{align}
$(iii)$ Let $(\phi_\epsilon, \psi_\epsilon)$ be the solution of (\ref{integ_sys}) with $D_\epsilon$, then it satisfies
\begin{align*}
\parallel \phi_\epsilon\circ \Psi_\epsilon-\phi-\epsilon^n\phi^{(1)}\parallel_{L^2(\partial D)}+\parallel \psi_\epsilon\circ \Psi_\epsilon-\psi-\epsilon^n\psi^{(1)}\parallel_{L^2(\partial D)}\leq C\epsilon^{2},
\end{align*}
where $(\psi^{(1)}, \phi^{(1)})$ as the solution to the following system on $\partial D:$
\begin{align}
\small
\label{phipsin}
\begin{cases}
\displaystyle\mathcal{S}_D^{k_m}[\psi^{(1)}]-\mathcal{S}_D^{k_c}[\phi^{(1)}]=-h\frac{\partial u^i}{\partial \nu}+\mathcal{S}_{D,k_c}^{(1)}[\phi^{(0)}]
-\mathcal{S}_{D,k_m}^{(1)}[\psi^{(0)}],\medskip\\
\displaystyle\frac{1}{\mu_m}\bigg{(}\frac{\partial (\mathcal{S}_D^{k_m}\psi^{(1)})}{\partial \nu}\bigg{)}\bigg{|}_{+}-
\frac{1}{\mu_c}\bigg{(}\frac{\partial (\mathcal{S}_D^{k_c}\phi^{(1)})}{\partial \nu}\bigg{)}\bigg{|}_{-}=
-\frac{1}{\mu_m}\bigg{(}h\frac{\partial^2 u^i}{\partial\nu^2}-h'\frac{\partial u^i}{\partial T}\bigg{)}+\frac{1}{\mu_c}\mathcal{K}_{D,k_c}^{(1)}[\phi^{(0)}]-
\frac{1}{\mu_m}\mathcal{K}_{D,k_m}^{(1)}[\psi^{(0)}]
\end{cases}
\end{align}
with $\mathcal{S}_{D,k}^{(1)}$ and $\mathcal{K}_{D,k}^{(1)}$ are defined by \eqref{Sdk1} and \eqref{k1}, respectively.
\end{lem}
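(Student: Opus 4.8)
The plan is to transport the entire perturbed problem back to the fixed reference boundary $\partial D$ through the diffeomorphism $\Psi_\epsilon(x)=x+\epsilon h(x)\nu(x)$, expand every quantity in powers of $\epsilon$, and match terms order by order. All three parts rest on a single computation: a first-order Taylor expansion of the Helmholtz kernel $G(\tilde x,\tilde y,k)$ along the perturbation, combined with the uniform geometric expansions for $\tilde\nu(\tilde x)$ and $d\tilde\sigma(\tilde x)$ recorded just before the lemma. Parts (i) and (ii) amount to computing the shape derivatives of the layer potentials; part (iii) then feeds these into the transmission system to obtain the density corrections.

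For part (i) I would start from
$$\mathcal{S}_{D_\epsilon}^k[\tilde\phi](\tilde x)=\int_{\partial D_\epsilon}G(\tilde x,\tilde y,k)\tilde\phi(\tilde y)\,d\tilde\sigma(\tilde y),$$
change variables by $\tilde y=\Psi_\epsilon(y)$ so that $\tilde\phi(\tilde y)=\phi(y)$, and insert $d\tilde\sigma=(1-\epsilon\tau h)\,d\sigma+O(\epsilon^2)$. Writing $\tilde x-\tilde y=(x-y)+\epsilon(h(x)\nu(x)-h(y)\nu(y))$ and Taylor expanding $G$ gives the kernel
$$G(x,y,k)+\epsilon h(x)\frac{\partial G}{\partial\nu(x)}(x,y,k)+\epsilon h(y)\frac{\partial G}{\partial\nu(y)}(x,y,k)+O(\epsilon^2).$$
Collecting the $O(\epsilon)$ contribution and recognising the three resulting integral operators as $h(x)(\mathcal{K}_D^k)^*[\phi]$, $\mathcal{K}_D^k[h\phi]$ and $-\mathcal{S}_D^k[\tau h\phi]$ reproduces $\mathcal{S}_{D,k}^{(1)}$ in \eqref{Sdk1}; the equivalent second form there follows from the jump relations for the single- and double-layer potentials. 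The remainder estimate comes from the $L^2(\partial D)$ mapping properties of the layer potentials together with the $C^2$ bound on $X$ and the $C^1$ bound on $h$, which render the second-order Taylor remainder of the (weakly singular) kernel integrable with a uniform constant. Part (ii) proceeds the same way, but now one must also expand the normal, $\tilde\nu=\nu-\epsilon h'T+O(\epsilon^2)$, since the kernel of $(\mathcal{K}_{D_\epsilon}^k)^*$ carries $\partial G/\partial\tilde\nu(\tilde x)$. Here I would use the local decomposition of the Helmholtz operator in \eqref{Helm opra} to trade the second normal derivative for tangential derivatives plus the curvature and $k^2$ terms; this is exactly what generates the pieces $-\tfrac{d}{dt}\!\big(h\tfrac{d}{dt}\mathcal{S}_D^k[\phi]\big)$ and $-k^2 h\,\mathcal{S}_D^k[\phi]$ in \eqref{k1}, while the differentiated double-layer term $\partial(\mathcal{D}_D^k[h\phi])/\partial\nu$ and the two principal-value terms arise from differentiating the expanded single-layer kernel in the perturbed normal direction.

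For part (iii) I would pull the transmission system \eqref{integ_sys} posed on $\partial D_\epsilon$ back to $\partial D$ and substitute the operator expansions from (i)--(ii) together with the Taylor expansions of $u^i(\tilde x)$ and $\partial u^i/\partial\tilde\nu$. Writing $\phi_\epsilon\circ\Psi_\epsilon=\phi+\epsilon\phi^{(1)}+\cdots$ and $\psi_\epsilon\circ\Psi_\epsilon=\psi+\epsilon\psi^{(1)}+\cdots$, the $O(1)$ balance returns the unperturbed system, and collecting the $O(\epsilon)$ terms yields precisely \eqref{phipsin} for $(\psi^{(1)},\phi^{(1)})$: the data $-h\,\partial_\nu u^i$ and $-\tfrac{1}{\mu_m}(h\,\partial_\nu^2 u^i-h'\partial_T u^i)$ come from the expansion of the incident field, while the operators $\mathcal{S}_{D,k}^{(1)}$ and $\mathcal{K}_{D,k}^{(1)}$ acting on the leading densities $(\psi^{(0)},\phi^{(0)})=(\psi,\phi)$ come from parts (i)--(ii).

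The main obstacle is the uniform $O(\epsilon^2)$ remainder bound in (iii). It requires that the boundary integral operator governing \eqref{integ_sys} be boundedly invertible on $H^{-1/2}(\partial D)\times H^{-1/2}(\partial D)$, and that its counterpart on $\partial D_\epsilon$, once pulled back, be a norm-$O(\epsilon)$ perturbation of it; a Neumann-series or fixed-point argument then closes the estimate with a constant independent of $\epsilon$. Establishing this invertibility is exactly the well-posedness of the forward transmission problem, and the genuine work lies in tracking the constants uniformly through the singular-kernel expansions so that the products of $O(\epsilon)$ terms are controlled and the residual is provably quadratic in $\epsilon$.
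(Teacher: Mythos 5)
The paper does not prove this lemma at all: it is imported verbatim from the cited reference \cite{Zribi2013} (and its electrostatic antecedents), so there is no in-paper argument to compare against. Your outline is nevertheless the standard route by which such results are established in that reference --- pull back through $\Psi_\epsilon$, expand the kernel, the surface measure and the normal to first order in $\epsilon$, identify the $O(\epsilon)$ operators via the jump relations, and then feed the operator expansions into the transmission system to read off \eqref{phipsin} --- and the identification of the three first-order pieces of $\mathcal{S}_{D,k}^{(1)}$ and of the curvature/$k^2$ terms in $\mathcal{K}_{D,k}^{(1)}$ via \eqref{Helm opra} is correct.

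One point you underplay, and which is where the real work in \cite{Zribi2013} lies: a naive Taylor expansion of the kernel in the perturbation $\epsilon(h(x)\nu(x)-h(y)\nu(y))$ cannot be estimated termwise, because $\nabla G$ and $\nabla^2 G$ blow up like $|x-y|^{-1}$ and $|x-y|^{-2}$ near the diagonal and the latter is not integrable on a curve. The expansion is rescued by the cancellation $|h(x)\nu(x)-h(y)\nu(y)|\leq C|x-y|$ (which uses precisely the $C^2$ bound on $X$ and the $C^1$ bound on $h$ appearing in the constant), so that $|\tilde x-\tilde y|=|x-y|\,(1+O(\epsilon))$ uniformly and each successive term of the expansion gains a factor of $\epsilon$ without losing integrability; for the Hankel kernel the logarithmic part then contributes only a bounded $O(\epsilon)$ correction. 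Your closing paragraph gestures at ``tracking the constants uniformly'' but does not name this diagonal cancellation, which is the one idea without which the remainder bounds in (i)--(ii), and hence the Neumann-series argument you invoke for (iii), do not close.
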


When $D$ is replaced by $D_\epsilon$ in problem (\ref{sca equ}), we apply the filed expansion (FE) method (cf.\cite{CGHIR1999}) to export asymptotic expansions of $u_\epsilon$. Firstly, we expand $u_\epsilon$ with respect to $\epsilon$ of the form,
\begin{align}\label{uFE}
u_\epsilon(x)=u_0+\epsilon u_1(x)+O(\epsilon^{2}),\ \ x\in \mathbb{R}^2.
\end{align}
Note that  $u_0=u$, and $u_1$ is well-defined in $\mathbb{R}^2\backslash \partial D$
\begin{align}
\label{un}
\begin{cases}
&\displaystyle(\Delta +\omega^2\varepsilon_c\mu_c) u_1=0\ \hspace*{.8cm} \mathrm{in}\   D,\medskip\\
&\displaystyle(\Delta +\omega^2\varepsilon_m\mu_m) u_1=0\ \hspace*{0.55cm} \mathrm{in}\  \mathbb{R}^2\backslash \overline{D},\medskip\\
&\displaystyle u_1\ \ \mathrm{satisfies\  the\  Sommerfeld\  radiation\  condition.}
\end{cases}
\end{align}
According to the expansion form (\ref{uFE}) and $\frac{1}{\mu_m}\frac{\partial u_\epsilon}{\partial \nu}\big{|}_{+}=\frac{1}{\mu_c}\frac{\partial u_\epsilon}{\partial \nu}\big{|}_{-}$ on $\partial D_\epsilon$, it follows that
\begin{align*}
\frac{1}{\mu_m}\frac{\partial u_0}{\partial\nu}\bigg{|}_{+}-\frac{1}{\mu_c}\frac{\partial u_0}{\partial\nu}\bigg{|}_{-}=0\hspace*{5.5cm} \mathrm{on}\ \partial D,
\end{align*}
\begin{align}
\frac{1}{\mu_m}\frac{\partial u_1}{\partial\nu}\bigg{|}_{+}-\frac{1}{\mu_c}\frac{\partial u_1}{\partial\nu}\bigg{|}_{-}&={h}\bigg{(}\frac{1}{\mu_c}\frac{\partial^2 u_{0}}{\partial\nu^2}\bigg{|}_{-}-\frac{1}{\mu_m}\frac{\partial^2 u_{0}}{\partial\nu^2}\bigg{|}_{+}\bigg{)}+\bigg{(}\frac{1}{\mu_c}\nabla u_{0}\cdot\nu_1\bigg{|}_{-}-\frac{1}{\mu_m}\nabla u_{0}\cdot\nu_1\bigg{|}_{+}\bigg{)}\nonumber\\
\label{uv+-}
&=\bigg{(} \frac{1}{\mu_m}-\frac{1}{\mu_c}\bigg{)}\frac{d}{dt}\bigg{(}h\frac{du_0}{dt}\bigg{)}+h\omega^2(\varepsilon_m-\varepsilon_c)u_0
\ \ \ \ \   \mathrm{on}\ \partial D.
\end{align}
The last equality in (\ref{uv+-}) is proved by using the representation of the Helmholtz operator in (\ref{Helm opra}). Similarly, since $u_\epsilon\mid_{+}=u_\epsilon\mid_{-}$ on $\partial D_\epsilon$, we have
\begin{align}
u_0|_{+}-u_0|_{-}&=0 \hspace*{7.4cm} \mathrm{on}\ \partial D,\nonumber\\
\label{u1+-}
u_1|_{+}-u_1|_{-}&=h\bigg{(}\frac{\partial u_0}{\partial\nu}\bigg{|}_{-}-\frac{\partial u_0}{\partial\nu}\bigg{|}_{+}\bigg{)}=h\bigg{(}1-\frac{\mu_m}{\mu_c}\bigg{)}\frac{\partial u_0}{\partial \nu}\bigg{|}_{-}
\hspace*{.6cm}\mathrm{on}\ \partial D.
\end{align}

Then the following asymptotic expansion results can be obtained. It is emphasized that above derivation is formal, and the asymptotic results will be proved strictly by layer potential techniques.
\begin{thm}\label{thm1}
The following asymptotic formula formally holds:
\begin{align*}
u_\epsilon(x)=u(x)+\epsilon u_1(x)+O(\epsilon^{2}),
\end{align*}
where the remainder $O(\epsilon^{2})$ depends on $N, D, k_c, k_m$, the $C_1$-norm of $h$, and $u_1$ is the unique solution to
\begin{align}
\label{u1equa}
\begin{cases}
&\displaystyle(\Delta +\omega^2\varepsilon_c\mu_c) u_1=0\hspace*{8.75cm}\mathrm{in}\   D,\medskip\\
&\displaystyle(\Delta +\omega^2\varepsilon_m\mu_m) u_1=0\hspace*{8.5cm} \mathrm{in}\  \mathbb{R}^2\backslash \overline{D},\medskip\\
&u_1|_{+}-u_1|_{-}=h\big{(}1-\frac{\mu_m}{\mu_c}\big{)}\frac{\partial u_0}{\partial\nu}|_{-}\  \hspace*{6.8cm} \mathrm{on}\ \partial D,\medskip\\
&\displaystyle\frac{1}{\mu_m}\frac{\partial u_1}{\partial\nu}\bigg{|}_{+}-\frac{1}{\mu_c}\frac{\partial u_1}{\partial\nu}\bigg{|}_{-}=
\bigg{(}\frac{1}{\mu_m}-\frac{1}{\mu_c}\bigg{)}\frac{d}{dt}\bigg{(}h\frac{du_0}{dt}\bigg{)}
+h\omega^2(\varepsilon_m-\varepsilon_c)u_0\ \hspace*{.7cm} \mathrm{on}\ \partial D,\medskip\\
&\displaystyle u_1\ \ \mathrm{satisfies\  the\  Sommerfeld\  radiation\  condition.}
\end{cases}
\end{align}
\end{thm}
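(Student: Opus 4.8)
The plan is to turn the field-expansion ansatz \eqref{uFE} into a rigorous statement by passing through the layer-potential representation of $u_\epsilon$ and transporting everything from the moving boundary $\partial D_\epsilon$ back to the fixed boundary $\partial D$ via the diffeomorphism $\Psi_\epsilon$. Fix $x$ in a compact subset $K$ of $\mathbb{R}^2\setminus\partial D$; for $\epsilon$ small, $x$ lies on the same side of $\partial D_\epsilon$, and I would write $u_\epsilon=u^i+\mathcal{S}_{D_\epsilon}^{k_m}[\psi_\epsilon]$ in the exterior and $u_\epsilon=\mathcal{S}_{D_\epsilon}^{k_c}[\phi_\epsilon]$ in the interior, where $(\psi_\epsilon,\phi_\epsilon)$ solve \eqref{integ_sys} posed on $\partial D_\epsilon$. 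Changing variables $\tilde y=\Psi_\epsilon(y)=y+\epsilon h(y)\nu(y)$ rewrites each potential as an integral over $\partial D$ against the kernel $G(x,\Psi_\epsilon(y),k)$, the pulled-back density $\psi_\epsilon\circ\Psi_\epsilon$ (resp. $\phi_\epsilon\circ\Psi_\epsilon$), and the Jacobian line element $1-\epsilon\tau h+O(\epsilon^2)$.

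The next step is to Taylor-expand the three factors and collect powers of $\epsilon$. Since $x$ stays away from $\partial D$, one has $G(x,\Psi_\epsilon(y),k)=G(x,y,k)+\epsilon h(y)\frac{\partial G(x,y,k)}{\partial\nu(y)}+O(\epsilon^2)$ uniformly on $K$, while part (iii) of the preceding lemma supplies $\psi_\epsilon\circ\Psi_\epsilon=\psi+\epsilon\psi^{(1)}+O(\epsilon^2)$ and $\phi_\epsilon\circ\Psi_\epsilon=\phi+\epsilon\phi^{(1)}+O(\epsilon^2)$ in $L^2(\partial D)$. The $\epsilon^0$ terms recombine to $u^i+\mathcal{S}_D^{k_m}[\psi]=u$ (resp. $\mathcal{S}_D^{k_c}[\phi]=u$), confirming $u_0=u$. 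The $\epsilon^1$ terms define $u_1=\mathcal{S}_D^{k_m}[\psi^{(1)}]+\mathcal{D}_D^{k_m}[h\psi]-\mathcal{S}_D^{k_m}[\tau h\psi]$ in $\mathbb{R}^2\setminus\overline{D}$ and $u_1=\mathcal{S}_D^{k_c}[\phi^{(1)}]+\mathcal{D}_D^{k_c}[h\phi]-\mathcal{S}_D^{k_c}[\tau h\phi]$ in $D$. Because single- and double-layer potentials with wavenumber $k$ solve $(\Delta+k^2)u_1=0$ off $\partial D$ and the exterior one radiates, this $u_1$ automatically satisfies the two Helmholtz equations and the Sommerfeld condition in \eqref{u1equa}.

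It then remains to check the two transmission conditions. I would take the one-sided traces of $u_1$ and of $\tfrac{1}{\mu}\partial_\nu u_1$ via the jump relations, substitute the defining system \eqref{phipsin} for $(\psi^{(1)},\phi^{(1)})$, and expand the auxiliary operators using their explicit forms \eqref{Sdk1} and \eqref{k1}. For the Dirichlet jump this is a direct cancellation: the $\mathcal{K}_D^k$ and $\mathcal{S}_D^k[\tau h\,\cdot]$ contributions drop out, and rewriting $(\mathcal{K}_D^{k_c})^*[\phi]$ and $(\mathcal{K}_D^{k_m})^*[\psi]$ through $\partial_\nu u_0|_\mp$ reproduces $u_1|_+-u_1|_-=h(1-\tfrac{\mu_m}{\mu_c})\partial_\nu u_0|_-$ as in \eqref{u1+-}. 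For the conormal jump the same bookkeeping recovers \eqref{uv+-}, the decisive point being to invoke the Helmholtz decomposition \eqref{Helm opra} to trade $\partial_\nu^2 u_0$ for the tangential term $\tfrac{d}{dt}(h\tfrac{du_0}{dt})$ together with the zeroth-order term $h\omega^2(\varepsilon_m-\varepsilon_c)u_0$. Uniqueness of $u_1$ then follows from the well-posedness of this transmission problem, which is of the same type as the forward problem \eqref{sca equ}.

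Finally, the remainder is governed by the $O(\epsilon^2)$ estimates: part (iii) of the preceding lemma bounds $\|\psi_\epsilon\circ\Psi_\epsilon-\psi-\epsilon\psi^{(1)}\|_{L^2}+\|\phi_\epsilon\circ\Psi_\epsilon-\phi-\epsilon\phi^{(1)}\|_{L^2}$ by $C\epsilon^2$, and the Taylor remainders of $G(x,\Psi_\epsilon(\cdot),k)$ and of the Jacobian are $O(\epsilon^2)$ uniformly on $K$; combined with the $L^2(\partial D)\to C(K)$ boundedness of $\mathcal{S}_D^k$ and $\mathcal{D}_D^k$, this yields $u_\epsilon-u-\epsilon u_1=O(\epsilon^2)$ on $K$. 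I expect the main obstacle to be the bookkeeping in the conormal transmission condition: unlike the electrostatic setting of \cite{DLZ2022}, the nonzero wavenumber forces the kernel expansions \eqref{Sdk1} and \eqref{k1} to carry the $k^2$ and logarithmic corrections, and these must be tracked with care to produce exactly the curvature-weighted tangential term and the potential term $h\omega^2(\varepsilon_m-\varepsilon_c)u_0$ in \eqref{u1equa}, all while keeping every first-order Taylor remainder genuinely of order $\epsilon^2$ in the appropriate norm.
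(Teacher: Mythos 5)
Your proposal is correct and follows essentially the same route as the paper: the same layer-potential representation of $u_1$ in terms of $\psi^{(1)},\phi^{(1)}$ solving \eqref{phipsin}, verification of the two transmission conditions via \eqref{Sdk1}, \eqref{k1} and the Helmholtz decomposition \eqref{Helm opra}, and the $O(\epsilon^2)$ control from part (iii) of the preceding lemma. The only difference is that you spell out the pullback/Taylor-expansion derivation of the representation formula and the remainder estimate, which the paper states without detail.
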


\begin{proof}
According to (\ref{phipsin}) and the layer potential theory, one can show that
\begin{align}\label{u1}
u_1(x)=
\begin{cases}
&\displaystyle\mathcal{S}_D^{k_m}[\psi^{(1)}](x)-\mathcal{S}_D^{k_m}[\tau h\psi^{(0)}](x)+\mathcal{D}_D^{k_m}[h\psi^{(0)}](x),\hspace*{.70cm} x\in \mathbb{R}^2\setminus\overline{D},\medskip\\
&\displaystyle\mathcal{S}_D^{k_c}[\phi^{(1)}](x)-\mathcal{S}_D^{k_c}[\tau h\phi^{(0)}](x)+\mathcal{D}_D^{k_c}[h\phi^{(0)}](x),\hspace*{1.15cm} x\in {D},\\
\end{cases}
\end{align}
satisfies (\ref{un}), (\ref{uv+-}) and (\ref{u1+-}). In fact, by combining (\ref{Sdk1}) and the first equation of (\ref{phipsin}), we obtain
\begin{align*}
u_1|_{-}-u_1|_{+}=&h\frac{\partial u^i}{\partial\nu}+\mathcal{S}_{D,k_m}^{(1)}[\psi^{(0)}]-\mathcal{S}_{D,k_c}^{(1)}[\phi^{(0)}]
+\mathcal{S}_D^{k_m}[\tau h \psi^{(0)}]-\mathcal{S}_D^{k_c}[\tau h \phi^{(0)}]\\
&+(\mathcal{D}_D^{k_c}[h\phi^{(0)}])\bigg{|}_{-}-(\mathcal{D}_D^{k_m}[h\psi^{(0)}])\bigg{|}_{+}\\
=&h\bigg{(}\frac{\partial u^i}{\partial\nu}+\frac{\partial(\mathcal{S}^{k_m}_D[\psi^{(0)}])}{\partial\nu}\bigg{|}_{+}\bigg{)}
-h\frac{\partial(\mathcal{S}^{k_c}_D[\phi^{(0)}])}{\partial\nu}\bigg{|}_{-}\\
=&h\bigg{(}\frac{\mu_m}{\mu_c}-1\bigg{)}\frac{\partial u_0}{\partial \nu}\bigg{|}_{-} \ \ \ \ \ \ \mathrm{on}\ \partial D,
\end{align*}
Since the second equation of (\ref{phipsin}), we have
\begin{align*}
\frac{1}{\mu_c}\frac{\partial u_1}{\partial\nu}\bigg{|}_{-}-\frac{1}{\mu_m}\frac{\partial u_1}{\partial \nu}\bigg{|}_{+}=&\frac{1}{\mu_m}(\mathcal{K}_{D,k_m}^{(1)}[\psi^{(0)}])\bigg{|}_{+}-
\frac{1}{\mu_c}(\mathcal{K}_{D,k_c}^{(1)}[\phi^{(0)}])\bigg{|}_{-}+\frac{1}{\mu_m}\bigg{(}h\frac{\partial^2 u^i}{\partial \nu^2}-h'\frac{\partial u^i}{\partial T}\bigg{)}\\
&+\frac{1}{\mu_m}\frac{\partial (\mathcal{S}_D^{k_m}[\tau h\psi^{(0)}])}{\partial \nu}\bigg{|}_{+}-\frac{1}{\mu_c}\frac{\partial (\mathcal{S}_D^{k_c}[\tau h\phi^{(0)}])}{\partial \nu}\bigg{|}_{-}\\
&+\frac{1}{\mu_c}\frac{\partial (\mathcal{D}_D^{k_c}[h\phi^{(0)}])}{\partial \nu}\bigg{|}_{-}-\frac{1}{\mu_m}\frac{\partial (\mathcal{D}_D^{k_m}[ h\psi^{(0)}])}{\partial \nu}\bigg{|}_{+}\ \ \ \ \ \ \  \mathrm{on}\ \partial D,
\end{align*}
owing to $(\Delta+k_m^2)u^i=0$ in $\mathbb{R}^2$, and combined with \eqref{Helm opra}, we find
\begin{align*}
h\frac{\partial^2 u^i}{\partial \nu^2}-h'\frac{\partial u^i}{\partial T}=
\tau h\frac{\partial u^i}{\partial\nu}-\frac{d}{dt}\bigg{(}h\frac{d u^i}{dt}\bigg{)}-hk_m^2u^i\ \ \ \mathrm{on}\ \partial D.
\end{align*}
It is evident from (\ref{k1}) that $u_1$ satisfies (\ref{uv+-}). This completes the proof.
\end{proof}

We introduce the following definition of a shape sensitivity functional, to study the variation in measurements caused by changes in the shape of the inclusion $D$.
\begin{defn}\label{de1}
The shape sensitivity functional for the measurement $u^s|_{\partial \Omega}$ with respect to the shape of $\partial D$ is defined as
\begin{align*}
SSF(\partial D):=\lim_{\epsilon\rightarrow0}\frac{u_\epsilon^s|_{\partial \Omega}-u^s|_{\partial \Omega}}{\epsilon}.
\end{align*}
\end{defn}

\begin{rem}
From Definition \ref{de1}, the shape sensitivity functional is actually the shape derivative of the forward operator $\mathcal{F}(\partial D)$ (cf. \cite{Lim2012}). Furthermore, by using Theorem \ref{thm1} and \eqref{u1}, the shape sensitivity functional can be rewritten as
\begin{align}
\label{ssf1}
SSF(\partial D)(x)=(\mathcal{S}_D^{k_m}[\psi^{(1)}]
-\mathcal{S}_{D}^{k_m}[\tau h\psi^0]+\mathcal{D}_{D}^{k_m}[h \psi^{0}])(x), \ \ x\in \partial \Omega.
\end{align}
\end{rem}

\subsection{Spectral representation of the shape sensitivity functional}\label{sect:3.2}

In this subsection, we derive the spectral representation of the shape sensitivity functional. It indicates that, when plasmon resonances occur, the shape sensitivity functional is amplified and exhibits a prominent peak. Hence, the plasmon resonance can significantly increase the sensitivity of the near-field measurement with respect to the shape of a domain. For simplicity, in the subsequent spectral analysis, we always exclude the essential spectrum 0 from the spectrum set of the NP operator.

Let $d=dist(\lambda, \sigma(\mathcal{K}_D^*))$ be the distance of $\lambda$ to the spectrum set $\sigma(\mathcal{K}_D^*)$. Next, we are ready to state our main result in the following theorem.

\begin{thm}\label{thm3.4}
 In the quasi-static regime, as $d\rightarrow 0$, if $\omega^2\log \omega=o(d)$,
 then the shape sensitivity functional has the following spectral expansion:
\begin{align}\label{ssf}
SSF(\partial D)=&\frac{\mu_c}{\mu_m-\mu_c}\sum_{q=1}^3\frac{(-p_c)^{q-1}}{(\log \omega)^q}\sum_{l\in J}\frac{(E_2,\varphi_l)_{\mathcal{H}^*}\widetilde{\mathcal{S}}_D[\varphi_l]}{\lambda-\lambda_l}
+\sum_{l\in J}\sum_{j\in J}\frac{\mathrm{i}k_m(d\cdot\nu,\varphi_j)_{\mathcal{H}^*}
(E_1,\varphi_m)_{\mathcal{H}^*}\widetilde{\mathcal{S}}_D[\varphi_m]}{(\lambda-\lambda_j)(\lambda-\lambda_l)}
\nonumber\\
&+\sum_{j\in J}\frac{\mathrm{i}k_m(d\cdot \nu, \varphi_j)_{\mathcal{H}^*}(-\widetilde{\mathcal{S}}_D[\tau h\varphi_j]+\mathcal{D}_D[h\varphi_j])}{\lambda-\lambda_j}
+O\bigg{(}\frac{1}{d(\log \omega)^3}\bigg{)}+O\bigg{(}\frac{\omega^2(\log \omega)^2}{d^2}\bigg{)}\nonumber\\
&+O\bigg{(}\frac{\omega^3(\log \omega)^3}{d^3}\bigg{)}+O\big{(}\omega\log \omega\big{)},
\end{align}
where
\begin{align*}
E_1&=\tau h\lambda_j\varphi_j-\mathcal{K}_D^*[\tau h\varphi_j]+\mathcal{D}_D[h\varphi_j]-\frac{d}{dt}\bigg{(}h\frac{d(\widetilde{\mathcal{S}}_D\varphi_j)}{dt}\bigg{)},\\ E_2&=2\pi\bigg{(}\frac{\partial D_D[h\varphi_0]}{d \nu}+\bigg{(}\frac{1}{2}Id-\mathcal{K}_D^*\bigg{)}(\widetilde{\mathcal{S}}_D^{-1}[h\varphi_0/2]
+\widetilde{\mathcal{S}}_D^{-1}\mathcal{K}_D[h\varphi_0])\bigg{)}
\end{align*}
with the constant $p_c=2\pi\mathcal{S}_D[\varphi_0]+\log(\sqrt{\varepsilon_c\mu_c})+\gamma-\log2-\mathrm{i}\pi/2$.
\end{thm}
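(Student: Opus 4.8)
The plan is to reduce the first-order perturbation system \eqref{phipsin} to a single boundary integral equation for $\psi^{(1)}$ of the same type as \eqref{psi}, invert it spectrally, and substitute into the representation \eqref{ssf1}. First I would eliminate $\phi^{(1)}$ from \eqref{phipsin}: using the invertibility of $\mathcal{S}_D^{k_c}$ (valid for small $\omega$), the first equation gives $\phi^{(1)} = (\mathcal{S}_D^{k_c})^{-1}(\mathcal{S}_D^{k_m}[\psi^{(1)}] - g_1)$, where $g_1 = -h\,\partial_\nu u^i + \mathcal{S}_{D,k_c}^{(1)}[\phi^{(0)}] - \mathcal{S}_{D,k_m}^{(1)}[\psi^{(0)}]$. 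Substituting into the transmission equation collapses the system to
\[
\mathcal{A}_D(\omega)[\psi^{(1)}] = g^{(1)},
\]
with the same operator $\mathcal{A}_D(\omega)$ as in \eqref{psi} and an explicit source $g^{(1)}$ assembled from $g_1$, the $\mathcal{K}_{D,k}^{(1)}$ terms of \eqref{k1}, and the zeroth-order densities $\psi^{(0)},\phi^{(0)}$. This is the workhorse identity; everything downstream is bookkeeping on $g^{(1)}$.

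Next I would expand $g^{(1)}$ in the quasi-static regime. The crucial point is that $\phi^{(0)} = (\mathcal{S}_D^{k_c})^{-1}(\mathcal{S}_D^{k_m}[\psi^{(0)}]+u^i)$, so the expansion \eqref{invS} of $(\mathcal{S}_D^{k_c})^{-1} = \mathcal{L}_D + \mathcal{U}_{k_c} - \cdots$ splits $g^{(1)}$ into a regular part (through $\mathcal{L}_D$) and a singular part carried by $\mathcal{U}_{k_c}$. The denominator $\mathcal{S}_D[\varphi_0]+\tau_{k_c}$ of $\mathcal{U}_{k_c}$ equals $\tfrac{1}{2\pi}(\log\omega + p_c)$ once $k_c = \omega\sqrt{\varepsilon_c\mu_c}$ and the definition \eqref{eq:nn2} of $\tau_k$ are inserted; expanding its reciprocal as the geometric series $\tfrac{2\pi}{\log\omega}\sum_{q\ge 0}(-p_c/\log\omega)^q$ produces exactly the $(\log\omega)^{-q}$ factors and the constant $p_c$ of the first sum in \eqref{ssf}, while the $\varphi_0$-projection of the source collapses to $E_2$. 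The regular part, combined with the static limits of $\mathcal{S}_{D,k}^{(1)}$ and $\mathcal{K}_{D,k}^{(1)}$ from \eqref{Sdk1}--\eqref{k1} acting on the resonant eigenfunctions $\varphi_j$ (using $\mathcal{K}_D^*\varphi_j=\lambda_j\varphi_j$, $\mathcal{S}_D\varphi_j=\widetilde{\mathcal{S}}_D\varphi_j$, and the Helmholtz identity \eqref{Helm opra}), reduces to $E_1$.

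I would then invert $\mathcal{A}_D(\omega)$ by its spectral decomposition. By \eqref{AD0} its eigenvalues are $\tau_j = (\tfrac{1}{\mu_c}-\tfrac{1}{\mu_m})(\lambda-\lambda_j)$ with $\lambda = \tfrac{\mu_m+\mu_c}{2(\mu_m-\mu_c)}$, so $\mathcal{A}_{D,0}^{-1}$ contributes the resonant factor $1/(\lambda-\lambda_j)$, the material prefactors combining into $\mu_c/(\mu_m-\mu_c)$. Writing $\psi^{(1)} = \sum_{l\in J}\tau_l(\omega)^{-1}(g^{(1)},\varphi_l)_{\mathcal{H}^*}\varphi_l$ plus the bounded $J^c$-contribution, and inserting the spectral form \eqref{den} of $\psi^{(0)}$ (which supplies the second resonant factor $1/(\lambda-\lambda_j)$ and the weight $\mathrm{i}k_m(d\cdot\nu,\varphi_j)_{\mathcal{H}^*}$), yields the double sum over $j,l\in J$. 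Finally I substitute $\psi^{(1)},\psi^{(0)}$ into \eqref{ssf1} and replace the exterior operators $\mathcal{S}_D^{k_m},\mathcal{D}_D^{k_m}$ evaluated on $\partial\Omega$ by their static leading parts $\widetilde{\mathcal{S}}_D,\mathcal{D}_D$, the errors feeding the $O(\omega\log\omega)$ remainder.

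The main obstacle is the order bookkeeping in the three coupled small parameters $\omega$, $1/\log\omega$, and $d=\mathrm{dist}(\lambda,\sigma(\mathcal{K}_D^*))$, which tend to zero simultaneously. Since each factor $1/(\lambda-\lambda_j)$ scales like $1/d$ and the eigenvalue perturbation \eqref{per lam} shifts every denominator by $O(\omega^2\log\omega)$, one must invoke the hypothesis $\omega^2\log\omega = o(d)$ to justify $\lambda-\lambda_j+O(\omega^2\log\omega) = (\lambda-\lambda_j)(1+o(1))$ and to decide which products of resonant factors and $\omega$-powers survive at leading order versus which feed the remainders $O(1/(d(\log\omega)^3))$, $O(\omega^2(\log\omega)^2/d^2)$, and $O(\omega^3(\log\omega)^3/d^3)$. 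Keeping the logarithmic geometric series, the double resonance of the $j,l$ sum, and the $\tau_k$-expansion mutually consistent to the stated truncation—so that nothing coarser than the displayed remainders is discarded—is the genuinely delicate step; the individual operator identities are routine given the lemmas above.
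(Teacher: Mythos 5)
Your proposal follows essentially the same route as the paper's proof: eliminating $\phi^{(1)}$ to reduce \eqref{phipsin} to $\mathcal{A}_D(\omega)[\psi^{(1)}]=g$, splitting the source via $(\mathcal{S}_D^{k_c})^{-1}=\mathcal{L}_D+\mathcal{U}_{k_c}+\cdots$ into the logarithmic-series/$E_2$ part and the regular/$E_1$ part, inverting $\mathcal{A}_D(\omega)$ spectrally over $J$ with the $J^c$ contribution bounded, and substituting into \eqref{ssf1} with the static replacements of $\mathcal{S}_D^{k_m},\mathcal{D}_D^{k_m}$. The key mechanisms you identify (the geometric expansion of $(\mathcal{S}_D[\varphi_0]+\tau_{k_c})^{-1}$ generating the $(-p_c)^{q-1}/(\log\omega)^q$ factors, the double resonance from $\psi^{(0)}$ and $\tau_l(\omega)^{-1}$, and the role of $\omega^2\log\omega=o(d)$ in controlling the denominator shifts) are exactly those of the paper's argument.
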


\begin{proof}
We reformulate the form of $\Upsilon_{k_m}$ in (\ref{Sk}) as follows
\begin{align*}
\Upsilon_{k_m}[\cdot]=\frac{1}{2\pi}\log\omega(\cdot,\varphi_0)_{\mathcal{H}^*}+A_m(\cdot,\varphi_0)_{\mathcal{H}^*},
\end{align*}
where the constant $A_m=\mathcal{S}_D[\varphi_0]+\chi(\partial D)+\frac{1}{2\pi}(\log (\sqrt{\varepsilon_m\mu_m})+\gamma-\log 2)-\frac{\mathrm{i}}{4}$. Similarly, we give the form of $\Upsilon_{k_c}[\cdot]$, where $A_c$ pairs are used for the constants in $\Upsilon_{k_c}[\cdot]$. The operator $\mathcal{U}_{k_c}$ in \eqref{invS} also have the following asymptotic formula:
\begin{align*}
\mathcal{U}_{k_c}=-\bigg{(}\frac{2\pi(\widetilde{\mathcal{S}}_D^{-1}[\cdot],\varphi_0)_{\mathcal{H}^*}\varphi_0}
{\log \omega}-\frac{2\pi p_c(\widetilde{\mathcal{S}}_D^{-1}[\cdot],\varphi_0)_{\mathcal{H}^*}\varphi_0} {(\log\omega)^2}+\frac{2\pi p_c^2(\widetilde{\mathcal{S}}_D^{-1}[\cdot],\varphi_0)_{\mathcal{H}^*}\varphi_0} {(\log\omega)^3}+O\bigg{(}\frac{1}{(\log\omega)^4}\bigg{)}\bigg{)}.
\end{align*}
Since $\omega^2\log\omega=o(d)$, the expansion of $\psi^{(0)}$ in (\ref{den}) can be obtained
\begin{align}\label{psik}
\psi^{(0)}=\sum_{j\in J}\frac{\mathrm{i}k_m(d\cdot\nu,\varphi_j)_{\mathcal{H}^*}\varphi_j}{\lambda-\lambda_j}+O(\omega)
+O\bigg{(}\frac{\omega^2}{d}\bigg{)}+O\bigg{(}\frac{\omega^3\log \omega}{d^2}\bigg{)}
:=\psi_{1}^{(0)}+\mathcal{R},
\end{align}
where $\mathcal{R}$ represents a higher order term.

As $\omega\rightarrow 0$, combined with (\ref{phipsin}), we have $\phi^{(1)}=(\mathcal{S}_D^{k_c})^{-1}(\mathcal{S}_D^{k_m}[\psi^{(1)}]-Q_1)$, and $Q_1=-h\frac{\partial u^i}{\partial \nu}+\mathcal{S}_{D,k_c}^{(1)}[\phi^{(0)}]
-\mathcal{S}_{D,k_m}^{(1)}[\psi^{(0)}]$, whereas the following equation hold for $\psi^{(1)}$
\begin{align}\label{psi1}
\mathcal{A}_D(\omega)[\psi^{(1)}]=g,
\end{align}
where $g=Q_2+\frac{1}{\mu_c}(\frac{1}{2}Id-(\mathcal{K}_D^{k_c})^*)(\mathcal{S}_D^{k_c})^{-1}[Q_1]$, $Q_2=-\frac{1}{\mu_m}(h\frac{\partial^2 u^i}{\partial\nu^2}-h'\frac{\partial u^i}{\partial T})+\frac{1}{\mu_c}\mathcal{K}_{D,k_c}^{(1)}[\phi^{(0)}]-\frac{1}{\mu_m}\mathcal{K}_{D,k_m}^{(1)}[\psi^{(0)}]$.

 In order to calculate $\psi^{(1)}$ of the (\ref{psi1}), we first calculate  $\phi^{(0)}=(\mathcal{S}_D^{k_c})^{-1}(\mathcal{S}_D^{k_m}[\psi^{(0)}]+u^i)$ of the $Q_1$.
 According to (\ref{psik}), (\ref{invS}) and the fact $\mathcal{L}_D(\chi(\partial D))=0$, $(\psi^{(0)}_1,\varphi_0)_{\mathcal{H}^*}=0$, by straightforward calculation, we have
\begin{align}\label{phi0}
\phi^{(0)}&=\psi^{(0)}_1+\sum_{q=1}^3\frac{2\pi\varphi_0(-p_c)^{q-1}}{(\log\omega)^q}+O\bigg{(}\frac{1}{(\log \omega)^4}\bigg{)}+\mathcal{R}.
\end{align}
Next, we can calculate the $\frac{1}{\mu_c}\big{(}\frac{1}{2}Id-(\mathcal{K}_D^{k_c})^*\big{)}(\mathcal{S}_D^{k_c})^{-1}[Q_1]$
of $g$. Since
\begin{align*}
\bigg{(}\frac{1}{2}Id-\mathcal{K}_D^*\bigg{)}\mathcal{U}_{k_c}=0,
\end{align*}
and according to (\ref{K^*}) and (\ref{invS}), it is evident that
\begin{align}\label{in22}
\frac{1}{\mu_c}\bigg{(}\frac{1}{2}Id-(\mathcal{K}_D^{k_c})^*\bigg{)}(\mathcal{S}_D^{k_c})^{-1}
=\frac{1}{\mu_c}\bigg{(}\frac{1}{2}Id-\mathcal{K}_D^*\bigg{)}\widetilde{\mathcal{S}}_D^{-1}+O(\omega^2\log \omega).
\end{align}
Combining (\ref{invS}), (\ref{Sdk1}), (\ref{in22}) and using the fact  $\mathcal{L}_D\Upsilon_{k_c}=0$, we deduce
\allowdisplaybreaks
\begin{align}\label{ll}
&\frac{1}{\mu_c}\bigg{(}\frac{1}{2}Id-(\mathcal{K}_D^{k_c})^*\bigg{)}(\mathcal{S}_D^{k_c})^{-1}[Q_1]\nonumber\\
=&\frac{1}{\mu_c}\bigg{(}\frac{1}{2}Id-\mathcal{K}_D^{k_c})^*\bigg{)}(\mathcal{S}_D^{k_c})^{-1}\bigg{(}-h\frac{\partial u^i}{\partial \nu}-\mathcal{S}_{D}^{k_c}[\tau h\phi^{(0)}]+h(\mathcal{K}^{k_c}_D)^*[\phi^{(0)}]+\mathcal{K}^{k_c}_D[h\phi^{(0)}]\nonumber\\
&+\mathcal{S}_{D}^{k_m}[\tau h\psi^{(0)}]
-h(\mathcal{K}^{k_m}_D)^*[\psi^{(0)}]-\mathcal{K}^{k_c}_D[h\psi^{(0)}]\bigg{)}\nonumber\\
=&\frac{1}{\mu_c}\bigg{(}\frac{1}{2}Id-\mathcal{K}_D^*\bigg{)}\widetilde{\mathcal{S}}_D^{-1}\bigg{(}-
(\widetilde{\mathcal{S}}_D+\Upsilon_{k_c})\bigg{(}\tau h\psi^{(0)}_1+\sum_{q=1}^3\frac{2\pi\tau h\varphi_0(-p_c)^{q-1}}
{(\log\omega)^q}\bigg{)}
+h\mathcal{K}_D^*[\psi^{(0)}_1]\nonumber\\
&+\sum_{q=1}^3\frac{2\pi (-p_c)^{q-1}h\mathcal{K}_D^*[\varphi_0]}{(\log\omega)^q}
+\mathcal{K}_D[h\psi^{(0)}_1]+\sum_{q=1}^3\frac{2\pi(-p_c)^{q-1}\mathcal{K}_D[h\varphi_0]}{(\log\omega)^q}
+(\widetilde{\mathcal{S}}_D+\Upsilon_{k_m})[\tau h\psi^{(0)}_1]\nonumber\\
&-h\mathcal{K}_D^*[\psi^{(0)}_1]-\mathcal{K}_D[h\psi^{(0)}_1]\bigg{)}
+O\bigg{(}\frac{1}{(\log \omega)^4}\bigg{)}+\mathcal{R}\nonumber\\
=&\frac{2\pi}{\mu_c}\bigg{(}\frac{1}{2}Id-\mathcal{K}_D^*\bigg{)}\bigg{(}\sum_{q=1}^3\frac{(-p_c)^{q-1}\tau h\varphi_0+\widetilde{\mathcal{S}}_D^{-1}[h\varphi_0/2]+\widetilde{\mathcal{S}}_D^{-1}\mathcal{K}_D[h\varphi_0])}
{(\log\omega)^q}\bigg{)}+O\bigg{(}\frac{1}{(\log \omega)^4}\bigg{)}+\mathcal{R}.
\end{align}

Then we need to calculate the asymptotic expansion of $Q_2$. First for item $\frac{d}{dt}\big{(}h\frac{d(\mathcal{S}_D^{k_c}[\phi^{(0)}])}{dt}\big{)}$, from (\ref{phi0}) and (\ref{series}), we deduce
\allowdisplaybreaks
\begin{align}\label{ds}
\frac{d}{dt}\bigg{(}h\frac{d(\mathcal{S}_D^{k_c}[\phi^{(0)}])}{dt}\bigg{)}
=\frac{d}{dt}\bigg{(}h\frac{d(\widetilde{\mathcal{S}}_D[\psi_1^{(0)}])}{dt}\bigg{)}+\mathcal{R}.
\end{align}
Similarly, we compute
\begin{align}\label{dsm}
\frac{d}{dt}\bigg{(}h\frac{d(\mathcal{S}_D^{k_m}[\psi^{(0)}])}{dt}\bigg{)}
=\frac{d}{dt}\bigg{(}h\frac{d(\widetilde{\mathcal{S}}_D[\psi^{(0)}_1])}{dt}\bigg{)}+\mathcal{R}\log\omega,
\end{align}
combining formula (\ref{k1}), (\ref{ds}) and (\ref{dsm}), we ascertain
\allowdisplaybreaks
\begin{align}\label{Q2}
Q_2=&-\frac{1}{\mu_m}\bigg{(}h\frac{\partial^2 u^i}{\partial \nu^2}-h'\frac{\partial u^i}{\partial T}\bigg{)}+\frac{1}{\mu_c}\mathcal{K}_{D,k_c}^{(1)}[\phi^{(0)}]-\frac{1}{\mu_m}\mathcal{K}_{D,k_m}^{(1)}[\psi^{(0)}]\nonumber\\
=&-\frac{1}{\mu_m}\bigg{(}h\frac{\partial^2 u^i}{\partial \nu^2}-h'\frac{\partial u^i}{\partial T}\bigg{)}+\frac{1}{\mu_c}\bigg{(}\tau h(\mathcal{K}_D^{k_c})^*[\phi^{(0)}]-(\mathcal{K}_D^{k_c})^*[\tau h\phi^{(0)}]+\frac{\partial (\mathcal{D}_D^{k_c}[h\phi^{(0)}])}{\partial \nu}\nonumber\\
&-\frac{d}{dt}\bigg{(}h\frac{d(\mathcal{S}_D^{k_c}[\phi^{(0)}])}{dt}\bigg{)}-k_c^2h\mathcal{S}_D^{k_c}[\phi^{(0)}]\bigg{)}
-\frac{1}{\mu_m}\bigg{(}\tau h(\mathcal{K}_D^{k_c})^*[\psi^{(0)}]-(\mathcal{K}_D^{k_c})^*[\tau h\psi^{(0)}]\nonumber\\
&+\frac{\partial (\mathcal{D}_D^{k_c}[h\psi^{(0)}])}{\partial \nu}
-\frac{d}{dt}\bigg{(}h\frac{d(\mathcal{S}_D^{k_c}[\psi^{(0)}])}{dt}\bigg{)}
-k_m^2h\mathcal{S}_D^{k_c}[\psi^{(0)}]\bigg{)}\nonumber\\
=&\frac{(\mu_m-\mu_c)}{\sqrt{\mu_m}\mu_c}\sum_{j\in J}\frac{\mathrm{i}k_m(d\cdot\nu,\varphi_j)_{\mathcal{H}^*}(\tau h\lambda_j\varphi_j-\mathcal{K}_D^*[\tau h\varphi_j]+\mathcal{D}_D[h\varphi_j]-\frac{d}{dt}\big{(}
h\frac{d(\widetilde{\mathcal{S}}_D[\varphi_j])}{dt}\big{)})}{\lambda-\lambda_j}\nonumber\\
&+\frac{2\pi}{\mu_c}\sum_{q=1}^3\frac{(-p_c)^{q-1}}{(\log \omega)^q}\bigg{(}\tau h\varphi_0/2-\mathcal{K}_D^*[\tau h\varphi_0]+\frac{\partial (\mathcal{D}_D [h\varphi_0])}{\partial \nu}\bigg{)}+O\bigg{(}\frac{1}{(\log \omega)^4}\bigg{)}\\
&+\mathcal{R}+\log\omega \mathcal{R}.\nonumber
\end{align}
Furthermore, owing to (\ref{ll}), (\ref{Q2}), we find
\begin{align}\label{f}
f=&Q_2+\frac{1}{\mu_c}\bigg{(}\frac{1}{2}Id-(\mathcal{K}_D^{k_c})^*\bigg{)}(\mathcal{S}_D^{k_c})^{-1}[Q_1]\nonumber\\
=&\frac{(\mu_m-\mu_c)}{\sqrt{\mu_m}\mu_c}\sum_{j\in J}\frac{\mathrm{i}\omega\sqrt{\varepsilon_m}(d\cdot\nu,\varphi_j)_{\mathcal{H}^*}(\tau h\lambda_j\varphi_j-\mathcal{K}_D^*[\tau h\varphi_j]+\mathcal{D}_D[h\varphi_j]-\frac{d}{dt}\big{(}h\frac{d(\widetilde{\mathcal{S}}_D[\varphi_j])}{dt}\big{)})}{\lambda-\lambda_j}\nonumber\\
&+\frac{2\pi}{\mu_c}\bigg{(}\frac{\partial( D_D[h\varphi_0])}{\partial \nu}+\sum_{q=1}^3\frac{(-p_c)^{q-1}}{(\log \omega)^q}\bigg{(}\frac{1}{2}Id-\mathcal{K}_D^*\bigg{)}(\widetilde{\mathcal{S}}_D^{-1}[h\varphi_0/2]
+\widetilde{\mathcal{S}}_D^{-1}\mathcal{K}_D[h\varphi_0])\nonumber\\
&+O\bigg{(}\frac{1}{(\log\omega)^4}\bigg{)}
+O\bigg{(}\frac{\omega\log\omega}{d}\bigg{)}+O\bigg{(}\frac{\omega^3(\log\omega)^2}{d^2}\bigg{)}\nonumber\\
:=&\frac{(\mu_m-\mu_c)}{\sqrt{\mu_m}\mu_c}\sum_{j\in J}\frac{\mathrm{i}\omega\sqrt{\varepsilon_m}(d\cdot\nu,\varphi_j)_{\mathcal{H}^*}E_1}{\lambda-\lambda_j}
+\frac{E_2}{\mu_c}\bigg{(}\sum_{q=1}^3\frac{(-p_c)^{q-1}}{(\log \omega)^q}\bigg{)}
+O\bigg{(}\frac{1}{(\log\omega)^4}\bigg{)}\\
&+O\bigg{(}\frac{\omega\log\omega}{d}\bigg{)}+O\bigg{(}\frac{\omega^3(\log\omega)^2}{d^2}\bigg{)}.\nonumber
\end{align}
Recalling (\ref{psi1}), and combined with (\ref{f}), (\ref{per lam}), we deduce
\begin{align}\label{psi(1)}
\psi^{(1)}=&\mathcal{A}_D^{-1}(\omega)[f]\nonumber\\
=&\sum_{j\in J}\frac{(f,\widetilde{\varphi}_j)_{\mathcal{H}^*}\widetilde{\varphi}_j}{\tau_j(\omega)}
+\mathcal{A}_D^{-1}(\omega)[P_{J^c}[f]]\nonumber\\
=&\frac{\mu_m}{\mu_m-\mu_c}\bigg{(}\sum_{q=1}^3\frac{(-p_c)^{q-1}}{(\log \omega)^q}\sum_{l\in J}\frac{(E_2,\varphi_l)_{\mathcal{H}^*}\varphi_l}{\lambda-\lambda_l}
\bigg{)}+\sum_{l\in J}\sum_{j\in J}\frac{\mathrm{i}k_m(d\cdot\nu,\varphi_j)_{\mathcal{H}^*}(E_1,\varphi_l)_{\mathcal{H}^*}\varphi_l}
{(\lambda-\lambda_j)(\lambda-\lambda_l)}\\
&+O\bigg{(}\frac{1}{d(\log \omega)^4 }\bigg{)}+O\bigg{(}\frac{\omega^2\log \omega}{d^2}\bigg{)}+O\bigg{(}\frac{\omega^3(\log \omega)^2}{d^3}\bigg{)}+O(\omega)\nonumber,
\end{align}
where the norm $\parallel \mathcal{A}_D^{-1}(\omega)[P_{J^c}(\omega)[f]]\parallel_{\mathcal{L}(\mathcal{H}^*(\partial D), \mathcal{H}^*(\partial D))}$ is uniformly bounded in $\omega$ (see \cite{Millien2017}). Here, $P_{J}$ is a projection operator with $P_{J}[\varphi_j]=\varphi_j$ for $j\in J$, and $P_{J^c}[\varphi_j]=0$ for $j\in J^c$. Thus we have $\mathcal{A}_D^{-1}(\omega)[P_{J^c}(\omega)[f]]=O(\omega)$.

Since $\mathcal{S}_D^{k_m}=\widetilde{\mathcal{S}}_D+\Upsilon_{k_m}+O(\omega^2\log \omega)$, $\mathcal{D}_D^{k_m}=\mathcal{D}_D+O(\omega^2\log \omega)$ and combining (\ref{psi(1)}), (\ref{psik}) and (\ref{ssf1}), a direct calculation shows that the theorem immediately follows.
\end{proof}

\begin{rem}
From Theorem \ref{thm3.4}, it is clear that the $l$-th mode in the expansion formula contains both $O\left(\frac{1}{\lambda-\lambda_l}\right)$, $O\left(\frac{1}{(\lambda-\lambda_l)(\lambda-\lambda_j)}\right)$ and
$O\left(\frac{1}{(\lambda-\lambda_l)^2}\right)$.  Therefore, for the sufficiently small loss ($\Im(\mu_c)\rightarrow0$), the $l$-th mode will exhibit a large peak if the plasmon resonance condition (\ref{plasmon}) is satisfied.
\end{rem}

\begin{rem}
Since the logarithmic singularity $\log|x-y|$ (as $x\rightarrow y$) and $\log k$ (as $k\rightarrow0$) in 2-D fundamental solution of Helmholtz equation, the logarithmic singularity are also inherited to the spectral expansion of shape sensitivity functional (\ref{ssf}). It is very different from the 3-D case and static electric field case \cite{DLZ2022}.
\end{rem}

According to Theorem \ref{thm3.4}, the upper bound estimate of the sensitivity functional can be derived easily, and we state the following corollary.
\begin{cor}\label{cor:1}
If the assumptions of Theorem \ref{thm3.4} are satisfied, and the mapping $SSF:H^1(\partial D)\rightarrow H^2(\partial \Omega)$, then the shape sensitivity functional satisfies
\begin{align*}
\| SSF(\partial D)\|_{H^2(\partial \Omega)}
&\leq C\bigg{(}\sum_{q=1}^3\frac{1}{(\log \omega)^q}\sum_{l\in J}\frac{1}{\lambda-\lambda_l}
+k_m\sum_{l\in J}\sum_{j\in J}\frac{|\partial D|^{\frac{1}{2}}(\|\tau\|_{C(\partial D)}+1)}{(\lambda-\lambda_l)(\lambda-\lambda_j)})\\
&+k_m\sum_{j\in J}\frac{|\partial D|^{\frac{1}{2}}(\|\tau\|_{C(\partial D)}+1)}{\lambda-\lambda_j}\bigg{)}+O\bigg{(}\frac{1}{d(\log \omega)^3}\bigg{)}+O\bigg{(}\frac{\omega^2(\log \omega)^2}{d^2}\bigg{)}\\
&+O\bigg{(}\frac{\omega^3(\log \omega)^3}{d^3}\bigg{)}+O{(}\omega\log \omega{)}.
\end{align*}
Moreover, when $d\cdot \nu=0$, then the following result holds
\begin{align*}
\parallel SSF(\partial D)\parallel_{H^2(\partial \Omega)}
&\leq C\bigg{(}\sum_{q=1}^3\frac{1}{(\log \omega)^q}\sum_{l\in J}\frac{1}{\lambda-\lambda_l}\bigg{)}+O\bigg{(}\frac{1}{d(\log \omega)^3}\bigg{)}+O\bigg{(}\frac{\omega^2(\log \omega)^2}{d^2}\bigg{)}\\
&+O\bigg{(}\frac{\omega^3(\log \omega)^3}{d^3}\bigg{)}+O(\omega\log \omega),
\end{align*}
where $C$ is a constant.
\end{cor}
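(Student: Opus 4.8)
The plan is to start from the spectral expansion \eqref{ssf} in Theorem~\ref{thm3.4} and estimate the $H^2(\partial\Omega)$-norm term by term via the triangle inequality. The expansion consists of three resonant sums over the index set $J$ together with the four remainder terms $O(1/(d(\log\omega)^3))$, $O(\omega^2(\log\omega)^2/d^2)$, $O(\omega^3(\log\omega)^3/d^3)$ and $O(\omega\log\omega)$; the latter pass through unchanged, so the substance of the proof is to bound the three sums. First I would observe that the scalar prefactors $\mu_c/(\mu_m-\mu_c)$, $(-p_c)^{q-1}$, and the denominators $|\lambda-\lambda_l|^{-1}$, $|\log\omega|^{-q}$ can all be pulled out of the $H^2(\partial\Omega)$-norm, reducing matters to uniform bounds on the functions $\widetilde{\mathcal{S}}_D[\varphi_l]$, $\widetilde{\mathcal{S}}_D[\tau h\varphi_j]$ and $\mathcal{D}_D[h\varphi_j]$ restricted to $\partial\Omega$, and on the $\mathcal{H}^*$-inner products $(E_2,\varphi_l)_{\mathcal{H}^*}$, $(E_1,\varphi_l)_{\mathcal{H}^*}$ and $(d\cdot\nu,\varphi_j)_{\mathcal{H}^*}$.

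For the layer-potential images I would exploit that $\partial\Omega$ is a fixed smooth curve lying at positive distance from $\partial D$, so the kernels of $\widetilde{\mathcal{S}}_D$ and $\mathcal{D}_D$ are smooth there; hence $\|\widetilde{\mathcal{S}}_D[\varphi_l]\|_{H^2(\partial\Omega)}$ and $\|\mathcal{D}_D[h\varphi_j]\|_{H^2(\partial\Omega)}$ are controlled by the densities in a weak norm, and since each $\varphi_l$ is $\mathcal{H}^*$-normalized these are uniformly $O(1)$. This is precisely the sense in which the mapping $SSF\colon H^1(\partial D)\to H^2(\partial\Omega)$ is well defined. The inner products are then handled by Cauchy--Schwarz in $\mathcal{H}^*(\partial D)$ together with $\|\varphi_j\|_{\mathcal{H}^*}=1$: the crucial geometric factor $|\partial D|^{1/2}$ arises from $|(d\cdot\nu,\varphi_j)_{\mathcal{H}^*}|\le \|d\cdot\nu\|_{\mathcal{H}^*}$ and the pointwise bound $|d\cdot\nu|\le 1$ (as $d\in\mathbb{S}^1$ and $\nu$ is the unit normal), giving $\|d\cdot\nu\|_{L^2(\partial D)}\le|\partial D|^{1/2}$, while the factor $(\|\tau\|_{C(\partial D)}+1)$ enters through the curvature-weighted pieces $\tau h\lambda_j\varphi_j$ and $\mathcal{K}_D^*[\tau h\varphi_j]$ appearing in $E_1$. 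Collecting the three sums with these bounds reproduces the stated estimate, and the $k_m$ prefactors on the second and third sums are exactly the $\mathrm{i}k_m=\mathrm{i}\omega\sqrt{\varepsilon_m\mu_m}$ factors carried by $(d\cdot\nu,\varphi_j)_{\mathcal{H}^*}$ in \eqref{den}.

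For the second assertion, when $d\cdot\nu=0$ the inner products $(d\cdot\nu,\varphi_j)_{\mathcal{H}^*}$ vanish, so the second and third sums---being the only terms carrying this factor---drop out entirely, leaving just the first sum and the remainders; this is the claimed sharpened bound. The main obstacle I anticipate is not the scalar bookkeeping but the rigorous treatment of the highest-order piece of $E_1$, namely $\frac{d}{dt}\!\big(h\frac{d(\widetilde{\mathcal{S}}_D\varphi_j)}{dt}\big)$, which involves a second tangential derivative and therefore lives in a negative-order Sobolev space. Pairing it in the $\mathcal{H}^*$-inner product---which by \eqref{inner} equals $-\langle\cdot,\widetilde{\mathcal{S}}_D[\cdot]\rangle_{-\frac{1}{2},\frac{1}{2}}$---requires integrating by parts to transfer the tangential derivatives onto the smoother factor $\widetilde{\mathcal{S}}_D[\varphi_l]$ and invoking the self-adjointness and smoothing of $\widetilde{\mathcal{S}}_D$, so that the pairing is finite and estimated in terms of $\|h\|_{C^1}$ and $\|\tau\|_{C(\partial D)}$ without loss of regularity. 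Once this term is tamed, all remaining estimates are routine applications of Cauchy--Schwarz and the boundedness of the Neumann--Poincar\'e and layer operators on $\mathcal{H}^*(\partial D)$.
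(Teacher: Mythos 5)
Your proposal is correct and follows essentially the same route the paper intends: the paper gives no explicit proof of Corollary~\ref{cor:1}, stating only that it "can be derived easily" from Theorem~\ref{thm3.4}, and your term-by-term triangle-inequality estimate of the spectral expansion \eqref{ssf} — pulling out the resonant denominators, bounding $(d\cdot\nu,\varphi_j)_{\mathcal{H}^*}$ by $|\partial D|^{1/2}$ via Cauchy--Schwarz, extracting $(\|\tau\|_{C(\partial D)}+1)$ from the curvature-weighted pieces of $E_1$ and of $\widetilde{\mathcal{S}}_D[\tau h\varphi_j]$, and observing that the second and third sums vanish when $d\cdot\nu=0$ — is exactly that derivation. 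Your additional care with the second-tangential-derivative term in $E_1$ via integration by parts in the $\mathcal{H}^*$ pairing is a sound refinement of a point the paper leaves implicit.
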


\begin{rem}
From Corollary \ref{cor:1}, we can see that the $H^2$-norm of shape sensitivity functional is controlled by the curvature and size of $\partial D$. Especially if the incident direction and normal vector are perpendiculars, i.e., $d\cdot \nu=0$, the upper bound of shape sensitivity functional is independent of the curvature and size of $\partial D$.
\end{rem}

\subsection{ Spectral expansion of a circle}
For the special case when the domain $D$  is a disk, we have that $\sigma(\mathcal{K}^*_D)=\{ 0,\frac{1}{2}\}$. In the previous analysis, we remove $0$ and $\frac{1}{2}$ in our hypothesis. In this subsection, we focus  on the spectral expansion of the shape sensitivity functional when $D$  is a disk. Assuming that $D$ is a disk of radius $r_0$ located at the origin, the perturbation boundary $\partial D_\epsilon$ is given by
\begin{align*}
\partial D_\epsilon=(r_0+\varepsilon h(t))\bigg{(} \begin{array}{c}
\cos t\\
\sin t
\end{array} \bigg{)},\ t\in[0,2\pi]   \}.
\end{align*}

From the Theorem \ref{thm1}, we have $u_\epsilon=u_0(x)+\epsilon u_1(x)+O(\epsilon^2)$, where $u_0$, $u_1$ fulfill (\ref{sca equ}) and (\ref{u1}),  respectively. We can obtain the explicit solution of (\ref{sca equ}) and (\ref{u1}) as follows
\begin{align*}
u_0(r,t)=
\begin{cases}
\sum\limits_{n\in \mathds{Z}}e^{\mathrm{i}n(\frac{\pi}{2}-t_d)}J_n(k_m r)e^{\mathrm{i}nt}+a_n H^{(1)}_n(k_m r)e^{\mathrm{i}nt}, \ r>r_0,\medskip\\
\sum\limits_{n\in \mathds{Z}}b_nJ_n(k_c r)e^{\mathrm{i}nt}, \ r\leq r_0,
\end{cases}
\end{align*}
and
\begin{align*}
u_1(r,t)=
\begin{cases}
\sum\limits_{n\in \mathds{Z}}c_n H^{(1)}_n(k_m r)e^{\mathrm{i}nt}, \ r>r_0,\medskip\\
\sum\limits_{n\in \mathds{Z}}d_nJ_n(k_c r)e^{\mathrm{i}nt}, \ r\leq r_0,
\end{cases}
\end{align*}
where $J_n, H^{(1)}_n$ are the Bessel function of order $n\in N$, the incidence direction $d=(\cos(t_d), \sin(t_d))$ and the plane wave has the following expansion
\begin{align*}
e^{\mathrm{i}kx\cdot d}=\sum_{n\in \mathbb{Z}}e^{\mathrm{i}n(\frac{\pi}{2}-t_d)}J_n(kr)e^{\mathrm{i}nt}.
\end{align*}
Then the constant $a_n$ and $b_n$ are given by
\begin{align}\label{an}
a_n&=\frac{e^{\mathrm{i}n(\frac{\pi}{2}-t_d)}\bigg{(}\frac{k_m}{\mu_m}J'_n(k_mr_0)J_n(k_c r_0)-\frac{k_c}{\mu_c}J'_n(k_cr_0)J_n(k_m r_0)\bigg{)}}{\frac{k_c}{\mu_c}J'_n(k_c r_0)H_n^{(1)}(k_m r_0)-\frac{k_m}{\mu_m}J_n(k_c r_0)H_n^{(1)'}(k_m r_0)},\medskip\\
\label{bn}
b_n&=\frac{e^{\mathrm{i}n(\frac{\pi}{2}-t_d)}\frac{k_m}{\mu_m}\bigg{(}J'_n(k_mr_0)H_n^{(1)}(k_m r_0)-J_n(k_mr_0)H_n^{(1)'}(k_m r_0)\bigg{)}}{\frac{k_c}{\mu_c}J'_n(k_c r_0)H_n^{(1)}(k_m r_0)-\frac{k_m}{\mu_m}J_n(k_c r_0)H_n^{(1)'}(k_m r_0)}.
\end{align}
Owing to the (\ref{an}), (\ref{bn}) and (\ref{u1}), a direct calculation show that
\begin{align*}
c_n&=\frac{e^{\mathrm{i}n(\frac{\pi}{2}-t_d)}\frac{k_m}{\mu_m}\bigg{(}J'_n(k_mr_0)H_n^{(1)}(k_m r_0)-J_n(k_mr_0)H_n^{(1)'}(k_m r_0)\bigg{)}}{\frac{k_c}{\mu_c}J'_n(k_c r_0)H_n^{(1)}(k_m r_0)-\frac{k_m}{\mu_m}J_n(k_c r_0)H_n^{(1)'}(k_m r_0)}\times\medskip\\
&\frac{\bigg{(}\frac{1}{\mu_c}\big{(}1-\frac{\mu_m}{\mu_c}\big{)}h(t)k_c^2[J'_n(k_cr_0)]^2-
(n(\mathrm{i}h'(t)-nh)\big{(}\frac{1}{\mu_m}-\frac{1}{\mu_c}\big{)}-h(t)\omega^2(\varepsilon_m-\varepsilon_c))
[J_n(k_mr_0)]^2\bigg{)}}{\frac{k_c}{\mu_c}J'_n(k_c r_0)H_n^{(1)}(k_m r_0)-\frac{k_m}{\mu_m}J_n(k_c r_0)H_n^{(1)'}(k_m r_0)},
\end{align*}
as $\omega\rightarrow 0$, according to the asymptotic expansion in \cite{SKS2000} with respect to $J_n(z), H_n^{(1)}(z)$ and its derivative, $c_n$ can be obtained by a direct calculation,
\begin{align*}
c_0&=\frac{\frac{2}{\mathrm{i}\pi r_0 \mu_m}\cdot O(\omega^2)}{-(\frac{2}{\pi r_0 \mu_m})^2+O(\omega^2\log \omega)}=O(\omega^2),\medskip \\
c_n&=e^{\mathrm{i}n(\frac{\pi}{2}-t_d)}\bigg{[}\omega^{2n}
\frac{(\frac{\mathrm{i}\pi\varepsilon_m^n\mu_m^{n+1}\mu_c}
{2^{2n+1}(\mu_c-\mu_m)})\bigg{(}\frac{1}{\mu_c}h(t)\frac{1}{[(n-1)!]^2}r_0^{2n-1}-
\frac{1}{\mu_m}n(\mathrm{i}h'-nh)r_0^{2n+1}\bigg{)}}{\lambda^2}\nonumber\\
&+\omega^{2(n+1)}\frac{\frac{\mathrm{i}\pi\varepsilon_m^n
\mu_m^{n+1}\mu_c^2(\varepsilon_m-\varepsilon_c)}{2^{2n+1}(\mu_m-\mu_c)^2}h(t) r_0^{2n+1})}{\lambda^2}\bigg{]}.
\end{align*}

Finally, we are ready to state our main result in the subsection.
\begin{thm}
Let domain $D$ be a disk, then the shape sensitivity functional $SSF(\partial D)$ has the following asymptotic expansion in the quasi-static regime:
\begin{align*}
 SSF(\partial D)&=\sum_{n\geq 1}\frac{e^{\mathrm{i}n(\pi/2-t_d+t)}(n-1)!\varepsilon_m^{\frac{n}{2}}
\mu_m^{\frac{n}{2}+1}\mu_c}{4^n(\mu_c-\mu_m)R_0^n}
\bigg{(}\omega^{n}\frac{\frac{h r_0^{2n-1}}{\mu_c[(n-1)!]^2}-\frac{1}{\mu_m}n(\mathrm{i}h'-nh)r_0^{2n+1}}{\lambda^2}\nonumber\\
&+\omega^{n+1}\frac{\frac{\mu_c(\varepsilon_m-\varepsilon_c)}{\mu_m-\mu_c}hr_0^{2n+1}}{\lambda^2}\bigg{)} +O(\omega^2(\log \omega)).
\end{align*}
Moreover, the following estimation holds
\begin{align}\label{diskssf}
\parallel SSF(\partial D)\parallel_{C(\partial \Omega)}\leq\sum_{n\geq1}\bigg{(}\omega^n\frac{\widetilde{C}_n(r_0^{2n-1}+r_0^{2n+1})}{\lambda^2}
+\omega^{n+1}\frac{\widetilde{D}_nr_0^{2n+1}}{\lambda^2}\bigg{)}+O(\omega^2(\log \omega)),
\end{align}
where $\widetilde{C}_n, \widetilde{D}_n$ depend on $n,\varepsilon_c, \mu_c, \varepsilon_m, \mu_m, R_0, \parallel h\parallel_{C^1}$.
\end{thm}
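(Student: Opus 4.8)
The plan is to exploit the fact that, when $D$ is a disk, everything is diagonalized by the Fourier basis $\{e^{\mathrm{i}nt}\}_{n\in\mathds{Z}}$, so the shape sensitivity functional can be read off mode by mode. First I would record that, by Definition \ref{de1} together with the representation \eqref{u1} and the field expansion $u_\epsilon=u_0+\epsilon u_1+O(\epsilon^2)$ from Theorem \ref{thm1}, one has $SSF(\partial D)=u_1|_{\partial\Omega}$. Since $\partial\Omega$ is the circle of radius $R_0>r_0$ it lies in the exterior region, so the explicit representation of $u_1$ yields
\[
SSF(\partial D)(R_0,t)=\sum_{n\in\mathds{Z}} c_n\, H^{(1)}_n(k_m R_0)\, e^{\mathrm{i}nt},
\]
with the coefficients $c_n$ already computed above. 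The theorem then reduces to inserting the small-argument asymptotics of the Hankel functions evaluated at $k_m R_0=\omega\sqrt{\varepsilon_m\mu_m}R_0\to0$ and collecting powers of $\omega$.

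Next I would invoke the behavior of $H^{(1)}_n$ near the origin (as in \cite{SKS2000}): for $n\geq 1$, $H^{(1)}_n(z)=-\tfrac{\mathrm{i}(n-1)!}{\pi}(2/z)^n\big(1+O(z^2)\big)$, while the zeroth mode carries the logarithmic leading term $H^{(1)}_0(z)=\tfrac{2\mathrm{i}}{\pi}\log z+O(1)$. The zeroth mode is harmless: since $c_0=O(\omega^2)$ and $H^{(1)}_0(k_m R_0)=O(\log\omega)$, its contribution is $O(\omega^2\log\omega)$ and is absorbed into the stated remainder. For $n\geq1$, inserting the explicit two-term expansion of $c_n$ and the singular part $H^{(1)}_n(k_m R_0)\sim\omega^{-n}$ and collecting powers of $\omega$ reproduces the two displayed terms of orders $\omega^n$ and $\omega^{n+1}$ together with their prefactors, the phases combining as $e^{\mathrm{i}n(\pi/2-t_d)}e^{\mathrm{i}nt}=e^{\mathrm{i}n(\pi/2-t_d+t)}$. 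Here the product of the explicit constants collapses, after using $\varepsilon_m^n(\varepsilon_m\mu_m)^{-n/2}=\varepsilon_m^{n/2}\mu_m^{-n/2}$ and the cancellation $\mathrm{i}\cdot(-\mathrm{i})=1$, to the prefactor $\tfrac{(n-1)!\varepsilon_m^{n/2}\mu_m^{n/2+1}\mu_c}{4^n(\mu_c-\mu_m)R_0^n}$ displayed in the statement; carrying the $O(z^2)$ correction of the Hankel expansion against the two leading orders of $c_n$ shows that every omitted cross term is of order $\omega^2\log\omega$ or smaller, which gives the asserted error.

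Finally, the norm bound \eqref{diskssf} follows by passing to absolute values in the mode expansion. On the circle $\partial\Omega$ one has $\|e^{\mathrm{i}n(\pi/2-t_d+t)}\|_{C(\partial\Omega)}=1$, and the only $t$-dependence of the numerators enters through $h(t)$ and $h'(t)$, both controlled by $\|h\|_{C^1}$. Applying the triangle inequality over the modes and grouping the material parameters, $R_0$, the powers of two and $\|h\|_{C^1}$ into constants $\widetilde{C}_n$ (multiplying $r_0^{2n-1}+r_0^{2n+1}$, arising from the $\omega^n$ term) and $\widetilde{D}_n$ (multiplying $r_0^{2n+1}$, arising from the $\omega^{n+1}$ term) yields the stated estimate.

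The main technical obstacle will be the bookkeeping of the Bessel--Hankel asymptotics: one must simultaneously track the singular $z^{-n}$ part and the first correction of $H^{(1)}_n(k_m R_0)$ against the two leading orders of $c_n$ (which themselves already required the delicate near-origin expansions of $J_n$, $H^{(1)}_n$ and their derivatives), and verify that all contributions not appearing in the statement are genuinely $O(\omega^2\log\omega)$ uniformly in $t$. The $n=0$ mode, which carries the logarithmic singularity peculiar to two dimensions, must be isolated and handled separately so that it does not pollute the leading algebraic orders.
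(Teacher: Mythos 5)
Your proposal follows essentially the same route as the paper: the paper likewise represents $u_1$ in the exterior as $\sum_n c_n H^{(1)}_n(k_m r)e^{\mathrm{i}nt}$, computes the $c_n$ explicitly from the transmission conditions, isolates the $n=0$ mode (where $c_0=O(\omega^2)$ meets the logarithmic Hankel singularity to give the $O(\omega^2\log\omega)$ remainder), and obtains the theorem by inserting the small-argument asymptotics of $H^{(1)}_n(k_m R_0)$ and taking absolute values mode by mode for the norm bound. Your plan is correct and matches the paper's argument in both structure and detail.
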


\begin{rem}
Notice that, from Condition \ref{1}, the essential spectrum of Neumann-Poincar\'e operator for general smooth domain $0$ is excluded. However, when $D$ is a disk, the leading term of spectral expansion of SSF is $\omega/\lambda^2$, which is quite distinct from the general case. When $D$ is a general domain (non-disk), the leader term is $\frac{1}{(\log\omega)\cdot(\lambda-\lambda_l)}$ (see Theorem \ref{thm3.4}).
\end{rem}

\section{ Hierarchical Bayesian and Laplace approximation}\label{sect:4}

In this section, we discuss the numerical issues for shape reconstruction. First, we suppose that $\partial D$ is a starlike boundary curve for the origin, i.e. there exists $q\in C^2[0,2\pi]$ such that
\begin{align*}
\partial D=\{\vec{q}(t)=q(t)\bigg{(} \begin{array}{c}
\cos t\\
\sin t
\end{array} \bigg{)},\ t\in[0,2\pi]   \}.
\end{align*}
Note that without change of notation we rewrite $\mathcal{F}(q)=u^s|_{\partial\Omega}$ of (\ref{eq:ip1}).

One way to overcome the ill-posedness of the inverse problem is to use the Tikhonov regularization method, which proposes to augment the cost functional with a quadratic term, i.e.,
\begin{align}
\label{variational reg}
J[q]:=\frac{1}{2}\parallel \mathcal{F}(q)-u^{s,\delta}\parallel_2^2+\frac{\mu}{2}\parallel q \parallel_{L^2[0,2\pi]}^2,
\end{align}
where $u^{s,\delta}=(u_1^{s,\delta},u_2^{s,\delta},\cdots,u_n^{s,\delta})$ signifies the discrete measurement data on $\partial\Omega$, $\|\cdot\|_2$ denotes the $2$-norm in $\mathbb{R}^n$, and $\mu$ is the regularization parameter. Moreover, the measurement data $u^{s,\delta}$ and the exact data $u^{s}$ satisfies $\|u^{s,\delta}-u^{s}\|_2\leq\delta$.

In this paper, we apply the Levenberg-Marquardt method \cite{Doicu2010, Hanke1997} to find the minimizer of $J[q]$, which is essentially a variant of the Gauss-Newton iteration. We suppose that $q^*$ is an approximation of $q$, then the nonlinear mapping $\mathcal{F}$ in (\ref{variational reg}) can be replaced approximatively by its linearization around $q^*$. Thus, minimizing (\ref{variational reg}) can be seen to minimize
\begin{align*}
J[q]:=\frac{1}{2}\parallel \mathcal{F}'(q^*)\delta q-(u^{s,\delta}-\mathcal{F}(q^*))\parallel_2^2+\frac{\mu}{2}\parallel \delta q \parallel_{L^2[0,2\pi]}^2,
\end{align*}
where, $\delta q=q-q^*$.

We know that the regularization method is representative of deterministic inverse solution techniques and can only get point estimates of the solution. To obtain a more statistical description of all possible solutions, we also interpret the inverse problem from a Bayesian perspective. In the classical Bayesian theory, the observation error $\xi$ is assumed to be an independent and identically distributed Gauss random vector with mean zero, the covariance matrix $B=\delta^2 I$ ($I$ is the unit matrix), the minimization functional (\ref{variational reg}) can be rewritten as follows
\begin{align*}
J[q]&\propto \frac{1}{2\delta^2}\parallel \mathcal{F}(q)-u^{s,\delta}\parallel_2^2+\frac{\mu}{2\delta^2}\parallel q\parallel_{L^2[0,2\pi]}^2\\
&=\frac{1}{2}\parallel\mathcal{F}(q)-u^{s,\delta} \parallel_{B}^2+\frac{\eta}{2}\parallel q\parallel_{L^2[0,2\pi]}^2\\
&=:J_B(q),
\end{align*}
where $\eta=\frac{\mu}{\delta^2}$, $\parallel \cdot\parallel_B$ is a covariance weighted norm given by $\parallel\cdot\parallel_{B}=\parallel\delta^{-1}I\cdot\parallel_{2}$. Moreover, the radial function $q(t)$ of its unknown inner boundary curve $\partial D$ is approximated by trigonometric series
\[
q(t)\approx \sum_{k=0}^m a_k\cos kt +\sum_{k=1}^m b_k\sin kt, \ \ \ 0\leq t\leq 2\pi,
\]
where $m\in\mathbb{N}$ and the vector $q=(a_0, ..., a_m, b_1,...,b_m)\in \mathbb{R}^{2m+1}$. The notation $q$ is also used to denote the vector $(a_0, ..., a_m, b_1,...,b_m)^T$, while the minimizer of $J_B$ defines the maximum a posteriori estimator
\begin{align}
\label{MAP}
q_{\mathrm{MAP}}=\arg \min_{q} J_B(q).
\end{align}

In almost all inverse problems, the selection of regularization parameters plays a crucial role. The regularization method heavily depends on the choice of the regularization parameter. We introduce the hierarchical Bayesian model in this paper \cite{WZ2005,KS2005,DZ2021}, which provides an effective way to achieve an automatic and flexible selection of the regularization parameter $\mu$ or $\eta$. The main idea is to treat the regularization parameter as a random variable with its own priors, such as the gamma distribution for $\eta$. Then the posterior density function can be written as
\begin{align}
\label{posterior with hyper}
p(q,\eta|{u^{s,\delta}})\propto \mathrm{exp}\left(-\frac{1}{2}\parallel \mathcal{F}(q)-u^{s,\delta}\parallel_B^2\right)\eta^{(2m+1)/2}\mathrm{exp}(-\frac{\eta}{2}q^{T}q)
\eta^{\alpha_0-1}\mathrm{e}^{-\beta_0\eta},
\end{align}
where $(\alpha_0, \beta_0)$ is the parameter pair of the Gamma distribution.

Then the maximum a posteriori estimator (MAP point) of the posterior density function (\ref{posterior with hyper}) is essentially derived as the minimizer of the functional
\begin{align}
\label{J hyper}
J_B(q,\eta)=\frac{1}{2}\parallel \mathcal{F}(q)-u^{s,\delta}\parallel^2_B+\frac{\eta}{2} q^{T}q +\beta_0\eta-(\frac{2m+1}{2}+\alpha_0-1)\mathrm{ln}\eta.
\end{align}
The functional $J_B(q,\eta)$ in (\ref{J hyper}) is called the augmented Tikhonov regularization. To find the minimizer of augmented Tikhonov regularization and achieve an automatic selection of regularization parameters, we solve the minimizer of (\ref{J hyper}) by alternating iterative, i.e., fixed $\eta_\mathrm{z}$ (respectively, $q_z$), update the $q_{z+1}$ (respectively, $\eta_{z+1}$), and iterate $q_{z}$ using  Levenberg-Marquardt method for each step.

\begin{algorithm}
\small
\caption{alternating iteration method for solving the variational problem (\ref{J hyper}).}
      1:~~Choose $q_0, \eta_0, \alpha_0, \beta_0$, and set $z=0$;\\
      ~2:~~Solve the direct problem $(\ref{sca equ})$ and determine the residual $F_z=u^{s,\delta}-\mathcal{F}(q_z)$;\\
      3:~~Compute the Jacobian matrix $G=\mathcal{F}'(q_z)$;\\
      ~4:~~Calculate $\delta q_z=(G^T G+\eta \delta^2 I)^{-1}(G^T F_z)$;\\
      ~5:~~Update the solution $q_z$ by $q_{z+1}=q_z+\delta q_z$, \\
      ~6:~~Update the parameter $\eta_{z+1}$ by
      \begin{eqnarray*}
      \small
      \label{Qresidual-eq3}
      \begin{split}
      \lambda_{z+1}=\frac{\frac{2m+1}{2}+\alpha_0-1}{q_{z+1}^Tq_{z+1}/2+\beta_0};
      \end{split}
      \end{eqnarray*}
      7:~~Increase $z$ by one and go to step 2, repeat the above procedure until a stopping criterion \\ $~~~~~$is satisfied.\\
   \label{algorithm-pgddf}
\end{algorithm}

The essence of the Laplace approximation is to replace the complicated posterior with the normal distribution located at the maximum posterior value $q_{\mathrm{MAP}}$. In fact, it is a linearization around the $\mathrm{MAP}$ point (cf. \cite{SSP}). It consists of approximating the posterior measure (or distribution) by
$\omega\approx N(q_{\mathrm{MAP}},C_{\mathrm{MAP}})$, where
\begin{align}
\label{C_map}
C_{\mathrm{MAP}}=(J''_B)^{-1}=(\frac{\mu}{\delta^2}I+\frac{1}{\delta^2}G^{T}G)^{-1},
\end{align}
here $G$ is the Jacobian matrix of the forward operator $\mathcal{F}$ at the point $q$. Notice that the covariance formula ($\ref{C_map}$) only uses the first order derivatives of $\mathcal{F}$. The concrete implementation steps of Laplace approximation \cite{Iglesias2013} are given in algorithm 2.

\begin{algorithm}
\small
\caption{Laplace approximation (LA) for sampling.}
      1:~~Compute $q_{\mathrm{MAP}}$ from $(\ref{MAP})$ by using Algorithm 1, and $C_{\mathrm{MAP}}$ from $(\ref{C_map})$, respectively;\\
      ~2:~~Compute the Cholesky factor $L$ of $C_{\mathrm{MAP}}$, i.e., $C_{\mathrm{MAP}}=LL^{T}$;\\
      ~3:~~For $j=\{1,...,N_e\}$, generate $q^{j}=q_{\mathrm{MAP}}+L^{T}\mathcal{B}^{j}$,
      where $\mathcal{B}^{j}\sim N(0,I)$.\\
   \label{algorithm-LA}
\end{algorithm}

\section{ Numerical results and discussions }
This section presents some numerical examples to illustrate the effectiveness and promising features of the proposed reconstruction scheme.

In all of our numerical examples, the measurement boundary curve $\partial\Omega$ is given by the circle of radius 1.5 and centered at the origin, that is $\partial \Omega=\{1.5(\cos t, \sin t ), 0\leq t\leq 2\pi\}$. Set the incident field $u^i=e^{\mathrm{i}k_m x\cdot d}$, where the direction $d=(\cos(\pi/3),\sin(\pi/3))$, and $\mu_m=\varepsilon_m=1, \varepsilon_c=2$. We solve the direct problem with $\mathrm{Nystr\ddot{o}m}$ method \cite{Kress1999}, and there are $2n=50$ grid points.  In addition, to avoid committing an inverse crime, the number of collocation points for obtaining the synthetic data was chosen to be different from the number of collocation points within the inverse solver.

In the iterative process, we choose a circle of radius 1 and centered at the origin as the initial guess. A finite difference method is used to calculate the Jacobian matrix $G$, and the maximum number of iteration steps is set to 100.  The number of samples $N_e$ is 10000 in the algorithm \ref{algorithm-LA}, and the following stopping rule is given
\[
E_z=\parallel q_z-q_{z-1}\parallel_{L^2}\leq 10^{-5}.
\]
The noisy measured data is generated by
\[
u^{s,\delta}=u^s(x)+\delta\frac{\xi}{\mid \xi\mid}, \ \ x\in\partial\Omega,
\]
where $u^s(x)$ is the exact data, $\delta$ indicates the noise level, and $\xi$ is the Gaussian random vector with a zero mean and unit standard deviation.

The approximate solution $\widetilde{q}_{N_e}$ by  LA algorithm compare to the exact solution $q(x)$ by computing the relative error
\[
e_\gamma=\frac{\parallel \widetilde{q}_{N_e}-q(x) \parallel_{L^2(\partial D)}}{\parallel q(x)\parallel_{L^2(\partial D)}}.
\]

\begin{exm}\label{exm-disk}
In this example, we consider the reconstruction of a circle object with
\[
q(t)=0.8, \ \ \ 0\leq t\leq2\pi,
\]
and we take $\alpha_0$=800, $\beta_0=0.01$ and $\eta_0=10$ as the initial values in the iterative algorithm \ref{algorithm-pgddf}.
\end{exm}

First, we investigate the variation of the mode of the scattering field with $\omega$ at a fixed position $x_0=(1.5,0)$ for Example \ref{exm-disk}. It is well known that the eigenvalues of $\mathcal{K}^*_D$ are \{0, 1/2\} for $D$ is a disk. If setting $\lambda_1=0$, from Section 2.2, the quasi-static plasmon resonance is defined by $\omega$ such that $\Re(\lambda(\omega))=\lambda_1$. In Fig.~\ref{examp1-w} (a) is compared the magnitude of the scattering field $|u^s(x_0)|$ for negative materials i.e., $\mu_c=-1+0.004\mathrm{i}$ and $\lambda=0+10^{-3}\mathrm{i}$ with the magnitude of the scattered field $\mu_c=5$ and $\lambda=0.75$ for normal materials. It is clear that as $\omega$ tends to zero, the corresponding scattering field becomes smaller. If $\omega$ is fixed, for negative materials $\mu_c=-1+0.004\mathrm{i}$, that is, the scattering field is always larger than the scattering field for $\mu_c=5$ and $\lambda=0.75$ when resonance occurs. It is concluded that when the resonance occurs, it enhances the intensity of the field, thus increasing the signal-to-noise ratio, which will have a beneficial effect on solving the inverse problem.

In Fig.~\ref{examp1-w}(b), when $\Re(\lambda(\omega))=0$ and the plasmon resonance frequency  $\omega=0.01$, we can see that scattering field gradually increases as the $\Im\lambda(\omega)$ decreases, in particular, the scattering field tends to flatten out after the imaginary part is small with $10^{-4}$. That is, when the loss of magnetic permeability $\mu_c$ is small enough, the scattered field reaches a maximum.

Next, we study the effect of different values $\lambda$ on the reconstruction. In Fig.~\ref{examp1-u} (a) and Fig.~\ref{examp1-u} (b), with  the same error level $\delta=0.01$, the inversion results of $\mu_c=-1+0.004\mathrm{i}$ are better than  $\mu_c=5$. Then, it can be obtained that the reconstruction effect of plasmon resonance is better than that of non-resonance. Even in larger error levels, the reconstructed results of plasmon resonance are still satisfactory, as can be seen in Fig.~\ref{examp1-u} (c) and Fig.~\ref{examp1-u} (d), at error levels $\delta=0.1$ and $\delta=0.3$, respectively.
\begin{figure}[htbp]
\centering
\subfigure[]{
    \label{fig:subfig:a}
  \includegraphics[width=2.3in, height=2.0in]{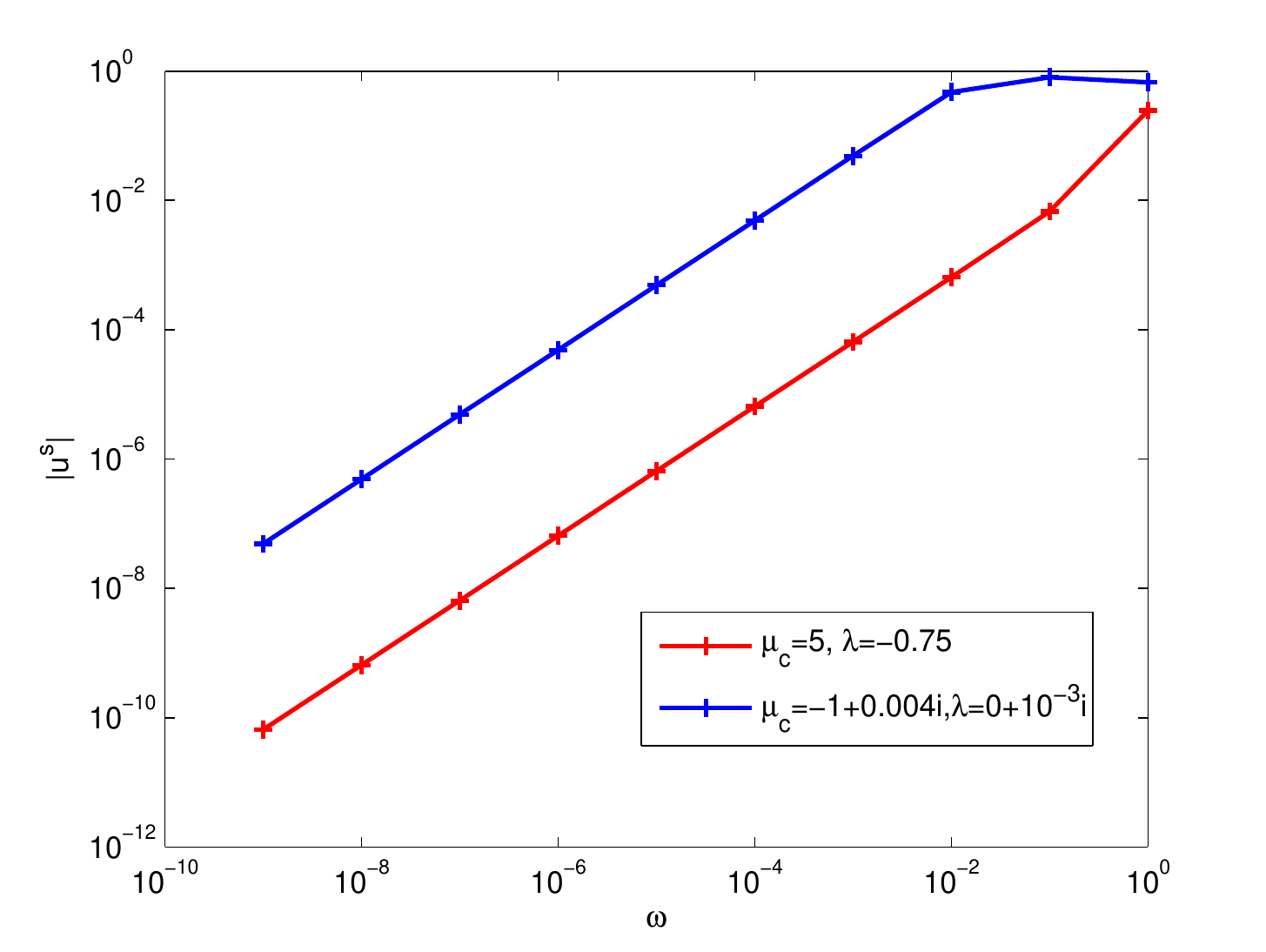}}
  \subfigure[]{
    \label{fig:subfig:b}
  \includegraphics[width=2.3in, height=2.0in]{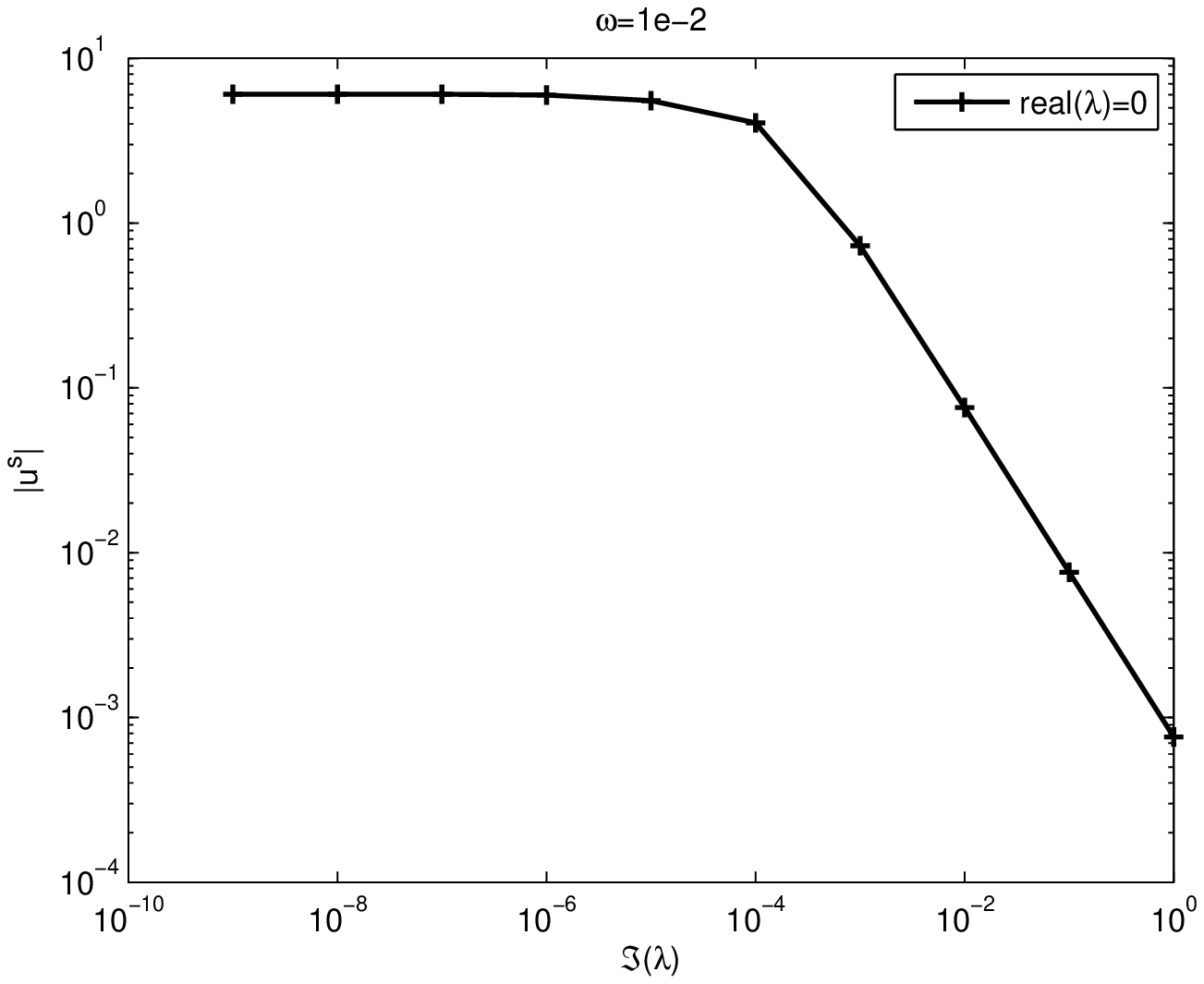}}
  \caption{ (a) The $|u^s|$ in Example \ref{exm-disk} for different $\omega$, (b) when $\Re(\lambda)=0$ and $\omega=0.01$, the result of the variation of $|u^s|$ with $\Im(\lambda)$.}
  \label{examp1-w}
\end{figure}

We explore the variation of $\|SSF(\partial B)\|_{L^2(\partial \Omega)}$  for different sizes of shape scaling factors $\zeta$. We first suppose that $B=\zeta D$ and set $h(x)=x_1+x_2$, then $|\partial B|^{\frac{1}{2}}=\zeta^{\frac{1}{2}}|\partial D|^{\frac{1}{2}}$ and $\|\tau\|_{C(\partial B)}=\frac{1}{\zeta}\|\tau\|_{C(\partial D)}$. In Table \ref{tab1-ssf}, we find that as $\zeta$ increases, the $\|SSF(\partial B)\|_{L^2(\partial \Omega)}$ increases with it. The result is consistent with that of (\ref{diskssf}), and
the upper bound of its $\|SSF(\partial B)\|_{L^2(\partial \Omega)}$ increases with the growth of $\zeta$.
\begin{figure}[htbp]
\centering
\subfigure[$\delta=0.01$]{
    \label{fig:subfig:a}
  \includegraphics[width=2.3in, height=2.0in]{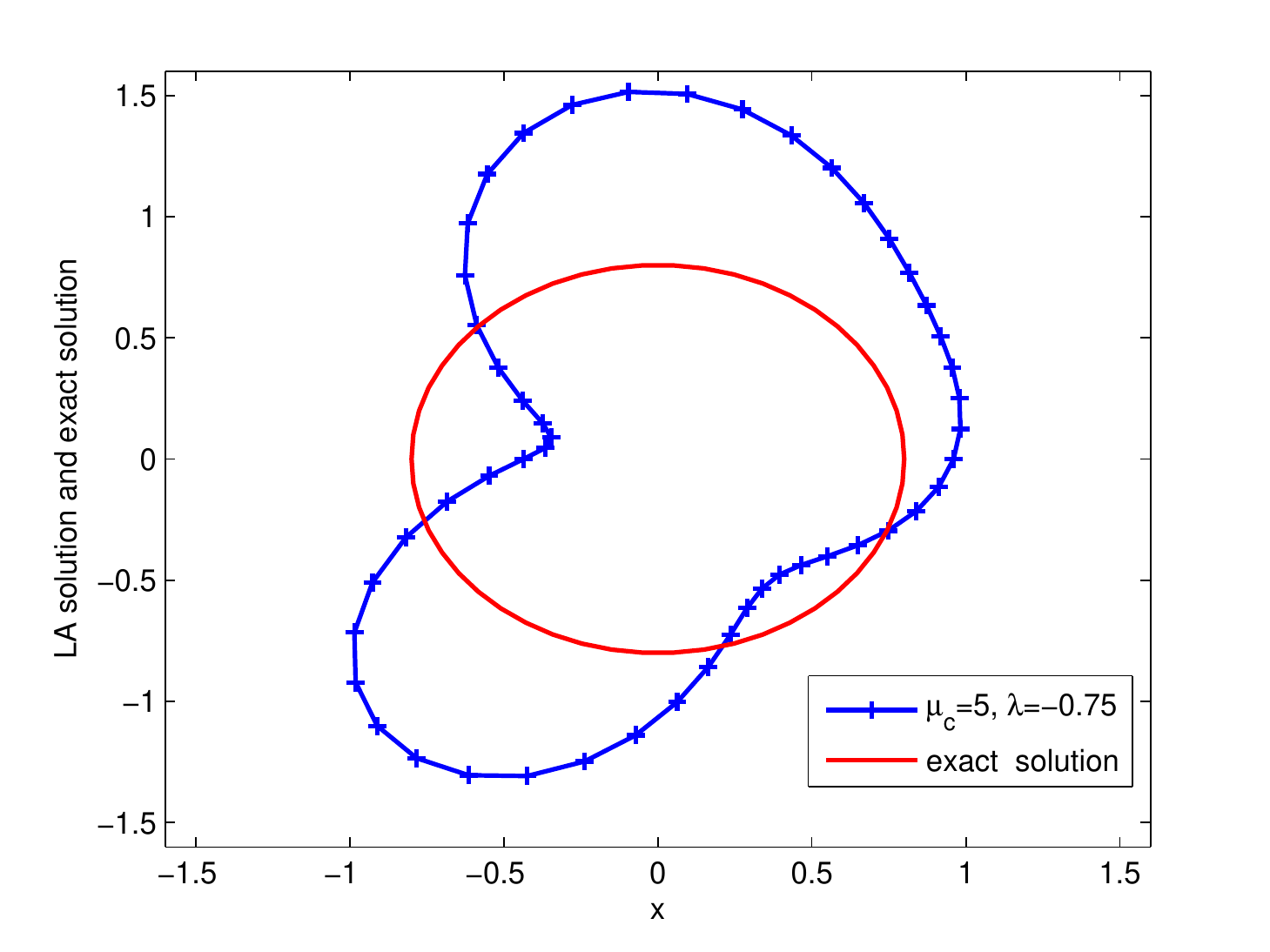}}
  \subfigure[$\delta=0.01$]{
    \label{fig:subfig:b}
  \includegraphics[width=2.3in, height=2.0in]{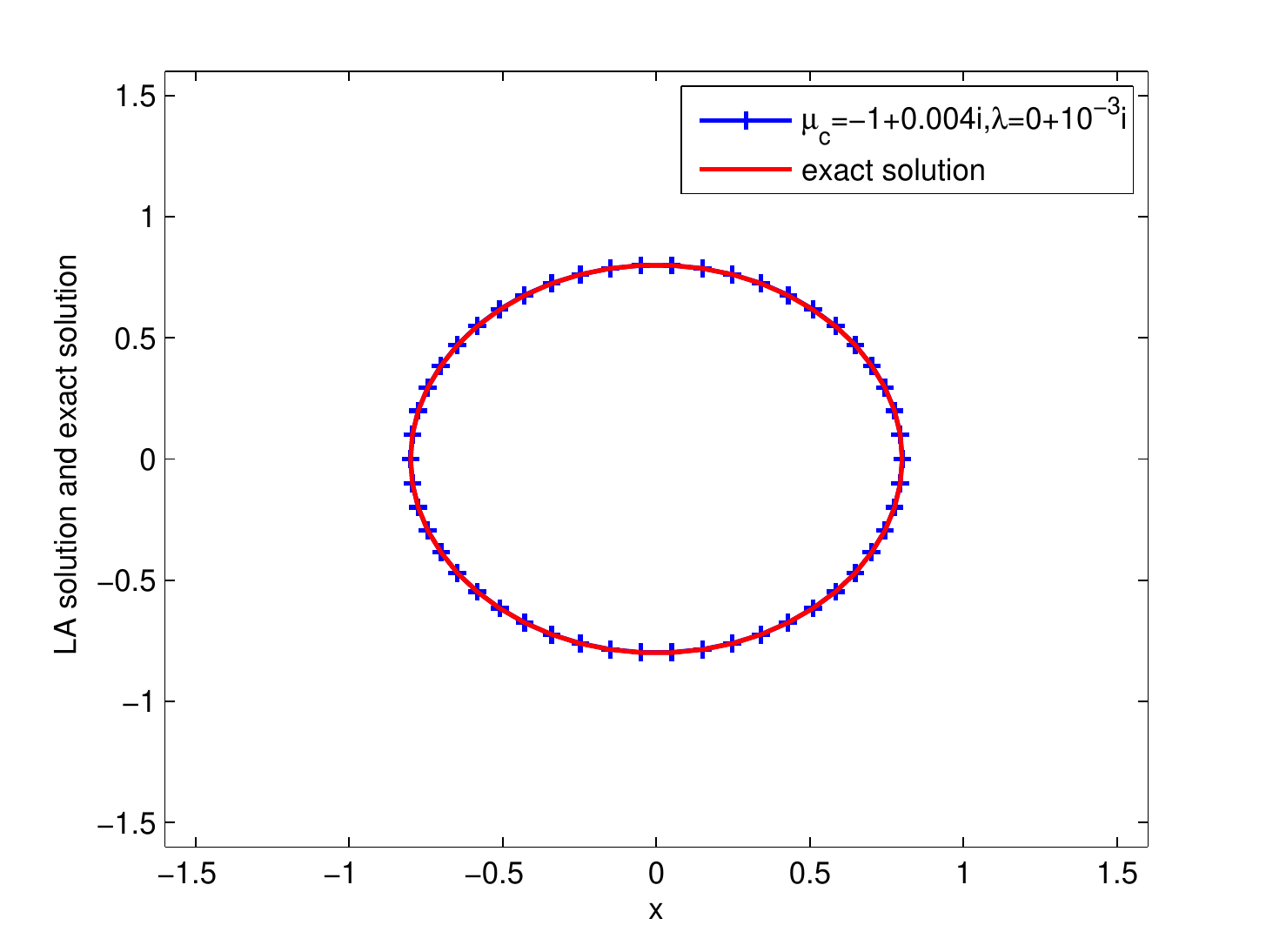}}\\
  \subfigure[$\delta=0.1$]{
    \label{fig:subfig:a}
  \includegraphics[width=2.3in, height=2.0in]{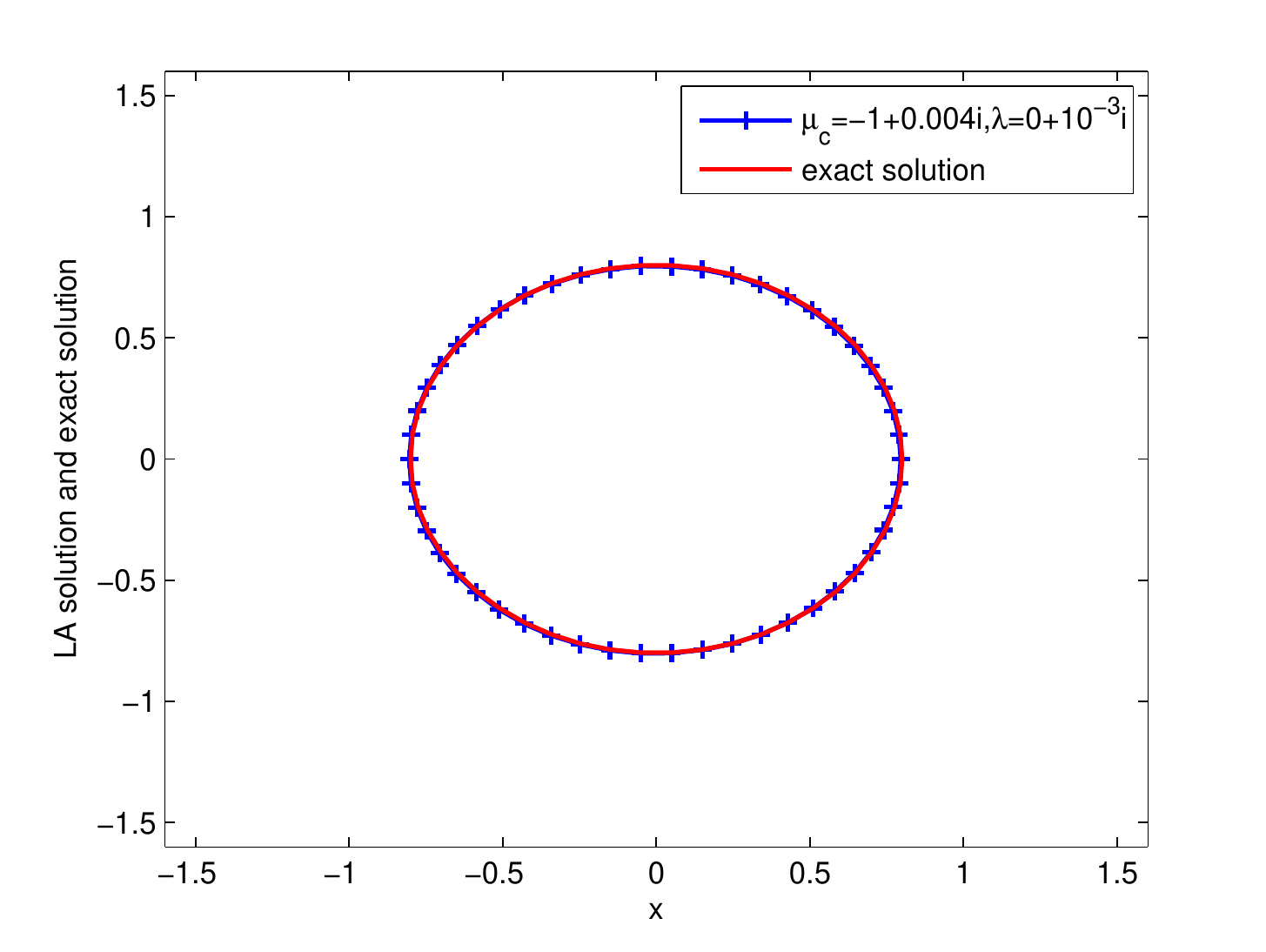}}
  \subfigure[$\delta=0.3$]{
    \label{fig:subfig:b}
  \includegraphics[width=2.3in, height=2.0in]{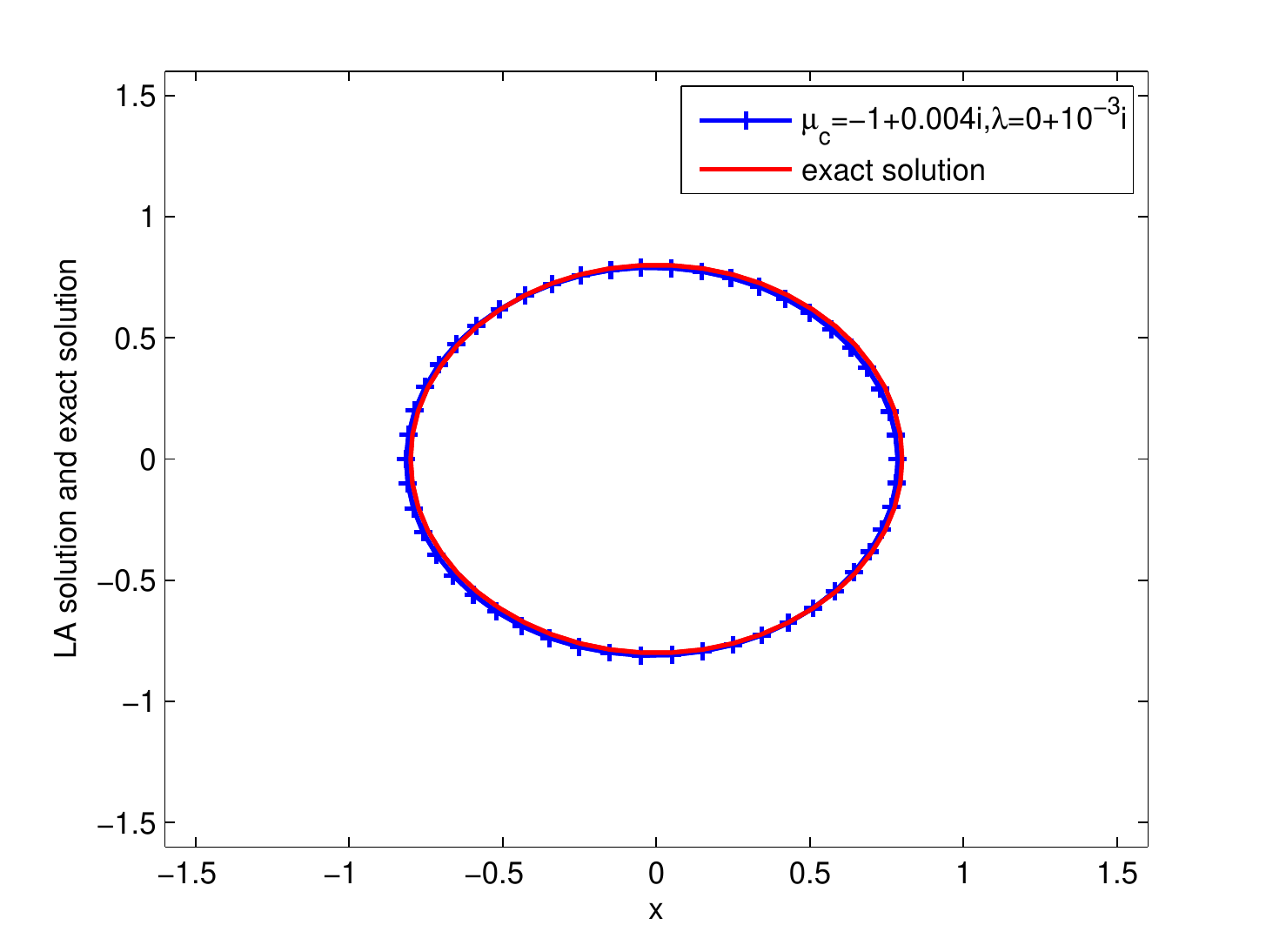}}

  \caption{Reconstruction of the shape in Example \ref{exm-disk} with  different noise level $\delta$. }
  \label{examp1-u}
\end{figure}

\begin{table}[htbp]
\centering
\caption{ The $\|SSF(\partial D)\|_{L^2(\partial \Omega)}$ of Example \ref{exm-disk} for  different $\zeta$. }
\begin{tabular}{c|c|c|c}
\Xhline{1pt}
  $\zeta$ &$|\partial B|^{\frac{1}{2}}$ &$\|\tau\|_{C(\partial B)}$ &$\|SSF(\partial D)\|_{L^2(\partial \Omega)}$\\
  \hline
 $0.50$ &$1.5852$ &$2.500$ &$2.293$\\
  \hline
  $0.67$  &$1.8306$ & $1.875$ &$3.103$  \\
 \hline
 $1$ &$2.2417$&$1.250$& $5.0$ \\
  \hline
  $1.1$ &$2.3514$& $1.136$& $5.698$ \\
  \hline
  $1.2$ &$2.4558$& $1.041$& $6.487$ \\
  \hline
\Xhline{1pt}
\end{tabular}
\label{tab1-ssf}
\end{table}

\begin{exm}\label{exm2}
In the second example, we consider that the inclusion is a peanut, and the polar radius of the peanut is parameterized by
\[
q(t)=\sqrt{\cos^2 t+0.26\sin^2 (t+0.5)},\ \ \ 0\leq t\leq2\pi,
\]
where we take $\alpha_0$=1000, $\beta_0=0.01$ and $\eta_0=1000$ as the initial values in the iterative algorithm \ref{algorithm-pgddf}.
\end{exm}

In Fig.~\ref{examp2-001}, for different $\mu_c=5$ (normal material) and $\mu_c=-0.4508+0.1058\mathrm{i}$ (plasmon resonance materials), it can be seen that the $|u^s|_{\partial \Omega}$ is one orders of magnitude higher than the normal material for Example \ref{exm2} when plasmon resonance occurs. Then we show the reconstruction results for different $\lambda$ and $\mu_c$ in Fig.~\ref{examp2-lamt}.  When $D$ is peanut and the plasma resonance frequency $\omega$ is 0.01, we take three different spectrum $\lambda_1=0.1856, \lambda_2=0.0393 ,\lambda_3=0.003$, respectively, i.e. $\Re(\lambda(\omega))=\lambda_1, \mu_c=-0.4508+0.1058\mathrm{i}$, $\Re(\lambda(\omega))=\lambda_2, \mu_c=-0.8541+0.0034\mathrm{i}$ and
$\Re(\lambda(\omega))=\lambda_3, \mu_c=-0.9392+0.3084\mathrm{i}$. It can be obtained that the inversion effect using plasmon resonance techniques is better than that of non-plasmon resonance of $\mu_c=5, \lambda=-0.75$, and even as the error level increases, the reconstruction remains perfect, see Table \ref{tab1}.

It follows from \cite{Banks2009} that the singular value decomposition of the sensitivity matrix plays a key role in uncertainty quantification. Let the singular value decomposition (SVD) of the sensitivity matrix (Jacobian matrix) $G$ of the forward operator at the true solution $q_{true}$ be denoted as
\begin{eqnarray*}
\label{Qresidual-eq}
\begin{split}
G(q_{true})=U \mathrm{diag}(s_i) V^T
\end{split}
\end{eqnarray*}
with strictly positive decreasing singular values $s_i$, and $U=[U_1\ U_2]$ is an $n\times n$ orthogonal matrix, with $U_1$ containing the first $2m+1$ columns of $U$ and $U_2$ containing the last $n-(2m+1)$ columns. The matrix $V$ is an $(2m+1)\times (2m+1)$ orthogonal matrix, $v_i$ and $u_i$ denote the $i$th columns of $V$ and $U$, respectively. Then the estimator $q$ has the following form:
\begin{align}
\label{SVD}
q=q_{true}+V\Lambda^{-1}U_1^T\widetilde{\xi}=q_{true}+\sum_{i=1}^{2m+1}\frac{1}{s_i}v_iu_i^T\widetilde{\xi}.
\end{align}
From $(\ref{SVD})$, it can be seen that the instability of the inverse problem is caused by the small singular values. In Example \ref{exm2}, the singular values of the sensitivity matrix are calculated at different values of $\lambda$ in Fig.~\ref{examp2-s}. The singular values of the sensitivity matrix where plasmon resonance occurs are all larger than the singular values without resonance. We can see that plasmon resonance can correct the small singular values of the sensitivity matrix $G$, thus effectively reducing the instability of the solution.

Plasmon resonances occurs to enhance the scattering field, improving the signal-to-noise ratio and as $dist(\lambda,\sigma(\mathcal{K}_D^*))\rightarrow 0$, then $\| G\|_{2}=s_{max}$ tends to blow up, see Table \ref{tab2}, which is consistent with the conclusion of Theorem \ref{thm3.4}. However, the condition number of sensitivity matrix $G$  becomes worse with enhanced plasmon resonance (reducing the imaginary part of $\lambda$, making the $dist(\lambda,\sigma(\mathcal{K}_D^*))\rightarrow 0$ ). In the iterative algorithm to solve the matrix of $(G^TG + \eta\delta^2 I)^{-1}$, the condition number of $G$ is too large, then the Fisher information matrix $(G^TG)$ inherits a large condition number, even if the penalty item is increased, it will be invalid, and it cannot alleviate the bad condition number of Fisher information matrix, which leads to the failure of the reconstructed shape.

\begin{figure}[htbp]
\centering
\subfigure[]{
    \label{fig:subfig:a}
  \includegraphics[width=2.3in, height=2.0in]{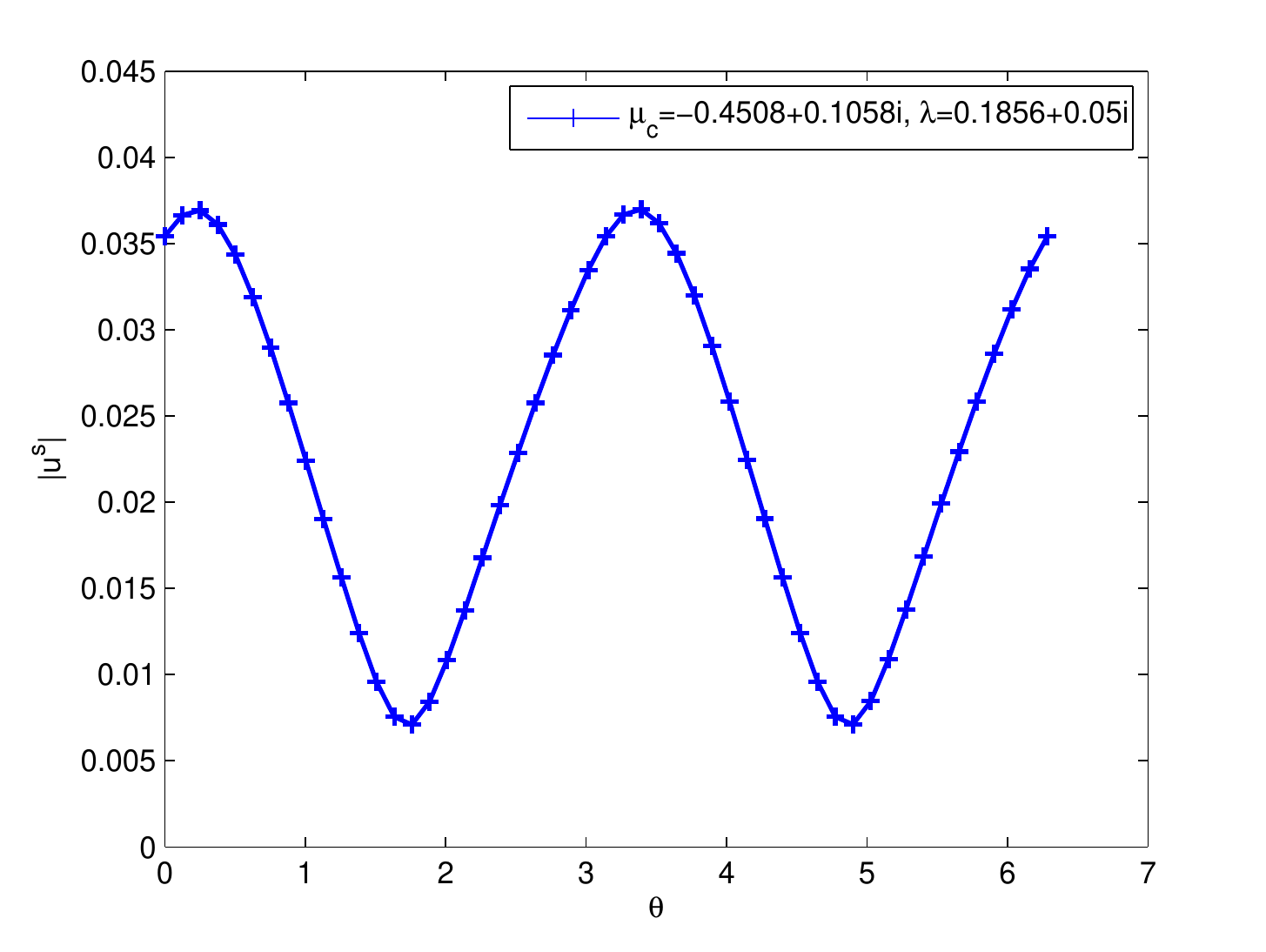}}
  \subfigure[]{
    \label{fig:subfig:b}
  \includegraphics[width=2.3in, height=2.0in]{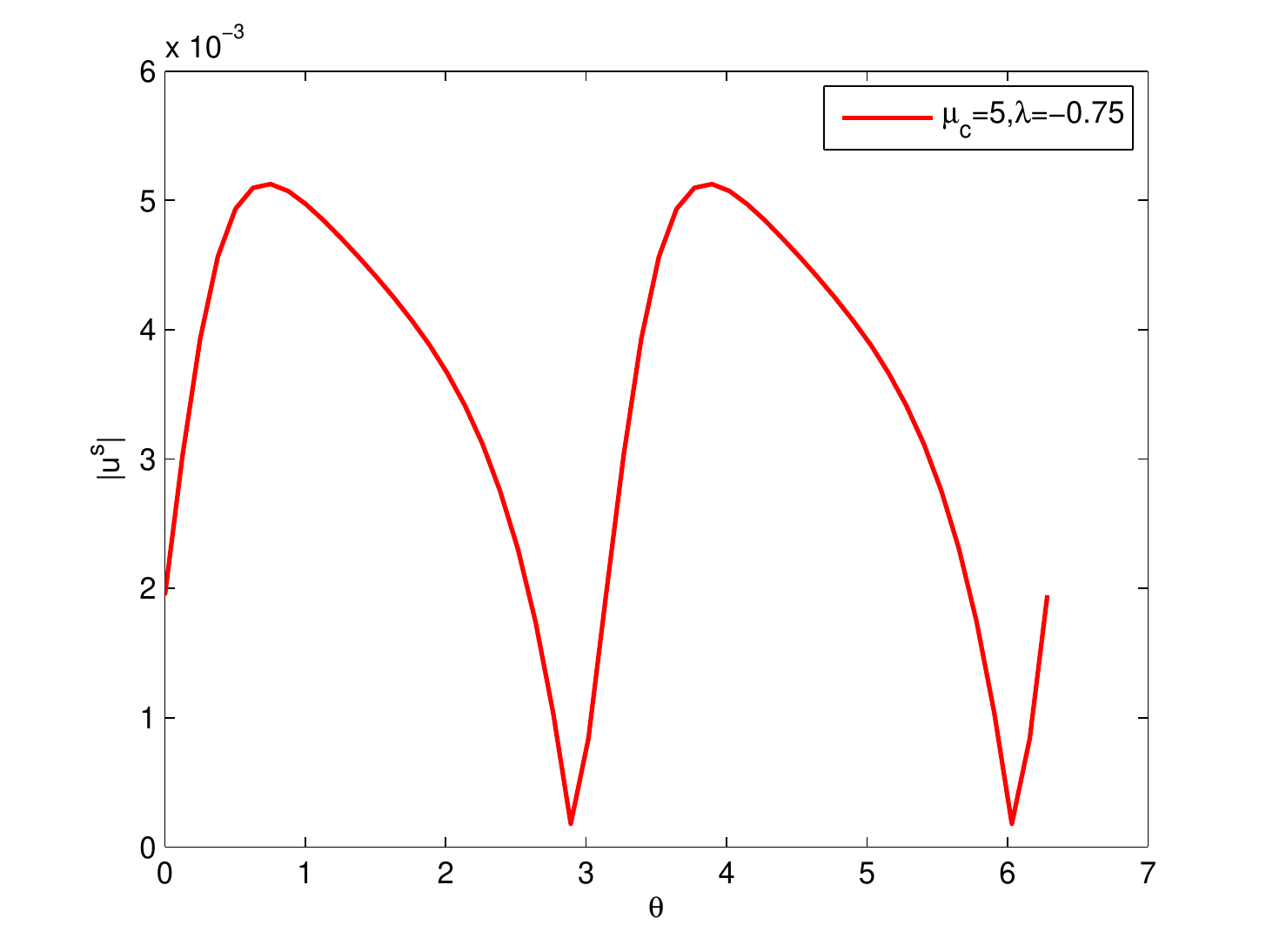}}
  \caption{The $|u^s|$ on a circle of radius 1.5 in Example \ref{exm2} for different $\mu_c$ (a) $\mu_c=-0.4508+0.1058\mathrm{i}$, (b) $\mu_c=5$. }
  \label{examp2-001}
\end{figure}

\begin{figure}[htbp]
\centering
\subfigure[$\delta=0.001$]{
    \label{fig:subfig:a}
  \includegraphics[width=2.3in, height=2.0in]{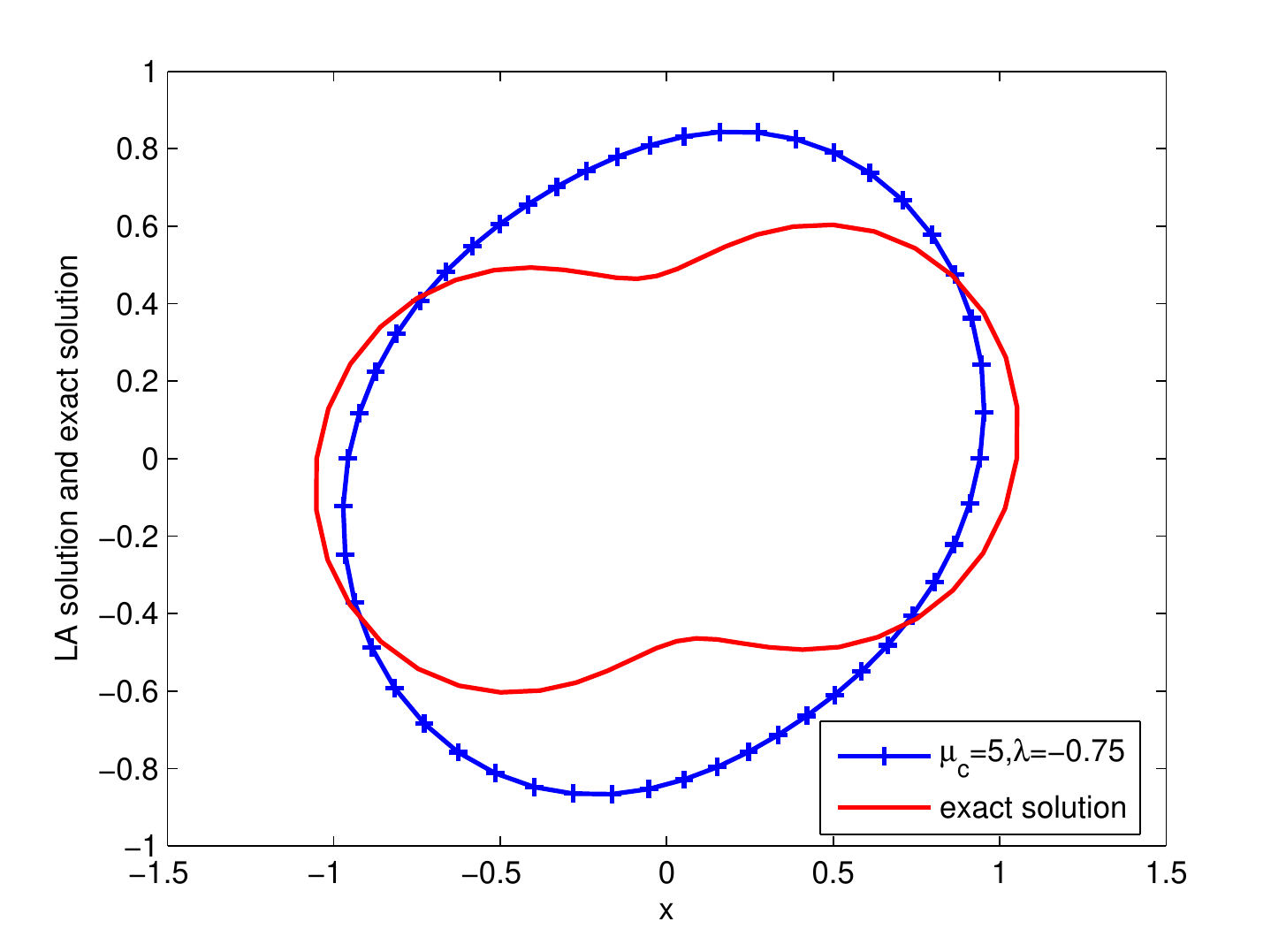}}
  \subfigure[$\delta=0.001$]{
    \label{fig:subfig:b}
  \includegraphics[width=2.3in, height=2.0in]{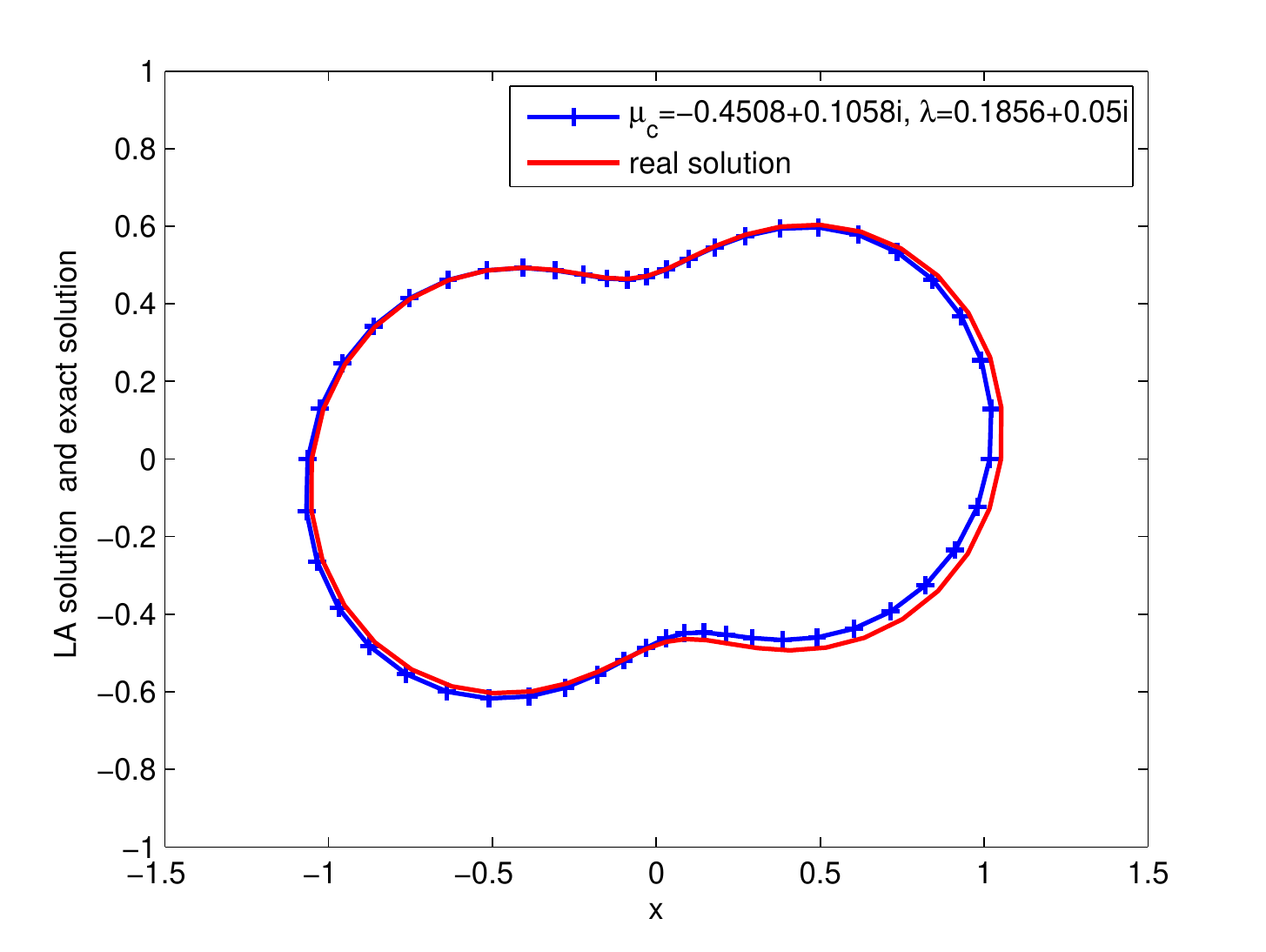}}\\
  \subfigure[$\delta=0.001$]{
    \label{fig:subfig:a}
  \includegraphics[width=2.3in, height=2.0in]{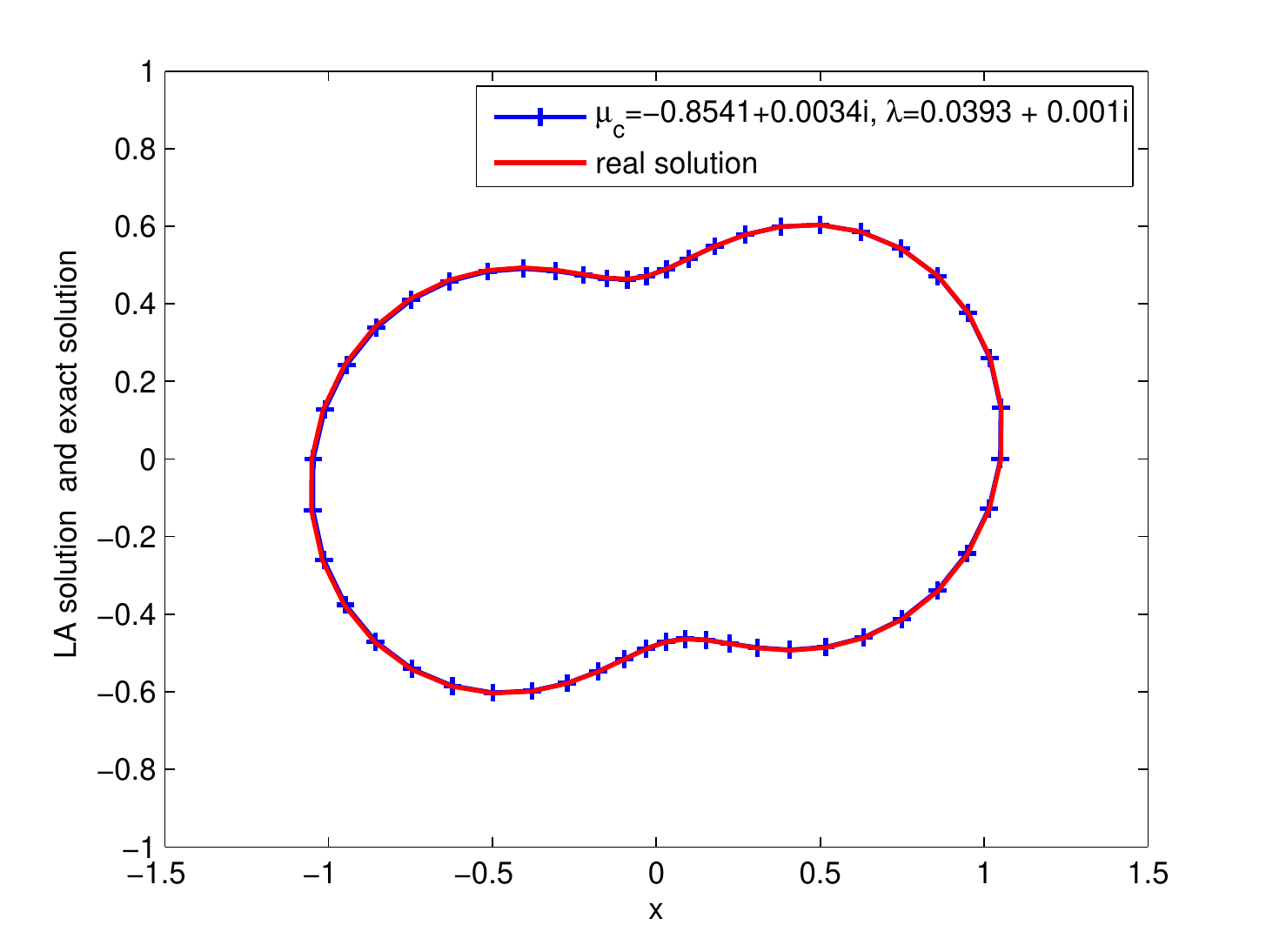}}
  \subfigure[$\delta=0.001$]{
    \label{fig:subfig:b}
  \includegraphics[width=2.3in, height=2.0in]{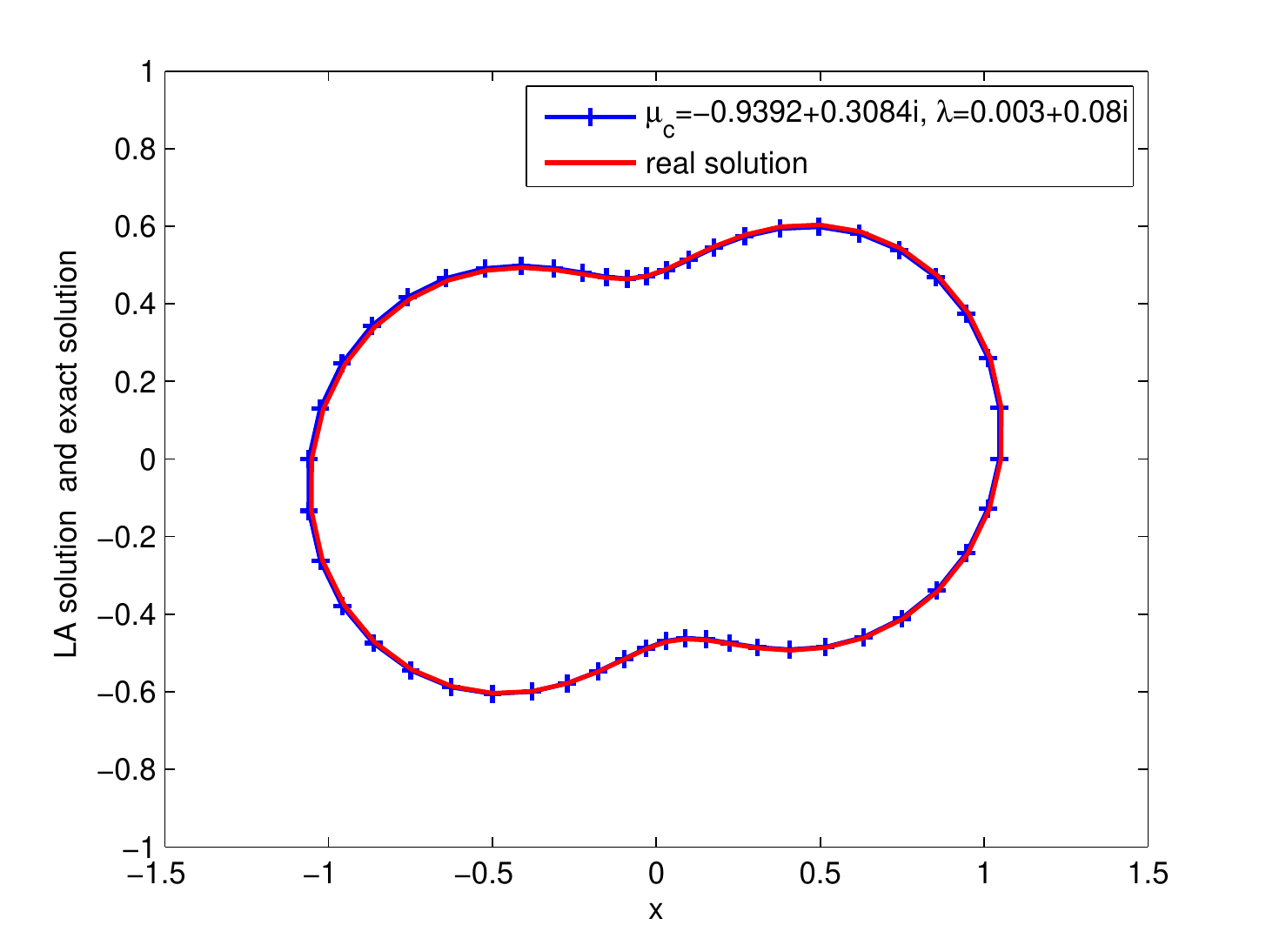}}
  \caption{Reconstruction of the shape in Example \ref{exm2} with 0.001 noise data for different $\mu_c$ or $\lambda$.  }
  \label{examp2-lamt}
\end{figure}
\begin{table}[htbp]
\centering
\caption{ Numerical results of Example \ref{exm2} for $e_\gamma$ associated with different $\lambda$ and noise level $\delta$ }
\begin{tabular}{c|c|c|c}
\Xhline{1pt}
 $\lambda$ &$ \delta=0.001$ &$\delta=0.005$ &$\delta=0.01$\\
  \hline
  $-0.75$  &$0.3179$ &$0.3610$ &$0.4006$  \\
 \hline
 $0.0393+10^{-3}\mathrm{i}$ &$0.0162$&$0.0588$& $0.073$ \\
  \hline
\Xhline{1pt}
\end{tabular}
\label{tab1}
\end{table}
\begin{figure}[htbp]
\centering
\subfigure[]{
    \label{fig:subfig:a}
  \includegraphics[width=2.3in, height=2.0in]{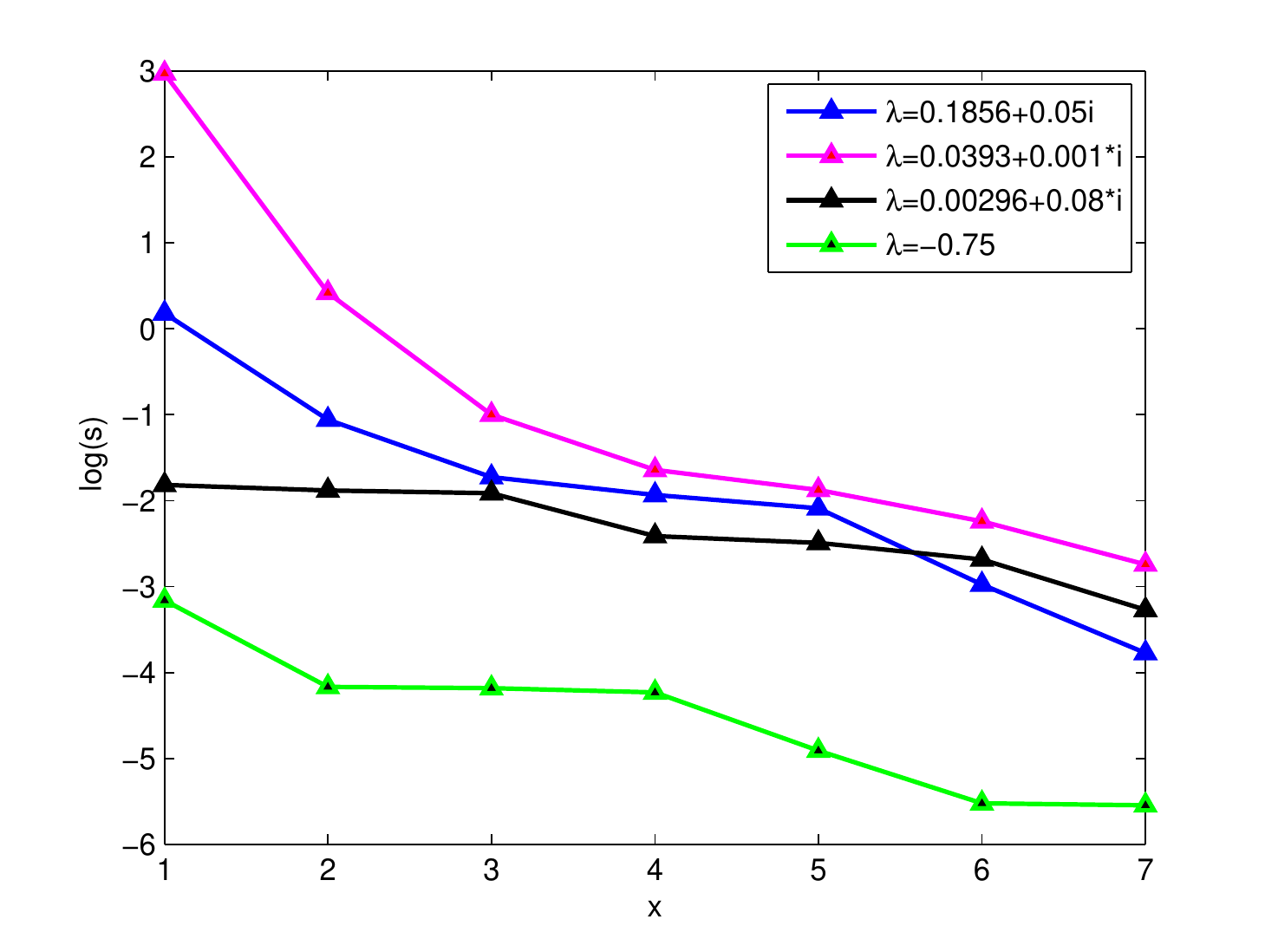}}
  \caption{The distribution of the singular values associated to different values of $\lambda$ for Example \ref{exm2}.}
  \label{examp2-s}
\end{figure}
\begin{table}[htbp]
\centering
\caption{ Numerical results of Example \ref{exm2} for  singular values and condition number of $G$ associated with different $\lambda$. }
\begin{tabular}{c|c|c|c}
\Xhline{1pt}
  $\lambda$ &$s_{max}$ &$s_{min}$ &$cond$\\
  \hline
 $-0.75$ &$0.042$ &$0.0014$ &$28$\\
  \hline
  $0.1856+10^{-1}\mathrm{i}
  $  &$0.289$ & $0.0203$ &$14.25$  \\
 \hline
 $0.1856+10^{-2}\mathrm{i}$ &$30.66$&$0.025$& $1.2\times 10^{3}$ \\
  \hline
  $0.1856+10^{-3}\mathrm{i}$ &$2.9\times 10^3$& $0.070$& $4.2\times 10^{4}$ \\
  \hline
  $0.1856+10^{-4}\mathrm{i}$ &$1.85\times 10^5$& $0.500$& $3.69\times 10^{5}$ \\
  \hline
\Xhline{1pt}
\end{tabular}
\label{tab2}
\end{table}

\begin{exm}\label{exm3}
In this example, we consider that the inclusion is a peach, and the polar radius of the peanut is parameterized by
\[
q(t)=18/25-1/5\sin(t)-3/35\cos(3t),\ \ \ 0\leq t\leq2\pi,
\]
where we take $\alpha_0$=1000, $\beta_0=0.01$ and $\eta_0=1000$ as the initial values in the iterative algorithm \ref{algorithm-pgddf}.
\end{exm}
\begin{figure}[htbp]
\centering
\subfigure[$\mu_c=-0.6618+0.1390\mathrm{i}$]{
    \label{fig:subfig:a}
  \includegraphics[width=2.3in, height=2.0in]{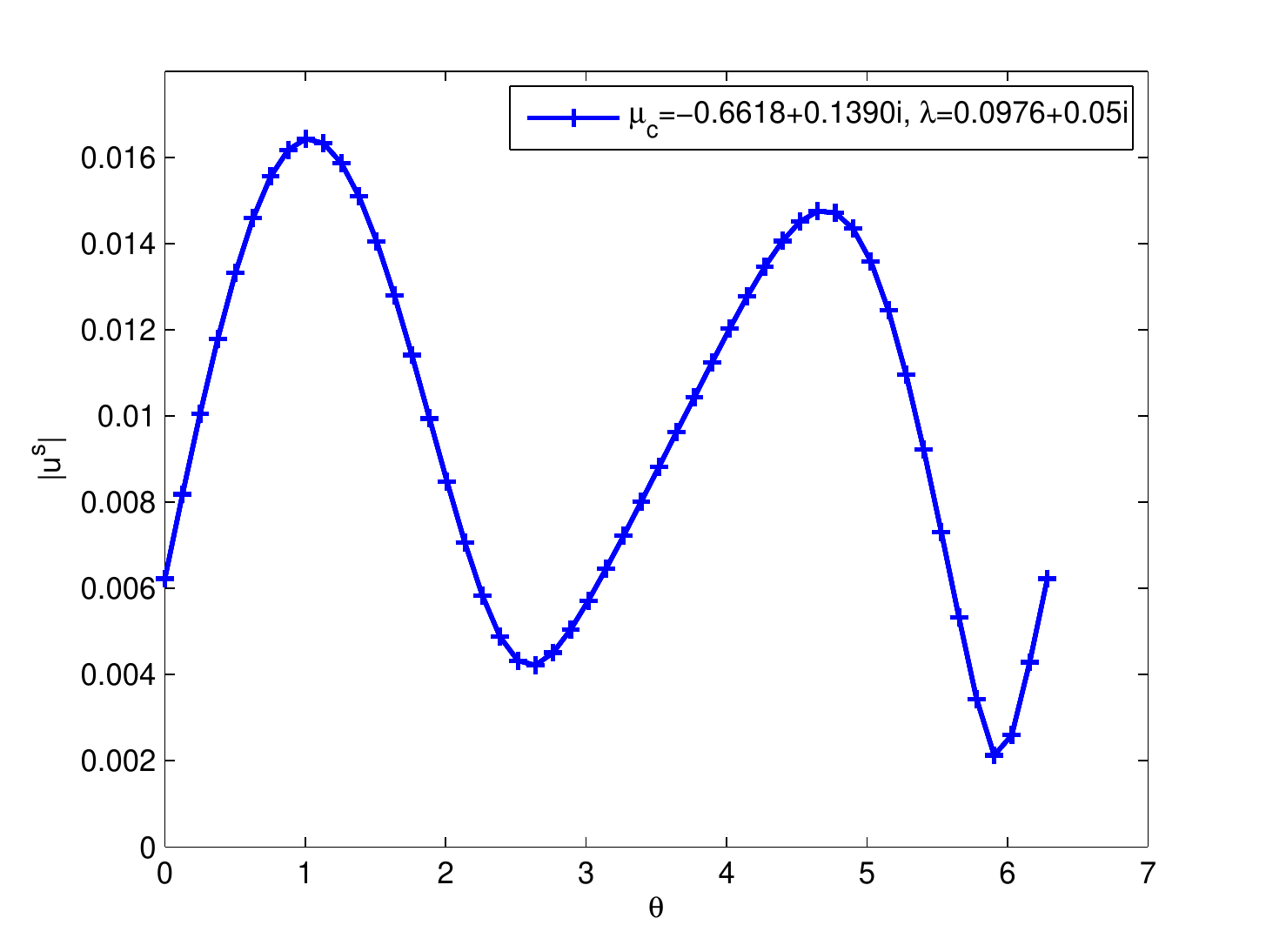}}
  \subfigure[$\mu_c=5$]{
    \label{fig:subfig:b}
  \includegraphics[width=2.3in, height=2.0in]{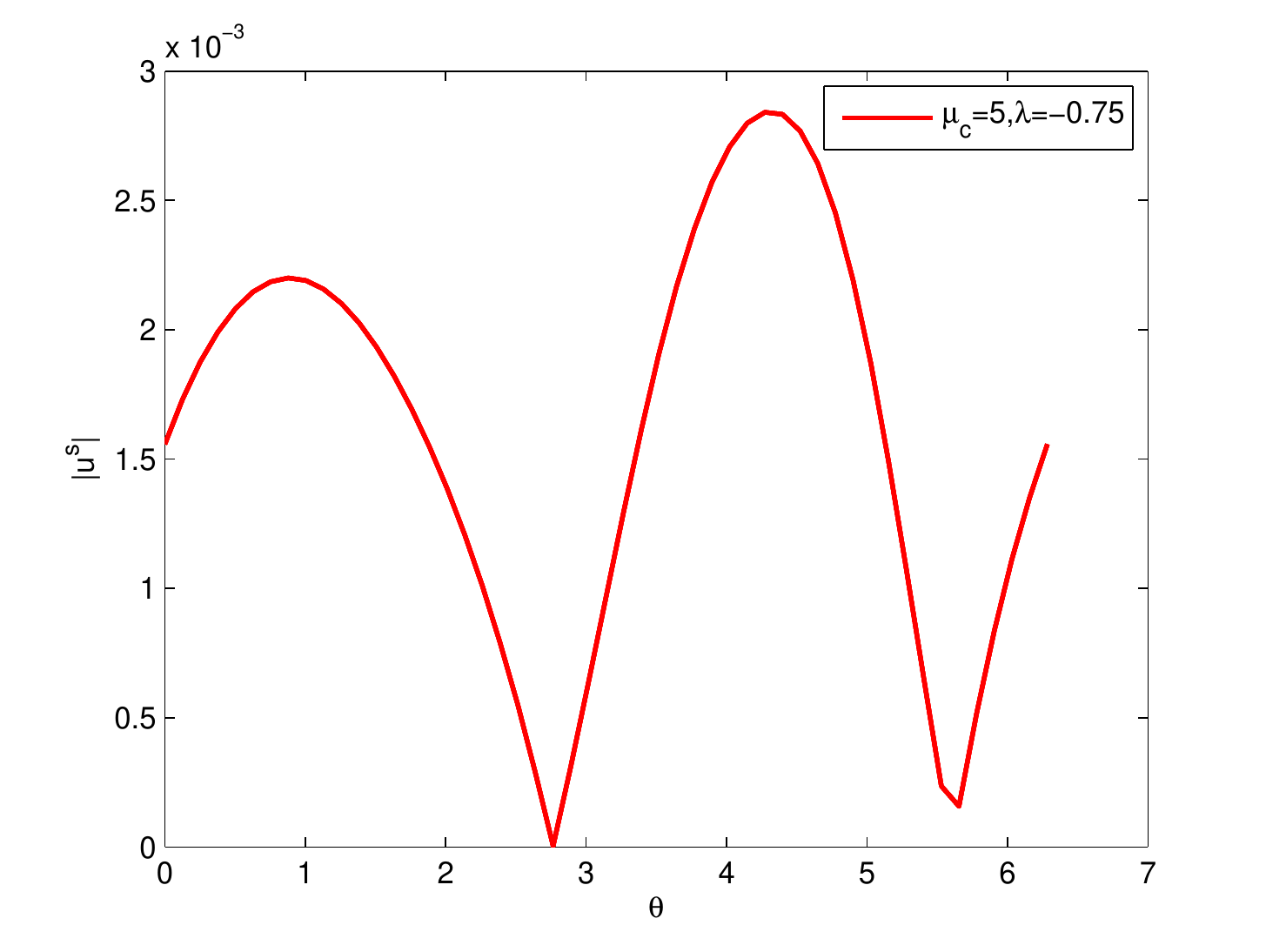}}
  \caption{The $|u^s|$ on a circle of radius 1.5 in Example \ref{exm3} for different $\mu_c$. }
  \label{examp3-u}
\end{figure}

In Example \ref{exm3}, we consider reconstructing a more challenging pear-shaped inclusion and we can reach similar conclusions to the previous two Examples. When plasmon resonance occurs, we get a significant enhancement of the scattering field in Fig.~\ref{examp3-u} and $\widetilde{q}_{N_e}$ coincides well with the exact solution, see Fig.~\ref{examp3-001}. Next, we give the $\mu_c=-0.7372+0.1521\mathrm{i}, \lambda=0.0712+0.05\mathrm{i}$ confidence intervals in Fig.~\ref{confidence}, where the blue region represent the corresponding 95\% confidence regions. In Table \ref{tab4}, the variation of the norm $\|SSF(\partial D)\|_{L^2(\partial \Omega)}$ with different $\zeta$ is presented, and it is obtained that as the scale factor $\zeta$ grows, $|\partial D|^{\frac{1}{2}}$ increases, the curvature of $\|\tau\|_{C(\partial D)}$ becomes smaller, and the norm $\|SSF(\partial D)\|_{L^2(\partial \Omega)}$ grows larger.

\begin{figure}[htbp]
\centering
\subfigure[$\delta=0.001$]{
    \label{fig:subfig:a}
  \includegraphics[width=2.3in, height=2.0in]{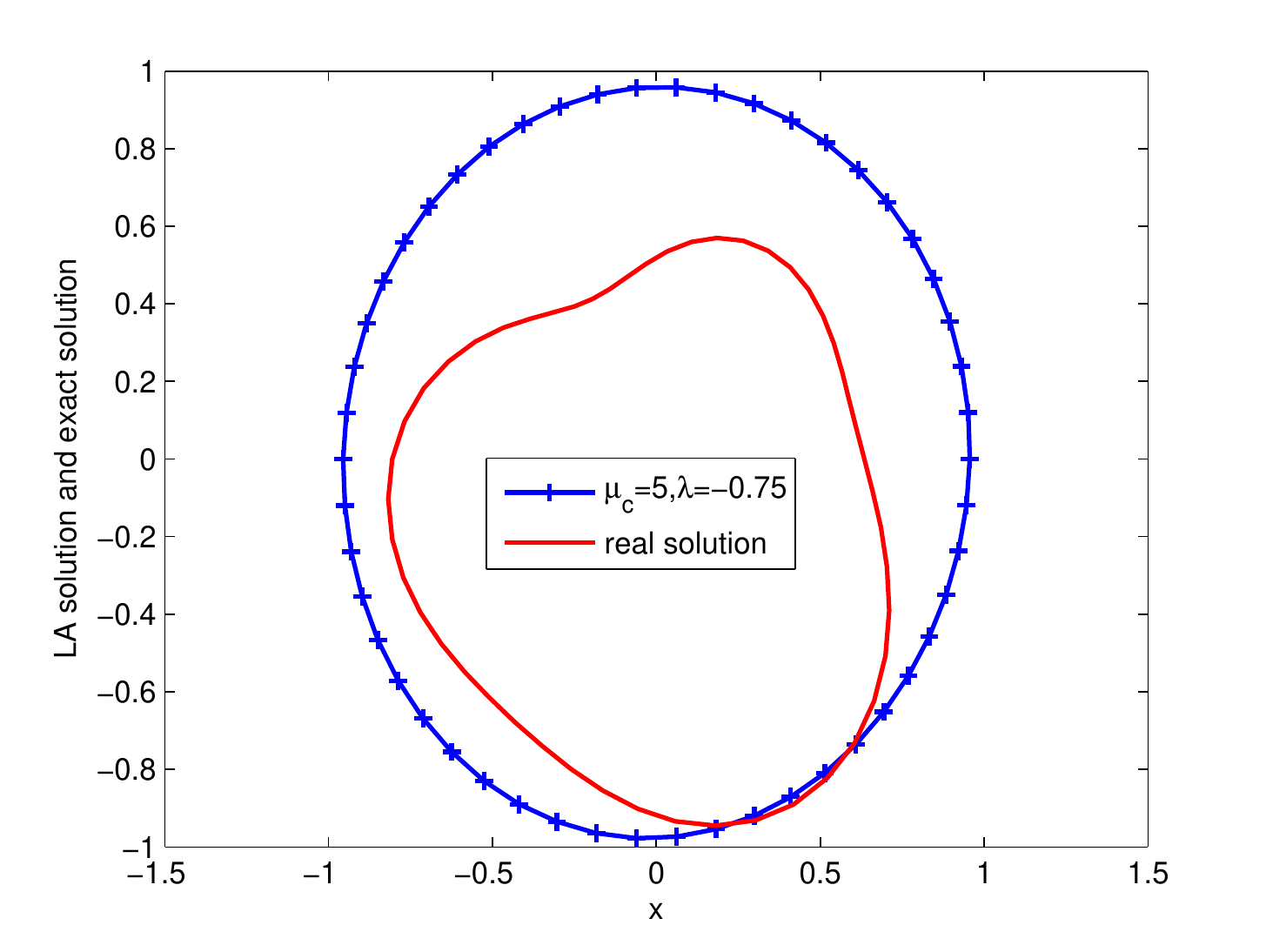}}
  \subfigure[$\delta=0.001$]{
    \label{fig:subfig:b}
  \includegraphics[width=2.3in, height=2.0in]{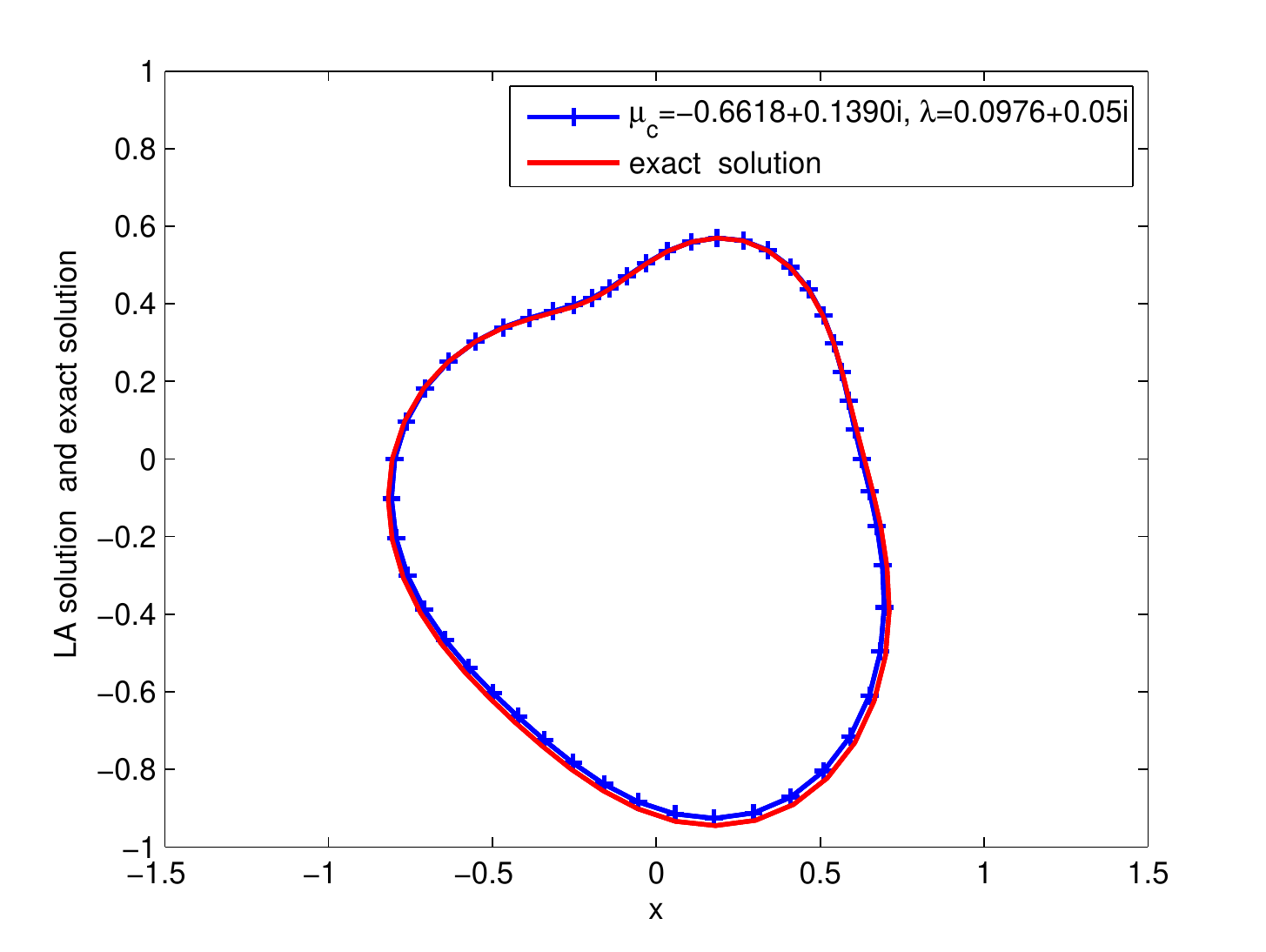}}\\
  \caption{ Shape reconstruction in Example \ref{exm3} with 0.001 noise data for different $\mu_c$ or $\lambda$.}
  \label{examp3-001}
\end{figure}

\begin{figure}[htbp]
  \centering
  \includegraphics[width=2.3in, height=2.0in]{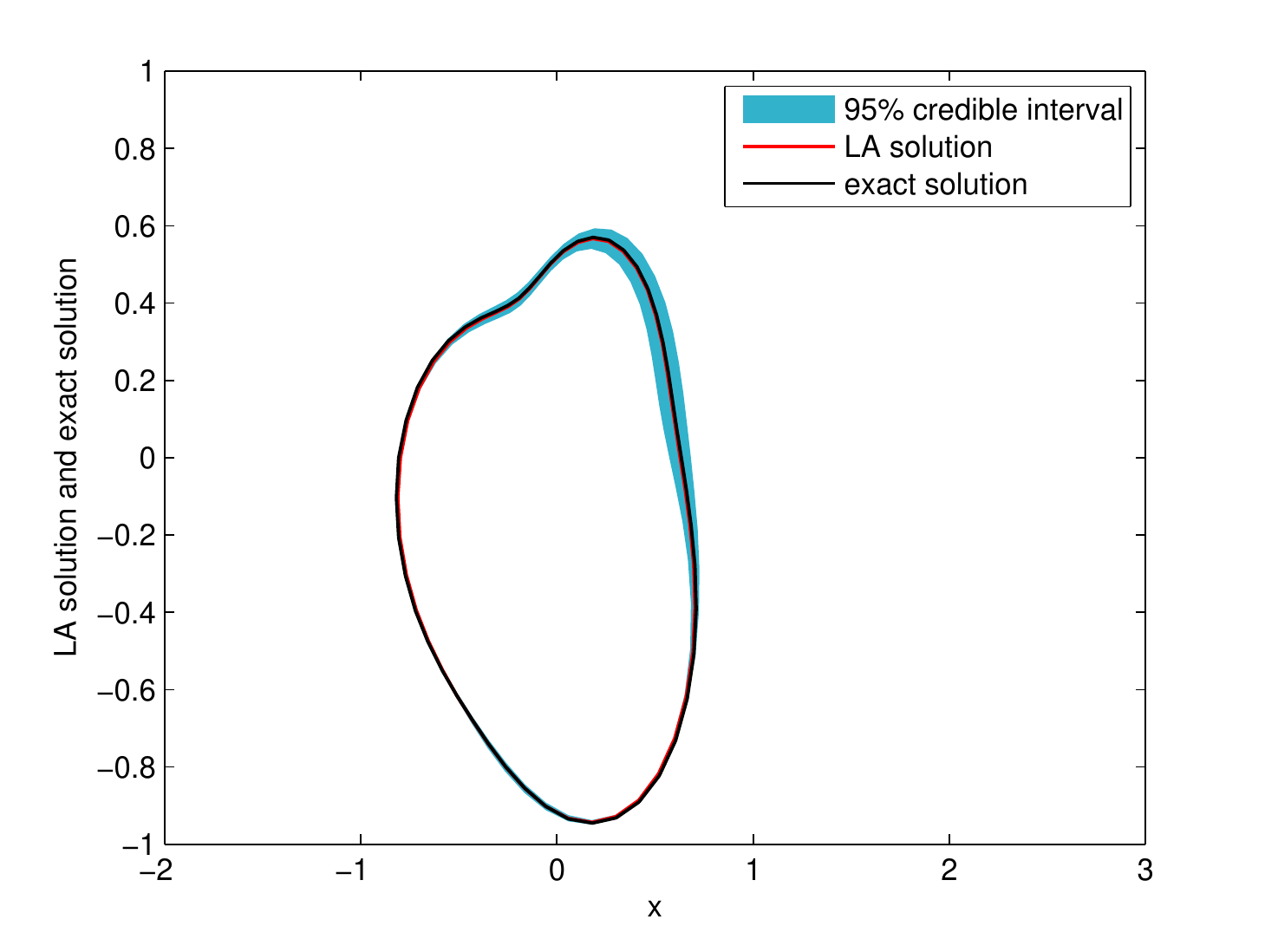}\\
  \caption{The numerical results for Example \ref{exm3} with 0.001 noise data and 95\% confidence interval.}
  \label{confidence}
\end{figure}

\begin{table}[htbp]
\centering
\caption{ The $\|SSF(\partial D)\|_{L^2(\partial \Omega)}$ of Example \ref{exm3} for  different $\zeta$. }
\begin{tabular}{c|c|c|c}
\Xhline{1pt}
  $\zeta$ &$|\partial D|^{\frac{1}{2}}$ &$\|\tau\|_{C(\partial D)}$ &$\|SSF(\partial D)\|_{L^2(\partial \Omega)}$\\
  \hline
 $0.50$ &$1.5628$ &$6.981$ &$0.147$\\
  \hline
  $0.67$  &$1.8044$ & $5.236$ &$0.263$  \\
 \hline
 $1$ &$2.1909$&$3.491$& $0.608$ \\
  \hline
  $1.1$ &$2.3180$& $3.173$& $0.743$ \\
  \hline
  $1.2$ &$2.4210$& $2.909$& $0.895$ \\
  \hline
\Xhline{1pt}
\end{tabular}
\label{tab4}
\end{table}

\section{Conclusions}

In this paper, we study the inverse problem of reconstructing the shape of an anomalous nano-sized inclusion by using the scattering field measurement data in the quasi-static regime. We propose a method utilizing plasmon resonance techniques, to improve the sensitivity of the reconstruction and reduce the ill-posedness of the inverse problem.
First, based on the asymptotic expansion of the layer potential operator, we investigate the asymptotic expansion of the perturbation domain and derive the spectral expansion of the shape sensitivity functional. The results show that the shape sensitivity functional increases rapidly or even blows up with the occurrence of plasmon resonance. Then, to overcome the ill-posedness of the inverse problem, we combine the Tikhonov regularization method with the Laplace approximation to solve the inverse problem based on the hierarchical Bayesian model. This method enables the flexible selection of regularization parameters and obtains statistical information about the solution. Finally, the effectiveness and feasibility of the proposed method are verified by three different examples.

\section*{Acknowledgments}
The work of M Ding and G Zheng were supported by the NSF of China (12271151) and NSF of Hunan (2020JJ4166). The work of H. Liu is supported by the Hong Kong RGC General Research Funds (projects 11311122, 11300821 and 12301420),  the NSFC/RGC Joint Research Fund (project N\_CityU101/21), and the ANR/RGC Joint Research Grant, A\_CityU203/19.

\end{document}